\def\imghome{./figures}
\def\01{\{0,1\}}
\newcommand{\Prob}{{\mathbf{Pr}}}
\newcommand{\tinyspace}{\mspace{1mu}}
\newcommand{\norm}[1]{\left\lVert\tinyspace#1\tinyspace\right\rVert}
\newcommand{\abs}[1]{\left\lvert\tinyspace #1 \tinyspace\right\rvert}
\newcommand{\tr}{\operatorname{tr}}
\newcommand{\trnorm}[1]{\norm{#1}_{\tr}}
\def\({\left(}
\def\){\right)}
\def\complex{\mathbb{C}}
\def\real{\mathbb{R}}
\def\<{\langle}
\def\>{\rangle}
\def\E{\mathcal{E}}
\def\R{\mathcal{R}}
\def\1{\mathbbm{1}}
\def\EXP{\mathbb{E}}
\theoremstyle{plain}
\newtheorem{theorem}{Theorem}[section]
\newtheorem{lemma}[theorem]{Lemma}
\newtheorem{corollary}[theorem]{Corollary}
\newtheorem{proposition}[theorem]{Proposition}
\theoremstyle{definition}
\newtheorem{definition}[theorem]{Definition}
\theoremstyle{remark}
\newtheorem{remark}{Remark}
\definecolor{tab-blue}{HTML}{1F77B4}
\definecolor{tab-green}{HTML}{2ca02c}
\definecolor{tab-orange}{HTML}{ff7f0e}
\pgfplotsset{compat=newest}
\newcommand{\mlvec}[1]{\boldsymbol{\mathbf{#1}}}
\newcommand{\grouphomo}{{\Pi}}
\newcommand{\subspacecolumn}{{\mlvec{Q}}}
\newcommand{\deff}{{d_{\mathsf{eff}}}}
\newcommand{\kappaeff}{{\kappa_{\mathsf{eff}}}}
\newcommand{\Heff}{{\mlvec{H}_{\mathsf{eff}}}}
\newcommand{\noise}[1]{{\varepsilon_{#1}}}
\newcommand{\noisevec}{{\mlvec\varepsilon}}
\newcommand{\fronorm}[1]{\norm{#1}_{{F}}}
\newcommand{\opnorm}[1]{\norm{#1}_{\mathsf{op}}}
\newcommand{\alt}[1]{{\mathsf{#1alt}}}
\newcommand{\newketbra}[2]{{|{#1}\rangle\!\langle{#2}|}}
\newcommand{\heafullname}{{Hardware-efficient ansatz}}
\newcommand{\generatorH}[1]{{\mlvec{H}^{(#1)}}}
\def\HH{\mlvec{H}}
\def\YY{\mlvec{Y}}
\definecolor{tab-blue}{HTML}{1F77B4}
\definecolor{tab-green}{HTML}{2ca02c}
\definecolor{tab-orange}{HTML}{ff7f0e}
\numberwithin{equation}{section}
\begin{document}

\title{A Convergence Theory for Over-parameterized Variational Quantum Eigensolvers}

\author[1] {Xuchen You\thanks{Equal contribution.}}
\author[1] {Shouvanik Chakrabarti$^{*}$}
\author[1] {Xiaodi Wu} 
\affil[1]{Department of Computer Science, Joint Center for Quantum Information and Computer Science, University of Maryland, College Park, USA}

\maketitle

\begin{abstract}
The Variational Quantum Eigensolver (VQE) is a promising candidate for
quantum applications on near-term Noisy Intermediate-Scale Quantum (NISQ) computers.
Despite a lot of empirical studies and recent progress in theoretical understanding of VQE's optimization landscape,
the convergence for optimizing VQE is far less understood. 
We provide the first rigorous analysis of the convergence of VQEs in the
over-parameterization regime.
By connecting the training dynamics with the
Riemannian Gradient Flow on the unit-sphere, 
we establish a threshold on the sufficient number of parameters
for efficient convergence, which depends polynomially on the system dimension and the spectral ratio, a property of the
problem Hamiltonian, and could be resilient to gradient noise to some extent.
We further illustrate that this over-parameterization threshold could be vastly reduced for specific VQE instances by establishing an ansatz-dependent threshold paralleling our main result.
We showcase that our ansatz-dependent threshold could serve as a proxy of the trainability of different VQE ansatzes without performing empirical experiments, which hence leads to a principled way of evaluating ansatz design.  
Finally, we conclude with a comprehensive empirical study that supports our theoretical findings. 
\end{abstract}


\section{Introduction}
\label{sec:intro}
\begin{figure}
  \centering
  {
  \begin{minipage}{0.5\linewidth}
    \includegraphics[width=1.0\linewidth]{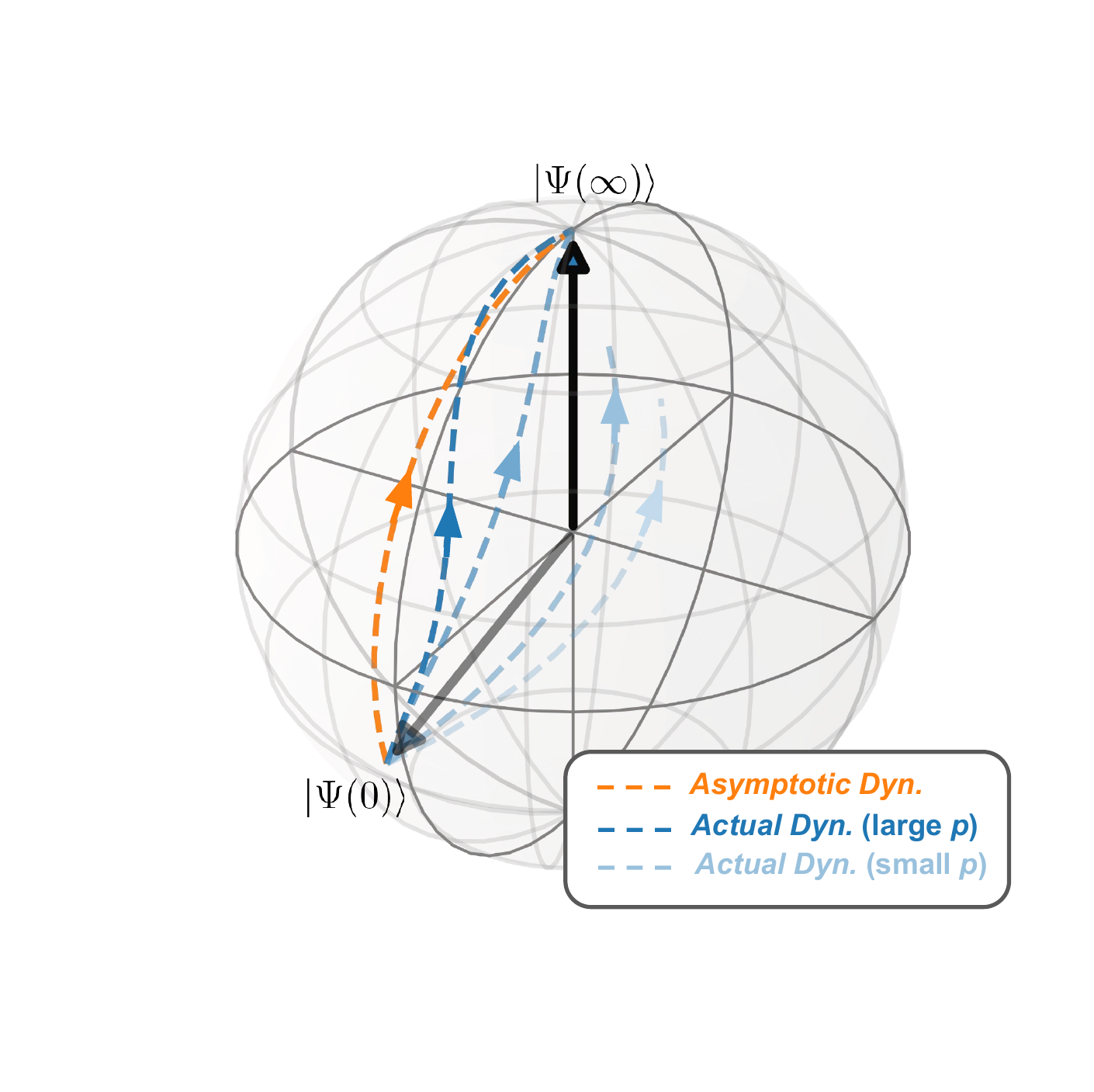}
  \end{minipage}
  \begin{minipage}{0.47\linewidth}
  \caption{Convergence of Over-parameterized VQE. As the number of parameters $p$
    increases, the {\color{tab-blue} actual dynamics} of the output state
    starting from $\ket{\Psi(0)}$ approaches the {\color{tab-orange} asymptotic
      dynamics} (Lemma~\ref{lm:vqe-concentration-init} and
    \ref{lm:vqe-concentration-time}). The convergence of over-parameterized VQE
    follows by the fact that the {\color{tab-orange} asymptotic
      dynamics} converges efficiently to the target output state
    $\ket{\Psi(\infty)}$ (Lemma~\ref{lm:vqe-perturbation}).}
  \label{fig:mainscheme}
  \end{minipage}
  }
\end{figure}

With the recent establishment of quantum supremacy ~\cite{google-supremacy,
  Zhong1460}, there has been a significant interest in demonstrating
applications on Noisy Intermediate-Scale Quantum (NISQ)
computers~\cite{Preskill2018NISQ}. Variational Quantum
Algorithms~(VQAs)~(e.g.,~\cite{NC-VQE}) have become one of the main candidates
for demonstrating such applications. These algorithms use quantum systems
controlled by classical parameters, as parameterized models that are then
classically optimized to perform tasks in machine learning or quantum chemistry.
The parameterized systems used are the so-called \emph{parameterized quantum
  circuits} (often called \emph{quantum ansatz} in Physics.) Commonly studied
VQAs include Quantum Neural Networks~(QNNs)~\cite{farhi2018classification} used
to perform classification tasks in machine learning, and Variational Quantum
Eigensolvers~(VQEs)~that are used to find the ground state of a given
Hamiltonian from physics or chemistry~\cite{NC-VQE}. Variational Quantum
Eigensolvers have gained particular interest: with applications to quantum
chemistry~\cite{google-VQE}, or to combinatorial
optimization~\cite{farhi2014quantum,liu2021variational}. The quantum systems
used in VQEs are challenging to classically simulate so there is a promising
possibility of quantum computational advantages for these important
applications. Furthermore, variational algorithms are particularly suited to the
smaller scale and noisy machines: all the control of the model and the optimization routine is deferred to a classical system and the quantum algorithm only needs to have the capacity to implement the underlying parameterized system, which is often substantially easier than implementing precise digital controls. Optimization algorithms also often have built-in robustness to errors in the objective function which gives the system a degree of natural noise resilience.

Specifically, a VQE with a problem Hermitian matrix $\mlvec{M}$ and an input
state $\ket{\Phi}$ seeks to optimize a parameterized unitary
$\mlvec{U}(\mlvec{\theta})$ to generate an output state
$|\Psi(\mlvec{\theta})\> = \mlvec{U}(\mlvec\theta)\ket{\Phi}$ that well
approximates the ground state of $\mlvec{M}$. There have been many recent works
on the optimization, design, and applications of VQEs (see surveys
in~\cite{tilly2021variational, Cerezo2020}). Despite their potential, there are some
significant challenges in the deployment of VQEs: specifically the optimization
problems involved are highly non-convex and may take exponentially long time (in
the number of optimization parameters) to find a satisfactory global minimum.
The situation in practice may be even worse: practical VQEs are optimized using
gradient-based algorithms, and in general there is no guarantee that such
algorithms converge to a global minimum \emph{at all}, the algorithm may get
trapped in suboptimal local minima or saddle points. The landscape of VQEs also
suffer from ``vanishing gradients'' or \emph{barren
  plateaus}~\cite{mcclean2018barren} which could make convergence very slow even
if the algorithm is never trapped. This makes the design and deployment of VQEs
difficult as it is very hard to know beforehand whether a given system can even
be trained to low error. The difficulty of simulating such systems also limits
the scope and scale of empirical studies, and when small-scale promising results
are demonstrated it is hard to translate the success of these models into clear
design choices. It should be noted that many systems that demonstrate successful
training on a small scale use a parameterized design that is specific to the
problem being solved~\cite{crooks2018performance,mbeng2019quantum}.

Similar difficulty in training is observed in classical deep neural networks:
the landscape of training neural networks are non-convex, can have suboptimal
local minima/stationary points, and may suffer from
the ``vanishing gradients''. Different from quantum algorithms, empirical
studies of classical neural networks can be carried out on a large scale, and these studies
revealed that, despite the possible pitfalls, the loss functions for deep neural networks can often be
efficiently minimized (in fact the training error
often falls \emph{exponentially} with the training time)~\cite{livni2014on,
  arora2018stronger}. An explanation for this phenomenon has been proposed based
on the following observation: neural networks can be efficiently trained in the
highly \emph{over-parameterized} regime, where the number of trainable
parameters is larger than the dimension of the input space as well as the number
of training examples used. 
This observation has been theoretically established starting
with the analysis of wide neural
networks~\cite{jacot2018neural,chizat2018lazy,allenzhu2019convergence} and has been later
established for a variety of architectures (e.g. \cite{arora2019exact}).

The formal similarities between variational quantum algorithms and deep neural
networks raise the important question of whether convergence results can also be
established for an algorithm such as VQE in the over-parameterized regime. VQE
do not have inputs or data samples but their complexity increases with
quantities such as the number of qubits and parameterized gates in the system,
and such quantities can be expected to control the threshold of
over-parameterization. Furthermore, if a sufficient threshold for
over-parameterization can be found to ensure convergence, this threshold may be
used to characterize and compare the trainability of VQE algorithms with
different designs.

\subsection{Contributions}
In this paper, we construct a theory, for the first time,  of the convergence of VQE in the over-parameterization regime. We use this theory to investigate and justify the properties of various VQE instances. We make the following primary claims (to be substantiated later).
\begin{enumerate}
  \item \emph{Gradient optimization algorithms on sufficiently
        over-parameterized Variational Quantum Eigensolvers converge to small
        training error, and the error decays \emph{exponentially} in terms of
        the training time.} We rigorously show that each VQE has an
        over-parameterization threshold, and if the number of trainable
        parameters exceeds this threshold, the training of the system exhibits
        exponential convergence to small training error. We establish these
        results in Section~\ref{sec:vqe-converge}, with noise-robust versions
        included in Section~\ref{sec:robustness}.
  \item \emph{The degree of over-parameterization required to ensure convergence in a Variational Quantum Eigensolver is a useful proxy for its trainability.} We empirically demonstrate that systems with a lower threshold for over-parameterization indeed require a fewer number of parameters in practice in order to ensure their convergence. Specifically, we show that 1) Sufficient over-parameterization is enough to ensure convergence in practice (Section~\ref{sec:exp_confirm}) 2) Varying quantities to adjust the theoretical over-parameterization threshold leads to a corresponding change in the threshold observed in practice (Section~\ref{sec:exp_threshold}).
  \item \emph{Variational Quantum Eigensolvers with problem-specific ansatz
        design require substantially fewer parameters to ensure convergence
        compared to problem-independent designs.} In Section~\ref{sec:subgroup},
        we establish ansatz-dependent versions of our convergence results,
        leading to \emph{ansatz-dependent} quantities that govern the threshold.
        We show that for problem-dependent ansatz choice, these quantities can
        be significantly smaller than those for general-purposed ansatz
        (Section~\ref{sec:subgroup} and Section~\ref{sec:exp_threshold}).
\end{enumerate}

\subsection{Convergence of over-parameterized VQE}
To justify our first claim, we give the first known sufficient conditions for
the convergence of training VQE. We have the following main result (formally
stated as Theorem~\ref{thm:vqe-convergence-full}).
\begin{theorem}[Convergence Theorem (Informal)]
  \label{thm:vqe-convergence-informal}
  For VQE with problem Hamiltonian $\mlvec{M} \in \complex^{d \times d}$,
  it is sufficient to have number of parameters of order
  $\mathsf{poly}(d, \kappa)$ to ensure that with high probability the
  training of VQE converges efficiently to the ground
  state of $\mlvec{M}$, where $d$ is dimension of the VQE problem and
  $\kappa = \frac{\lambda_{d} - \lambda_{1}}{\lambda_{2} -\lambda_{1}}$
  is dependent on the eigenvalues $\lambda_{1}< \lambda_{2}\leq \cdots\leq\lambda_{d}$ of $\mlvec{M}$.
\end{theorem}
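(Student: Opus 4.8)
The plan is to follow the three-stage scheme of Figure~\ref{fig:mainscheme}: reduce the parameter-space gradient dynamics to an induced flow on the output state; show that once the number of parameters $p$ crosses a $\mathsf{poly}(d,\kappa)$ threshold this induced flow concentrates around an idealized \emph{asymptotic dynamics}; and prove that the asymptotic dynamics is the Riemannian gradient flow of the Rayleigh quotient on the unit sphere, which contracts exponentially to the ground state of $\mlvec{M}$. Chaining the three stages with a triangle inequality then gives convergence of the actual training.

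First I would differentiate $L(\Ttheta) = \langle \Psi(\Ttheta)|\mlvec{M}|\Psi(\Ttheta)\rangle$ and push $\dot{\Ttheta} = -\nabla_{\Ttheta} L$ through the chain rule to get $\tfrac{d}{dt}\ket{\Psi}$. The induced state dynamics is governed by the (suitably symmetrized) quantum tangent kernel $\mlvec{K}(\Ttheta) = \sum_j \ketbra{\partial_j \Psi}$, and the asymptotic dynamics replaces $\mlvec{K}$ by its large-$p$ limit, which I expect to be proportional to the projector onto the tangent space of the sphere at $\ket{\Psi}$. Under this replacement the state obeys the projected gradient flow $\tfrac{d}{dt}\ket{\Psi} = -(\mlvec{M} - \langle \Psi|\mlvec{M}|\Psi\rangle)\ket{\Psi}$, and Lemma~\ref{lm:vqe-perturbation} analyzes it in the eigenbasis of $\mlvec{M}$: tracking the overlaps $c_k = \langle v_k|\Psi\rangle$, the ground component grows while each other decays at rate $\lambda_k - \lambda_1$, so the slowest relevant rate is the spectral gap $\lambda_2 - \lambda_1$ and, starting from the $\Omega(1/d)$ overlap typical of a random initialization, the flow reaches error $\epsilon$ in time $O\!\big(\tfrac{\log(d/\epsilon)}{\lambda_2-\lambda_1}\big)$.

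The concentration stage uses the random structure of the ansatz. For Lemma~\ref{lm:vqe-concentration-init} I would compute $\EXP[\mlvec{K}]$ at initialization with Weingarten calculus ($\Wg$), show it equals a scalar multiple of the tangent-space projector, and bound the variance to get $\opnorm{\mlvec{K} - \EXP[\mlvec{K}]} \le \delta$ with high probability once $p$ is polynomially large. Lemma~\ref{lm:vqe-concentration-time} then propagates this along the trajectory: a Gr\"onwall-type argument bounds the parameter drift over the whole training window and shows the induced dynamics stays $\delta$-close to its asymptotic form throughout. Here is where $\kappa$ is born: the gradients (hence the drift per unit time) scale with the operator scale $\lambda_d - \lambda_1$, while the horizon from Lemma~\ref{lm:vqe-perturbation} scales with $1/(\lambda_2-\lambda_1)$, so the total drift---and therefore the parameter count needed to keep $\mlvec{K}$ within $\delta$ of its limit---scales with $\kappa = (\lambda_d-\lambda_1)/(\lambda_2-\lambda_1)$.

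I expect the time-uniform control of Lemma~\ref{lm:vqe-concentration-time} to be the main obstacle: the initialization estimate must be shown to survive the full $O(\log d/(\lambda_2-\lambda_1))$ horizon, and it is exactly the interaction between this horizon and the drift bound that pins down the $\mathsf{poly}(d,\kappa)$ threshold. The delicate point is to keep the discrepancy between the actual and asymptotic trajectories small enough that the exponential contraction of Lemma~\ref{lm:vqe-perturbation} absorbs it---giving a threshold that does not blow up with the horizon---rather than letting the error accumulate. Once this is in place, the final triangle inequality between actual dynamics, asymptotic dynamics, and the ground state is routine.
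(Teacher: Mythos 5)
Your three-stage plan reproduces the paper's architecture faithfully: the induced state dynamics governed by a kernel-like quantity (the paper's parameterized projection operator $\mlvec{Y}$ in Lemma~\ref{lm:vqe-dynamics}), the asymptotic dynamics being Riemannian gradient flow on the sphere, robustness of its convergence to perturbations (Lemma~\ref{lm:vqe-perturbation}), concentration at initialization via Haar integration plus matrix concentration (Lemma~\ref{lm:vqe-concentration-init}), an $\Omega(1/d)$ initial overlap from random initialization, and even the correct bookkeeping for where $\kappa$ enters (drift $\propto \lambda_d-\lambda_1$, horizon $\propto 1/(\lambda_2-\lambda_1)$, tolerance $\propto \frac{\lambda_2-\lambda_1}{\lambda_d-\lambda_1}\cdot\frac{1}{d}$). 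This is essentially the paper's route.

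However, there is a genuine gap at exactly the step you flag as the main obstacle: the time-uniform concentration of Lemma~\ref{lm:vqe-concentration-time}. The ``Gr\"onwall-type argument'' you propose --- bound the parameter drift over the training window, then invoke smoothness of the kernel in $\mlvec{\theta}$ to conclude the dynamics stays $\delta$-close to its asymptotic form --- does not work, and the paper explicitly identifies this failure as the reason it departs from the classical analyses. Concretely, Lemma~\ref{lm:vqe-slow-theta-new} gives $\|\mlvec{\theta}(t)-\mlvec{\theta}(0)\|_{\infty} = O(t/p)$, but $\mlvec{Y}$ is an average of $p$ terms each depending on up to $p$ parameters, so the deterministic Lipschitz estimate along the trajectory gives
\begin{align}
\opnorm{\mlvec{Y}(\mlvec{\theta}(t)) - \mlvec{Y}(\mlvec{\theta}(0))}
\;\lesssim\; \frac{\opnorm{\mlvec{H}}^{3}}{Z(\mlvec{H},d)}\cdot p \cdot \|\mlvec{\theta}(t)-\mlvec{\theta}(0)\|_{\infty} \;=\; O(t),
\end{align}
with \emph{no decay in $p$}: the $1/p$ per-parameter drift is exactly cancelled by summing over $p$ parameters, so over-parameterization buys nothing through this route and the required tolerance $O\bigl(\frac{1}{d\kappa}\bigr)$ can never be certified. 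The paper's fix is probabilistic cancellation: for each fixed pair $(t_1,t_2)$, McDiarmid's inequality over the $p$ Haar-random layers shows the increment $\opnorm{\mlvec{Y}(\mlvec{\theta}(t_2))-\mlvec{Y}(\mlvec{\theta}(t_1))}$ is sub-gaussian at scale $\|\mlvec{\theta}(t_2)-\mlvec{\theta}(t_1)\|_2/Z = O(|t_2-t_1|/\sqrt{p})$ (Lemma~\ref{lm:vqe-expected-deviation}), and the supremum over the whole interval $[0,T]$ is then controlled by treating $t\mapsto\mlvec{Y}(\mlvec{\theta}(t))$ as a random field and applying Dudley's integral inequality (Lemma~\ref{lm:dudley-subgaussian-matrix}); pointwise-in-time bounds do not suffice because increments along the trajectory are strongly correlated. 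A second, smaller gap: your closing ``triangle inequality'' hides a circular dependency --- the concentration lemma assumes linear convergence (to bound the drift), while the perturbation lemma assumes concentration. The paper closes this loop not with a triangle inequality but with a minimal-failure-time contradiction argument, showing neither condition can be the first to break.
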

To the best of our knowledge this is the first rigorous demonstration of the
trainability of VQE by gradient methods. We further show that such a convergence
still holds with noisy gradients to some extent in
Corollary~\ref{cor:noisy-convergence}.
Our results are inspired by proofs of convergence in over-parameterized deep
neural networks, but the different formalism of VQE leads to significant
conceptual and technical differences.

\paragraph{Overview of techniques} Classical frameworks for analyzing
convergence follow a general scheme: with randomly initialized
parameters, the dynamics for the training approaches a limit as the number of
parameters grows when functions of many parameters are close to their
expectation via the law of large numbers. This limit is termed as the
\emph{asymptotic} dynamics. If the \emph{asymptotic} dynamics can be shown to
converge to small training error, it is established that \emph{infinitely} large
systems also exhibit convergence~\cite{jacot2018neural}. In order to analyze
finite systems that occur in practice, the convergence of the asymptotic
dynamics must tolerate a small amount of noise. It becomes necessary to analyze
the \emph{concentration} of the random variables corresponding to central
quantities governing the dynamics, in order to establish upper bounds on the
number of parameters that suffice to make the deviations of these quantities
from their expectation smaller than the noise tolerance for convergence. This
yields a threshold of over-parameterization that suffices to ensure
convergence~\cite{arora2019exact,allenzhu2019convergence}. We describe the main
elements of our technique and contrast them with the usual treatment in
classical literature (see also Figure~\ref{fig:mainscheme} for the schematic
diagram for our proof).
\begin{itemize}
  \item \textbf{Initialization: }Neural networks are typically initialized by choosing their real parameters from a suitable normal distribution. VQEs are often initialized similarly in practice, however we note that the behavior of the system resembles that with \emph{Haar}-random unitaries. We therefore analyze the system under the following modified parameterization (which we refer to as the partially-trainable ansatz. See also Definition~\ref{def:partially-trainable-ansatz}): given a set of generating Hermitians $\{\generatorH{k}\}_{k=1}^{K}$, we choose the parameterization $\mlvec{U}(\mlvec\theta) = \big(\prod_{l=1}^{p}\mlvec{U}_{l}\exp(-i\theta_{l}\mlvec{H}) \big) \mlvec{U}_{0}$ where each $\mlvec{U}_{l}$ is initialized from the Haar measure on the subgroup of $SU(d)$ generated by the operations $\{\exp{(-i\theta\generatorH{k})} | k \in [K], \theta \in \real\}$, and $\mlvec{H}$ is one of the generating Hermitians (the particular choice can be made without loss of generality). We  will discuss in depth later in Section~\ref{sec:prelim} how the partially-trainable ansatz is connected to the VQE with the usual parameterization when the number of parameters is large. We also establish in Section~\ref{sec:exp_confirm} that the analysis on the partially-trainable ansatz faithfully captures VQEs with the usual parameterization.
  \item \textbf{Asymptotic dynamics: }The next difference between VQEs and neural networks is in the asymptotic dynamics when the number of parameters is large. For neural networks this dynamics is given by kernel training under the \emph{Neural Tangent Kernel}(~\cite{jacot2018neural}). In VQEs we instead find that the asymptotic dynamics is given by \emph{Riemannian Gradient Flow} on the sphere in $d$-dimensions (where $d$ is the Hilbert Space dimension). Specifically, the output state satisfies the differential equation
    \begin{align}
      \frac{d}{dt}\ket{\Psi} = -[\mlvec{M}, \newketbra{\Psi}{\Psi}] \ket{\Psi}. 
    \end{align}
    The convergence of this dynamics \cite{xu2018convergence} has been established previously. Our results require us to establish \emph{robust} versions of these convergence results (Lemma~\ref{lm:vqe-perturbation}) to accommodate deviations from the asymptotic case arising in finite systems. This robust analysis also allows us to show that the convergence of over-parameterized VQE is robust to a certain amount of noise in the estimation of the objective and gradient functions.
  \item \textbf{Central quantity governing concentration to asymptotic
        dynamics:} The concentration of the system dynamics to the
        over-parameterized case can be explained by the lazy evolution of certain
        primary quantities that guide the dynamics. The deviation of these
        quantities from their initialization over the timescales required for
        successful training is shown to be bounded below the robustness
        threshold of the dynamics. In neural networks, this quantity is simply
        the Neural Tangent Kernel (NTK) discussed above~\cite{jacot2018neural}.
        For VQEs we discover a new quantity
        $\mlvec{Y}$~(Equation~\ref{eq:define-Y}) which we call the
        \emph{parameterized projection operator}, that is a Hermitian matrix-valued function of all the parameters. Deviations in $\mlvec{Y}$ during the training dictate the deviations of the actual dynamics from Riemannian Gradient Flow; thus playing the same role as the NTK in classical neural networks.
  \item \textbf{Concentration: } The similarity of the actual dynamics to the asymptotic case is rigorously established by studying the concentration of the distribution of trajectories introduced by the random initialization. The random properties of the Haar distribution are different from the Gaussian processes seen in neural networks and must therefore be analyzed differently. One major difference is that in the classical analysis \cite{allenzhu2019convergence} the deviations throughout training can be bounded using quantities whose concentration need only to be studied at initialization. We found such a technique insufficient. Specifically, analyzing only the gradients and smoothness at individual points does not give a satisfactory bound. We instead model the training of the system as a \emph{random field} and establish results on the correlation between the variables at different points. This allows the maximum deviation to be satisfactorily bounded via Dudley's integral inequality. To the best of our knowledge, this technique is applied for the first time to the convergence of over-parameterized systems.
\end{itemize}
Our guarantees on the convergence of VQE may be seen to contradict the presence of vanishing gradients or barren plateaus in the optimization landscape. We clarify that the presence of vanishing gradients is not enough to prevent fast convergence in many systems, and discuss the relationship of our results to this phenomenon below.

\paragraph{Relation to the barren-plateau phenomenon}
 A phenomenon that is anticipated to present difficulties in the training of variational quantum algorithms is the so-called \emph{barren plateau} phenomenon (first observed by McClean~et.~al~\cite{mcclean2018barren}). The phenomenon shows that the gradients of sufficiently large randomly initialized parameterized quantum systems are likely to be exponentially decaying (with the number of qubits in the system). Specifically, McClean~et.~al consider a $n$-qubit parameterized quantum circuit with an ansatz $\mlvec{U}: \real^{p} \to SU(2^{n})$. When the parameters are randomly initialized to $\mlvec{\theta} \in \real^{p}$, the loss function $L(\mlvec{\theta}) = \<0|\mlvec{U}^{\dagger}(\mlvec{\theta})\mlvec{M}\mlvec{U}(\mlvec{\theta})|0\>$ and its partial derivatives are random variables. If $\mlvec{U}$ is deep enough that $\mlvec{U}(\mlvec{\theta})$ is approximately \emph{Haar} distributed on $SU(d)$, such that 
\begin{align}
  \EXP[L] = 0 \text{ and } \mathrm{Variance}[\partial L/\partial {\theta}_{k}] = O\left(\frac{1}{2^{2n}}\right), \quad \forall  k \in [p].
\end{align}
Therefore, with probability at least $1 - \delta$, $|\partial L/\partial {\theta}_{k}|^{2} \le O\left(\frac{1}{2^{2n}}\log\left(\frac{1}{\delta}\right)\right)$. For systems with a large number of qubits $n$, the gradient components can be vanishingly small, leading to the eponymous barren plateaus in the landscape. The main possible difficulties arising from this phenomenon are two-fold
\begin{itemize}
  \item Firstly, the components of the gradient of variational quantum systems are measured in practice by estimating the expectation value of some Hermitian operator through repeated measurements of \emph{shots}. If the components are exponentially decaying in $n$, the estimates of the expectations need to be correspondingly precise, leading to the number of shots necessary growing exponentially with $n$. This represents an exponential overhead in the training cost of the circuit.
  \item Secondly, the existence of vanishingly small gradients may indicate that the training landscape is infeasible to optimization by gradient based methods. Even if the landscape is free of spurious local minima, an optimization algorithm can in principle require a long time to find any minimum at all. Alongside the existence of \emph{saddle points} that trap gradient based algorithms, barren plateaus constitute one of the main difficulties in non-convex optimization.
\end{itemize}
Our results show that, for variational quantum eigensolvers with sufficient over-parameterization, the latter issue does \emph{not} arise and the deviation of the output from the target space decays exponentially over time as $\exp\left((\lambda_{2} - \lambda_{1}) t/n\right)$ where the $\lambda_{2},\lambda_{1}$ are the two lowest eigenvalues of the problem Hamiltonian (Theorem~\ref{thm:vqe-convergence-full}). This convergence can exist even with vanishing gradients because the gradients along the trajectory are spatially correlated along the training trajectory leading to significant progress towards the global minimum despite the small gradient components. Intuitively, this situation is similar to that in unstable equilibria in dynamical systems, where small forces can combine to cause significant deviations from equilibrium positions. We also show that the convergence is robust to a certain threshold of noise in the gradients (Corollary~\ref{cor:noisy-convergence}). The tolerable noise threshold is however $O(1/2^{{2n}})$ in $n$-qubit systems and therefore cannot resolve the first issue observed above.

However, as introduced in Section~\ref{sec:intro-empirical}, we identify that 
the noise tolerance as well as the parameterization threshold could  further be precisely dictated by a quantity called the \emph{effective dimension} (Corollary~\ref{cor:effective-convergence}) which is equal to $2^{n}$ in the worst case,  but could be significantly smaller for certain structured ansatz (see Section~\ref{sec:subgroup}).
This implies that noise tolerance could be significantly improved for specific VQE instances in practice. 
Finally, we mention that the vanishing gradient problem occurs also in classical neural networks where the gradients decay exponentially with the network \emph{depth}. Over-parameterization has been shown to still enable convergence in such systems (\cite{allenzhu2019convergence}), our results effectively establish the same phenomenon for VQEs.

\subsection{Ansatz-dependent bound for over-parameterization}
\label{sec:intro-empirical}

Our main theorem establishes sufficient conditions for the convergence of VQE. In
Section~\ref{sec:subgroup}, we further establish an ansatz-dependent bound for
a given VQE problem. Intuitively, when the parameterized unitaries corresponding
to the ansatz design is not universal and only explores a subgroup of the whole
unitary group, a tighter over-parameterization threshold can be achieved in
terms of the quantities associated with the subgroup (See
Corollary~\ref{cor:effective-convergence} for the formal statement):
\begin{corollary}[Ansatz-dependent Convergence Theorem (Informal)]
  Consider the smallest subspace containing the input state $\ket{\Phi}$ and is
  invariant under the parameterized unitaries corresponding to the ansatz
  design. It is sufficient to have number of parameters of
  order $\mathsf{poly}(\deff, \kappaeff)$ to ensure that with high probability
  the training of VQE converges efficiently, where
  \begin{itemize}
  \item The effective dimension $d_{\mathrm{eff}}$ is the dimension of the invariant subspace;
  \item The effective spectral ratio
          $\kappa_{\mathrm{eff}} = \frac{\lambda'_{\deff} - \lambda'_{1}}{\lambda'_{2}-\lambda'_{1}}$
          where $\{\lambda'_{i}\}_{i=1}^{\deff}$ are the eigenvalues the problem Hamiltonian
          $\mlvec{M}$ projected onto the subspace.
  \end{itemize}
\end{corollary}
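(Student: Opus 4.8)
The plan is to derive the ansatz-dependent bound as a corollary of the main result (Theorem~\ref{thm:vqe-convergence-full}) by showing that the entire training dynamics is confined to the invariant subspace $\V$, so that the VQE behaves exactly like an ``effective'' VQE posed on $\V$ with the projected Hamiltonian. Let $G$ denote the subgroup of $SU(d)$ generated by $\{\exp(-i\theta\generatorH{k}) \mid k\in[K],\,\theta\in\real\}$, and let $\V$ be the smallest $G$-invariant subspace containing $\ket{\Phi}$, i.e.\ the cyclic module $\V=\overline{\op{span}}\{g\ket{\Phi}\mid g\in G\}$, with $\deff=\dim\V$. First I would observe that for every choice of parameters the ansatz unitary $\mlvec{U}(\mlvec\theta)=\big(\prod_{l=1}^{p}\mlvec{U}_{l}\exp(-i\theta_{l}\mlvec{H})\big)\mlvec{U}_{0}$ lies in $G$ (each factor does), and since $\ket{\Phi}\in\V$ and $\V$ is $G$-invariant, the output state $\ket{\Psi(\mlvec\theta)}=\mlvec{U}(\mlvec\theta)\ket{\Phi}$ and its entire gradient-flow trajectory remain inside $\V$. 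Writing $\Pi_{\V}$ for the orthogonal projector onto $\V$ and $\mlvec{M}'=\Pi_{\V}\mlvec{M}\Pi_{\V}$, the loss is unchanged, $\bra{\Psi}\mlvec{M}\ket{\Psi}=\bra{\Psi}\mlvec{M}'\ket{\Psi}$, so the eigenvalues of $\mlvec{M}$ that the training ``sees'' are exactly the $\{\lambda'_{i}\}_{i=1}^{\deff}$ of $\mlvec{M}'$ restricted to $\V$.

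Next I would replay the three-step scheme of the main proof verbatim, but with the $d$-dimensional Hilbert space replaced by $\V$. The tangent directions available to the dynamics are generated by the Lie algebra of $G$, which preserves $\V$; hence the component of $\mlvec{M}\ket{\Psi}$ orthogonal to $\V$ never contributes to the gradient, and the asymptotic dynamics (Lemma~\ref{lm:vqe-concentration-init}, \ref{lm:vqe-concentration-time}) becomes the Riemannian Gradient Flow on the unit sphere of $\V$,
\begin{equation}
  \frac{d}{dt}\ket{\Psi} = -\,[\mlvec{M}', \newketbra{\Psi}{\Psi}]\,\ket{\Psi}.
\end{equation}
Invoking the robust convergence of this flow (Lemma~\ref{lm:vqe-perturbation}) inside $\V$ then yields exponential convergence at a rate governed by the projected spectral gap $\lambda'_{2}-\lambda'_{1}$ and the effective spectral ratio $\kappaeff=\frac{\lambda'_{\deff}-\lambda'_{1}}{\lambda'_{2}-\lambda'_{1}}$, with the noise/robustness tolerance now set by $\deff$ rather than $d$. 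Assembling these pieces through the same argument as in Theorem~\ref{thm:vqe-convergence-full} (and the partially-trainable ansatz of Definition~\ref{def:partially-trainable-ansatz}) gives the claimed $\mathsf{poly}(\deff,\kappaeff)$ parameter threshold.

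The hard part will be the \emph{concentration} analysis: the moment (Weingarten) computations and the Dudley-integral bound on the random field of trajectories were carried out for the Haar measure on the group explored by the ansatz, and I must show that the pushforward of this Haar measure under the restriction map $g\mapsto g|_{\V}$ reproduces the same moment structure with $d$ replaced by $\deff$. This is clean precisely when $G$ acts on $\V$ as a dense subgroup of $SU(\V)$ — equivalently, when $\V$ is an irreducible representation of $G$ on which the dynamical Lie algebra generated by $\{\generatorH{k}\}$ restricts to all of $\mathfrak{su}(\V)$ — so that integration over $G|_{\V}$ coincides with integration over the full $\deff$-dimensional unitary group. Under this condition every concentration estimate in the main proof goes through with $d\to\deff$; the delicate points are verifying irreducibility of the cyclic subspace $\V$ and controlling the pushforward measure, after which the reduction to Theorem~\ref{thm:vqe-convergence-full} is immediate.
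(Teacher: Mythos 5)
Your proposal follows essentially the same route as the paper's proof of Corollary~\ref{cor:effective-convergence}: confine the gradient-flow trajectory to the invariant subspace, replace $\mlvec{M}$ and $\mlvec{H}$ by their restrictions $\subspacecolumn^{\dagger}\mlvec{M}\subspacecolumn$ and $\subspacecolumn^{\dagger}\mlvec{H}\subspacecolumn$, and rerun the argument of Theorem~\ref{thm:vqe-convergence-full} with $d\to\deff$, which hinges (exactly as you note) on the restricted group acting as Haar-random $SU(\deff)$ so that the moment/concentration estimates carry over. The paper handles that last point the same way you frame it, except it takes the condition as a hypothesis of the corollary (justified by the Killing--Cartan classification and its subgroup integration formula, Lemma~\ref{lm:subgroup_integration}) rather than verifying irreducibility of the cyclic subspace, so there is no substantive gap between your plan and the published proof.
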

\paragraph{Evaluating trainability of VQE ansatz}
As a consequence of the ansatz-dependent bound, we can now predict and compare the
performances of different ansatz designs. In Section~\ref{subsec:estimate-eff}
we describe the procedure for estimating these problem-dependent quantities $\deff$ and $\kappaeff$.
We show in Section~\ref{sec:exp_threshold} that these quantities can be used to predict the perfomance
of different ansatz on physical problems such as the Transverse Field Ising models and Heisenberg models
without carrying out the gradient descent over multiple random initializations.
These quantities can also be used to explain why some ansatz designs allow solving the VQE problem near
the critical points where the energy spectrum is almost degenerated.
We also highlight that,
since general-purpose ansatz such as HEA tend to have $\deff$ comparable to the system dimension,
to observe quantum advantage in VQE problems, it may be necessary to tailor ansatz design to specific problem instances.

\subsection{Future work}
\label{sec:future-work}
Our results establish a starting point for the rigorous analysis of training in VQEs and show how such analysis can be used to motivate heuristics for their design. In this section, we discuss possible related directions as well as extensions of the work that could lead to a deeper understanding of the VQE training, or lead to applications where the quantum advantage is better motivated theoretically.
We study VQEs using a specialized parameterization (the partially-trainable ansatz). This ansatz can be seen theoretically and empirically to  effectively mimic the convergence behavior of common practical parameterization. Exploring other equivalent parameterizations may be of interest to directly establish tighter over-parameterization bounds. Our current analysis does not yield a direct lower bound on the minimum number of parameters required to ensure convergence so we cannot be sure if our bounds are tight.
Our empirical analysis indicates that the theoretical bounds could be improved,
leading to more practically feasible thresholds for over-parameterization.
Obtaining a critical point between the over and under parameterized regimes
would be ideal, to obtain a complete theoretical characterization. In
particular, the current sufficient over-parameterization threshold is
exponential in the effective dimension and spectral ratio. This is not
surprising, since a universal polynomial bound would yield polynomial time
quantum solutions to some combinatorial optimization problems that are not
expected to be efficiently solvable in general. It is thus very important to
study \emph{structured} ansatz for particular problems arising from physics or
quantum chemistry, where it is possible that the effective dimension itself
could be polynomial in the number of qubits, leading to polynomial time quantum
algorithms in these settings.

\subsection{Related work}
The theoretical study of training variational quantum algorithms has thus far
mostly focused on the characterization of optimization landscape:
You et al~\cite{you2021exponentially} show that local minima with large
sub-optimality gap can prevail in under-parameterized quantum neural networks; on
the other end of the spectrum, Anschuetz~\cite{anschuetz2022critical} studies the role of over-parameterization
in the landscape of quantum generative models, and shows the existence of a
critical point in the number of parameters above which all local minima are
close to the global minimum in function value.

There exists another line of works focusing on the landscape in \emph{Quantum
  Control}, a topic closely related to variational quantum algorithms. In quantum control, the goal is to optimize a continuous-time
parameterized quantum evolution, and VQA can be cast as a quantum control
problem with restricted parameterization. \cite{wu2011role} and
\cite{russell2016quantum} shows that the optimization landscape of quantum
control is free of spurious local minima under a series of conditions based on
controllability of the quantum systems. We note that (full) controllability is
related to over-parameterization in that it requires sufficient number of
parameters so that the parameterized quantum evolution can traverse the
space of unitary operators.

Benign landscape in \cite{anschuetz2022critical,
  wu2011role, russell2016quantum} provides strong evidence that
the over-parameterized instance are more likely to converge efficiently under gradient
methods. 
However, to the best of
our knowledge, our work is the first to show that sufficient over-parameterization
guarantees the successful convergence of a VQE system.

Exploring the role of over-parameterization in the convergence of large classical variational systems such as deep neural networks has been a very active area of research in theoretical machine learning in recent years. Jacot et. al~\cite{jacot2018neural} introduced the notion of a Neural Tangent Kernel, identifying the dynamics of training highly over-parameterized neural networks with kernel training with a fixed kernel. Arora et. al~\cite{arora2019exact} and Allen-Zhu et. al~\cite{allenzhu2019convergence} make this notion exact by showing sufficient over-parameterization conditions for the convergence of various architectures of deep neural networks based on this observation.

There has also been some work on the study of tangent kernels in the quantum setting. Liu et. al \cite{liu2021representation} and Shirai et. al~\cite{shirai2021quantum} hypothesize that the training of Quantum Neural Networks (QNNs) can be identified with kernel training with the corresponding tangent kernel, and empirically study the training of such kernels as a stand in for directly training QNNs. Subsequently Liu et. al~\cite{liu2022analytic} explore a convergence theory for QNNs with one training example by showing that to  in the large dimension limit, the tangent kernel and its gradient are concentrated around their (constant) mean. This is proposed as evidence for the linear convergence of over-parameterized QNNs. The main differences of this work from our setting are threefold: (1) We use the VQE loss function given by the expectation of the target Hermitian, instead of the mean squared loss used in QNN. This results in a different asymptotic dynamics (Riemannian Gradient Flow vs Kernel Regression). Our convergence result is therefore based not on a slow varying kernel but on a new quantity that we introduce, the \emph{parameterized projection operator $\mlvec{Y}$}.(2) Our results do not require the Hilbert space dimension to be arbitrarily large, instead showing sufficient over-parameterization conditions for \emph{any} value of the dimension. (3) \cite{liu2022analytic} only explores the local concentration of the kernel and it's gradient at individual moments during the training which is not sufficient to establish that the quantity is slow varying over the entire training interval (since the gradients at various times are highly correlated and may all be approximately in the same direction resulting in significant contributions to the deviation.)

Larocca et. al~\cite{larocca2021theory} study over-parameterization from an
information theoretic perspective, by defining the over-parameterization threshold
as the point beyond which adding new parameters does not increase the rank of a
Quantum Fischer Information Matrix. A final paper that may be of interest to our setting is \cite{wiersema2022optimizing}, where the authors explore Riemannian Gradient Flow directly over the unitary group. We instead analyze the optimization of VQEs using Riemannian Gradient Flow over the sphere, the convergence of which has already been established classically~\cite{xu2018convergence}, allowing us to establish convergence results in the quantum setting.

\subsection*{Acknowledgement}
We thank Eric Anschuetz, Peter Bartlett, Boyang Chen, Liang Jiang, Bobak Toussi Kiani,
Iordanis Kerenidis, Junyu Liu, Pengyu Liu, and Seth Lloyd for helpful feedback and discussions.  This work received support from the U.S. Department of Energy, Office
of Science, Office of Advanced Scientific Computing Research, Accelerated Research in Quantum Computing and
Quantum Algorithms Team programs, as well as the U.S.
National Science Foundation grant CCF-1755800, CCF-1816695, and CCF-1942837 (CAREER).  S.C. was also partially supported by J.P. Morgan Chase FLARE Fellowship.


\section{Preliminaries}
\label{sec:prelim}
\newcommand{\ansatzname}{{\mathcal{A}}}
\newcommand{\paramset}{{\Theta}}
\newcommand{\randominitdist}{{\mathcal{D}}}

\subsection{Quantum information preliminaries}
\label{sec:quantum-prelim}
Quantum computations operate on systems of quantum registers or qubits.
The state of a system composed of $n$~qubits is represented by a trace-$1$
positive semidefinite (PSD) Hermitian $\mlvec{\rho} \in\complex^{d\times d}$,
with $d = 2^{n}$. We will mainly focus on quantum states with rank-$1$ density
matrices (i.e. \emph{pure states}), which can be unambiguously represented by a
normalized vector $\mlvec{v}$ in $\complex^{2^{n}}$ such that
$\mlvec{\rho} = \mlvec{v}\mlvec{v}^\dagger$. We use the Dirac notation to denote
the state vector $\mlvec{v}$ (resp. its covector $\mlvec{v}^{\dagger}$) with
$\ket{\Psi}$ (resp. $\bra{\Psi}$). Using the same notation, the inner product of
two states $|\Psi\>,|\Psi'\>$ is denoted by $\<\Psi|\Psi'\>$.

An operation over a quantum state is a linear map that is completely positive,
Hermitian-preserving and trace-preserving (\cite{watrous2018theory}). We focus
on the \emph{unitary operators}, i.e. \emph{unitary} matrices $\mlvec{U}$ that
map a state vector $\ket{\Psi}$ to $\mlvec{U}\ket{\Psi}$. We refer to the set of all \emph{special unitary}
matrices (i.e. unitary matrices with determinant $1$) on a $d$-dimensional Hilbert space as $SU(d)$.
A set of unitary gates commonly used on qubits are the \emph{Pauli
gates}:
\begin{align}\label{eq:paulis}
   \mlvec{X} =
    \begin{bmatrix}
      0 & 1 \\
      1 & 0
    \end{bmatrix},\
  \mlvec{Y} =
    \begin{bmatrix}
            0 & -i \\
            i & 0 \\
          \end{bmatrix},\
  \mlvec{Z} =
    \begin{bmatrix}
      1 & 0 \\
      0 & -1 \\
    \end{bmatrix}.
\end{align}

Information is extracted from a quantum system by \emph{measuring} a quantum
state. Each measurement is represented by a set of positive semidefinite \emph{Hermitian} operator
$\{\mlvec{M}_{m}\}$ that sums to identity and a set of real values
$\{\lambda_{m}\}$. Performing the measurement on the quantum state $\ket{\Psi}$
yields outcome $\lambda_{m}$ with probability
$\bra{\Psi}\mlvec{M}_{m}\ket{\Psi}$, and the expected value of the measurement
is given by $\bra{\Psi}\mlvec{M}\ket{\Psi}$ with
$\mlvec{M} := \sum_{m}\lambda_{m}\mlvec{M}_{m}$.

The state space of a composite system is the tensor product of the state spaces
of the subsystems: for example, for two subsystems with independent states
$\ket{\Psi_{1}}$ and $\ket{\Psi_{2}}$, the joint state is represented by
$|\Psi_{1}\>\otimes |\Psi_{2}\>$.
Composite states that cannot be represented as a tensor product of their
subsystems are said to be \emph{entangled}, and unitaries that transform product
states into entangled states are known as \emph{entangling} operations.
For composite system defined on
$S_{1}\otimes S_{2}$,  the partial trace on the first subsystem is a linear
operation defined such that
$\tr_{1}(\mlvec{A}\otimes \mlvec{B}) = \tr(\mlvec{A})\mlvec{B}$ for all linear
maps $\mlvec{A}$ on $S_{1}$ and $\mlvec{B}$ on $S_{2}$.

Throughout this paper, we deal with
different norms for vectors and matrices. We use $\opnorm{\cdot}$, $\fronorm{\cdot}$, $\trnorm{\cdot}$ to
denote the operator norm (i.e. the largest eigenvalue), Frobenius norm and the
trace norm of matrices; we use
$\norm{\cdot}_{p}$ to denote the $p$-norm of vectors, with subscript omitted for
$p=2$. We use $\tr(\cdot)$ for trace operation.

\subsection{Variational quantum algorithms}
\label{sec:vqa-prelim}
Variational Quantum Algorithms (VQA) is a paradigm of quantum algorithms that searches
over a family of parameterized quantum operations referred to as \emph{quantum
  ansatz}.
More concretely, a $p$-parameter ansatz on a $d$-dimensional Hilbert space
$\mlvec{U} \colon \real^{p}\to\complex^{d \times d}$ maps real parameters
$\mlvec{\theta}$ to a unitary operator $\mlvec{U}(\mlvec{\theta})$.

An important example of VQA is the Variational Quantum Eigensolvers (VQE)
defined as below:
\begin{definition}[Variational quantum eigensolvers]
\label{def:vqe}
A $d$-dimensional variational quantum eigensolver instance is specified by a
triplet $(\mlvec{M}, \ket{\Phi}, \mlvec{U})$ with
$d\times d$ problem Hamiltonian $\mlvec{M}$, an input state $\ket{\Phi}\in\complex^{d}$ and
an ansatz $\mlvec{U} \colon \real^{p}\to\complex^{d \times d}$. Let $\lambda_{1} < \lambda_{2} \leq \cdots \leq \lambda_{d}$ be
the eigenvalues of $\mlvec{M}$ in an ascending order. The goal is to approximate
the ground state of $\mlvec{M}$ (i.e. the eigenvector associated with $\lambda_{1}$)
with $\mlvec{U}(\mlvec\theta)\ket{\Phi}$ by solving the optimization problem:
\begin{align}
\min_{\mlvec\theta}L(\mlvec\theta):=\bra{\Phi}\mlvec{U}^{\dagger}(\mlvec\theta)\mlvec{M}\mlvec{U}(\mlvec\theta)\ket{\Phi}\label{eq:vqe}
\end{align}
\end{definition}
The search for the optimal parameters $\mlvec\theta^{\star}$ are commonly performed
by gradient descent
$\mlvec\theta \leftarrow \mlvec\theta - \eta\nabla_{\mlvec\theta}L(\mlvec\theta)$.
For sufficiently small learning rate $\eta$, the dynamics of gradient descent
reduces to that of gradient flow
\begin{align}
d\mlvec\theta/dt = - \eta\nabla_{\mlvec\theta}L(\mlvec\theta)
\end{align}

\paragraph{Fully- and partially-trainable ansatz}
The parameterization of $\mlvec{U}$ is referred to as the ansatz design in
the quantum computing literature. A popular
choice of ansatz design is the \emph{\heafullname} (HEA, e.g.
\cite{kandala2017hardware}). HEA makes use of native gates of quantum computers
and is composed of interleaving single-/two-qubit Pauli rotations and
entanglement unitaries implemented with CZ / CNOT gates. The main motivation
behind the design is to facilitate the implementation on real quantum machines.
Another choice of ansatz design is the \emph{Hamiltonian variational ansatz}
(HVA). HVA is partially inspired by the adiabatic theorem and utilizes the
structure of the problem Hamiltonians (e.g. \cite{Wiersema2020,
  mixedkitaev2021}). For HVA $\mlvec{U}$ composed of parameterized rotations
generated by a set of Hermitians that sums to the problem Hamiltonian.

In this work, we consider a general family of ansatz that includes HEA and HVA
as special cases, specified by the number of layers $L$ and a set of $K$
$d\times d$ Hermitians $\{\generatorH{1}, \generatorH{2}, \cdots, \generatorH{K}\}$:
\begin{definition}[Fully-trainable ansatz]
  \label{def:fully-trainable-ansatz}
A fully-trainable $L$-layer ansatz with a set of Hermitians
$\ansatzname=\{\generatorH{1}, \generatorH{2}, \cdots, \generatorH{K}\}$ has $K\cdot L$ trainable parameters
and is defined as
\begin{align}
  \mlvec{U}^{(L)}(\mlvec{\theta}) = \prod_{l=1}^{L}\prod_{k=1}^{K}\exp(-i\theta_{l,k} \generatorH{k}).\label{eq:fully-trainable-ansatz}
\end{align}
The superscript $L$ will be omitted when there is no ambiguity.
\end{definition}
To see that the ansatz defined in Definition~\ref{def:fully-trainable-ansatz} is
a superset of HVA and HEA, notice that the fully-trainable ansatz is an HVA if the problem
Hamiltonian $\mlvec{M}$ can be represented as a linear combination of
$\{\generatorH{k}\}_{k=1}^{K}$. As for HEA, it can in general can be expressed as
\begin{align}
  \mlvec{U}^{(L)}(\mlvec{\theta}) = \prod_{l=1}^{L}\big(\prod_{k=1}^{K'}\exp(-i\theta_{l,k'} \generatorH{k'}) \mlvec{U}_{\mathsf{ent}}\big)
\end{align}
where $\generatorH{k'}$ are single-/two-qubit Pauli rotations and
$\mlvec{U}_{\mathsf{ent}}$ corresponds to an entanglement layer composed of CZ
and CNOT gates. If the smallest integer $C$ such that
$\mlvec{U}_{\mathsf{ent}}^{C} = \mlvec{I}$ exists, the HEA can be expressed as
a fully-trainable ansatz with $K = C\cdot K'$, with each generating Hermitian
represented as
$\mlvec{U}_{\mathsf{ent}}^{c}\generatorH{k}\big(\mlvec{U}_{\mathsf{ent}}^{c}\big)^{\dagger}$
for $c \in [C]$ and $k\in[K]$.

Apparently, the parameterization is determined by a fixed set of Hermitians
 $\ansatzname = \{\generatorH{1}, \cdots, \generatorH{K}\}$ and a
domain of parameters in each layer $\paramset\subseteq\real^{K}$ up to the
choice of number of layers $L$. Hence we refer to the pair
$(\ansatzname, \paramset)$ as the design of the ansatz.
The subscript $\paramset$ will be dropped when $\paramset = \real^{K}$ for a
more concise notation.

Given an ansatz design $(\ansatzname, \paramset)$, the set of all
achievable unitary matrices forms a subgroup of $SU(d)$:
\begin{align}
  G_{\ansatzname, \paramset}  = \cup_{L=0}^{\infty}\{\mlvec{U}^{(L)}(\mlvec\theta): \mlvec\theta\in\paramset^{L}\subseteq\real^{K\cdot L}\}.
\end{align}
For many choices of ansatz with a limited set of $\ansatzname$,
$G_{\ansatzname, \paramset}$ is a strict subgroup of $SU(d)$. We omit the
subscript $\paramset$ and denote the
subgroup as $G_{\ansatzname}$ with the domain of the parameters is clear from the context.

Using the group $G_{\ansatzname}$, we define the partially-trainable ansatz associated with $\ansatzname$ as:
\begin{definition}[Partially-trainable anastz for $\ansatzname$]
  \label{def:partially-trainable-ansatz}
  Let the subgroup $G_{\ansatzname}$ be a subgroup of $SU(d)$ associated with
  fully-trainable ansatz with a set of Hermitians $\ansatzname=\{\generatorH{1}, \generatorH{2}, \cdots \generatorH{K}\}$.
  The corresponding $p$-parameter partially-trainable ansatz is defined as:
  \begin{align}
    \mlvec{U}(\mlvec\theta) = \big(\prod_{l=1}^{p}\mlvec{U}_{l}\exp(-i\theta_{l}\mlvec{H}) \big) \mlvec{U}_{0}.\label{eq:partially-trainable-ansatz}
  \end{align}
  Here $\mlvec{H}$ is an arbitrary Hermitian in $\ansatzname$ and $\mlvec{U}_{l}$ are $i.i.d.$ sampled from
  the Haar measure over ${G}_{\ansatzname}$.
\end{definition}
We highlight that the partially-trainable ansatz can be viewed as a
fully-trainable ansatz trained on a subset of the parameters, hence the name
``partially trainable'':
without loss of generality, assume we choose $\generatorH{1}$ as the generating
Hermitian $\mlvec{H}$ in Definition~\ref{def:partially-trainable-ansatz}. Performing
gradient descent on the parameters corresponding to $\generatorH{1}$ in every
$L'$-layers (i.e. gradient descent on
$\theta_{1, 1}, \theta_{L'+1, 1}, \theta_{2L'+1, 1}, \cdots$) of a
randomly-initialized fully-trainable ansatz is then equivalent to optimizing the
partially-trainable ansatz with $\mlvec{H} = \generatorH{1}$ with $\mlvec{U}_{l}$
being a $L'$-step random walk with step sample from
$S:=\{\prod_{k=1}^K \exp(-i\theta_k\generatorH{k}):\mlvec{\theta}\in \paramset\subseteq\real^{K}\}$.
Under mild regularity conditions, the random walk converges to the Haar measure over
$G_{\ansatzname, \paramset}$ (See \cite[Section 3]{varju2012random}).
In Section~\ref{subsec:exp1}, we observe that the dynamics of the
partially-trainable ansatz faithfully captures the dynamics of the
fully-trainable ansatz in the over-parameterization regime.

\subsection{Convergence in over-parameterized classical systems}
\label{sec:classical-overparam}
Over-parameterization has been proposed as an explanation for the convergence of the highly nonconvex training of parameterized classical models such as artificial neural networks~\cite{jacot2018neural, arora2019exact, chizat2018lazy}. The convergence of the models arises from two main phenomenon:
\begin{enumerate}
  \item \textbf{Convergence of expected dynamics:} When the parameters are randomly initialized, the expected dynamics of the training are shown to exhibit convergence to a global minima. The expected dynamics is therefore a smoothed version of the actual dynamics that removes some of the irregularities that can lead to a failure in convergence.
    \item \textbf{Convergence under perturbation:} Despite the convergence of the training dynamics in expectation, the actual training corresponds to a particular setting of initial parameters. This leads to the actual training being a perturbed version of the expected dynamics, it is thus necessary to show that the convergence of this dynamics is robust to small perturbations.
  \item \textbf{Concentration at initialization:} Due to the law of large numbers, with high probability, deviations from the expected dynamics decrease as the number of random parameters increases. Over-parameterization thus plays the crucial role of leading to the \emph{concentration} of the dynamics around the expected value, allowing the magnitude of random perturbations to be bounded with high probability.
    \item \textbf{Lazy training:} It must be shown that the actual training concentrates throughout the training given the convergence at initialization. This phenomenon has been characterized as \emph{lazy training}~\cite{chizat2018lazy}, where the dynamics of a system at initialization remain a good approximation throughout its training. Once again, over-parameterization plays an important role in ensuring this phenomenon; as the number of parameters increases the changes in each parameter become smaller with high probability over the course of training.
\end{enumerate}
This method can be illustrated by the example of the \emph{Neural Tangent Kernel}~\cite{jacot2018neural}, which has been used to show convergence while training several over-parameterized classical neural networks including wide feedforward networks~\cite{arora2019exact}.

Consider a classical classification problem where the input data is drawn from a distribution $p_{in}$ over $\R^{n_{0}}$ and an output in $\R^{n_{L}}$, the space of valid functions is given by $\mathcal{F} = \{\mlvec{f} \colon \R^{n_{0}} \to \R^{n_{L}}\}$. The model is specified as a \emph{realization function} mapping $p$ parameters to candidate functions $\mlvec{F}^{(L)} \colon \R^{p} \to \mathcal{F}$. Denoting the parameters at time $t$ by $\theta(t) = \(\theta_{1}(t),\dots,\theta_{p}(t)\)$, the function at time $t$ is given by $\mlvec{F}^{(L)}(\theta(t))$. The data distribtution induces an inner product over $\mathcal{F}$ given by $\langle \mlvec f, \mlvec g \rangle_{p_{in}} = \EXP_{x \sim p_{in}}[\mlvec f(x)^{T}\mlvec g(x)]$. Given a cost function $C$, the gradient flow dynamics of the system correspond to \emph{kernel training} with respect to the \emph{neural tangent kernel} (NTK) given by
  $\tilde{\mlvec{K}} = \sum_{l=1}^{p} \frac{\partial}{\partial\theta_{l}}\mlvec{F}^{(l)}(\theta) \otimes \frac{\partial}{\partial{\theta_{l}}}\mlvec{F}^{(l)}(\theta)$.

Let $\mlvec{y} \in \mathcal{F}$ be the true function mapping inputs to ouputs resulting in the residual function $\mathcal{r}(\theta(t)) = \mlvec{y} - \mlvec{F}^{(L)}(\theta(t))$. If $C$ is the squared loss function, the dynamics of the system is simply given by $\dot{\mlvec{r}} = -\eta \tilde{\mlvec{K}} \mlvec{r}$ where $\eta$ is the chosen step size. It is known that if $\tilde{\mlvec{K}}$ is a constant positive definite matrix, the system exhibits linear convergence. Following the above recipe, this leads to a framework for showing the convergence of classical neural networks, it is shown that $\mlvec{K} = \EXP(\tilde{\mlvec{K}}(\theta(0)))$ is a positive definite constant matrix. It is also shown that the dynamics $\dot{\mlvec{r}} = -\eta \tilde{\mlvec{K}} \mlvec{r}$ converges whenever $\lVert\tilde{\mlvec{K}} - \mlvec{K}\rVert \le \epsilon_{0}$. Further define an over-parameterization threshold $P^{(L)}(n_{0},n_{L})$Convergence can then be established via the following propositions:
\begin{enumerate}
  \item \textbf{Concentration at initialization}: If $p > P^{{L}}$, $\lVert\tilde{\mlvec{K}}(\theta(0)) - \mlvec{K}\rVert \le \epsilon_{0}$ with probability at least~$9/10$.
  \item \textbf{Small perturbations imply convergence}: $\lVert\tilde{\mlvec{K}}(\theta(t)) - \mlvec{K}\rVert \le \epsilon_{0} $ for all $t < t'$, we have $\lVert \mlvec{r}(t) - \mlvec{\tilde{\mlvec{r}}(t)}\rVert \le \epsilon_{1}$ for all $t \le t_{1}$, where $\tilde{\mlvec{r}}$ denotes the residuals when the kernel is frozen at initialization (in which case the system is known to converge).
  \item \textbf{Convergence implies small perturbations}: If $p > P^{{L}}$, and $\lVert \mlvec{r}(t) - \mlvec{\tilde{\mlvec{r}}(t)}\rVert \le \epsilon_{1}$ for all $t < t'$, we have $\lVert\tilde{\mlvec{K}}(\theta(t)) - \mlvec{K}\rVert \le \epsilon_{0}$ for all $t \le t'$ with probability at least $9/10$
\end{enumerate}
These propositions are sufficient to inductively prove the convergence of the training dynamics to a global minimum. Consider the earliest time $t_{0}$ where the perturbation in the kernel is too large; by the final proposition this can only occur if the convergence of the system is violated at some time $t'_{0} < t_{o}$. However, by the second proposition, this would imply that for an earlier time $t''_{0}$ the kernel perturbation must have been too large, contradicting our initial assumption that $t_{0}$ was the earliest such time. This shows that both the small perturbation condition as well as the convergence of the system are maintained throughout the training.


\section{A convergence theory for VQE}
\label{sec:vqe-converge}
In this section we use ideas from the classical theory of over-parameterized variational systems to give sufficient conditions for the convergence of a VQE to zero loss. We also establish the main factors influencing the (linear) rate of convergence. 

As discussed in Section~\ref{sec:classical-overparam}, the random initialization of the parameters plays an important role in the convergence. For the results of this section, we rely on the \emph{partially-trainable ansatz} (Definition~\ref{def:partially-trainable-ansatz}). To demonstrate the main techniques we restrict ourselves to the case where the set of Hermitian defining the ansatz $\ansatzname$ form a complete basis of the Lie Algebra of $SU(d)$. The corresponding induced subgroup $G_{\ansatzname}$ is therefore the entire unitary group $SU(d)$. 
Note that this assumption enforces an implicit restriction on the generating Hermitians $\{\generatorH{k}\}_{k=1}^{K}$, which is however satisfied by typical Hermitians used in practice (e.g., single-/two-qubit Pauli rotations).
Under this setting, we instantiate partially-trainable ansatz with $G_{\ansatzname} = SU(d)$ as follows: 
\begin{definition}[Partially-trainable ansatz when $G_{\ansatzname} = SU(d)$]
  \label{def:frozen-ansatz}
Consider a quantum system over $n$-qubits with a corresponding Hilbert space of dimension $d = 2^{n}$. We define a ($p$-parameter) random ansatz parameterized by
$\mlvec{\theta}\in\real^{p}$:
\begin{align}
  \mlvec{U}(\mlvec{\theta}) = \mlvec{U}_p\exp(-i\theta_p\mlvec{H})\cdot \cdots \cdot \mlvec{U}_1\exp(-i\theta_1\mlvec{H})\mlvec{U}_0
\end{align}
where $\mlvec{U}_l$ are sampled $i.i.d.$ with respect to the Haar measure over
$SU(d)$, and $\mlvec{H}$ is a trace-$0$ Hermitian.
\end{definition}
We comment that, due to the Haar randomness of $\mlvec{U}_l$, the choice of $\mlvec{H}$ is arbitrary in terms of its eigenvectors for fixed eigenvalues. And as we shall see later, the over-parameterization depends on the eigenvalues only through the scaling of $Z(\mlvec{H},d) = \tr(\mlvec{H}^2) / (d^2-1)$ with $d$. The partially-trainable ansatz has several appealing analytical properties for the analysis of convergence. Firstly, the random initialization of the ansatz is restricted to the unitaries $\mlvec{U}_l$ which are never updated during the training. There is thus a clear separation between the random initialization and dynamic evaluation of the system. Furthermore, since $\mlvec{U}_l$ are sampled from the right-invariant Haar measure over $SU(d)$, the distribution of $\mlvec{U}(\theta)$ is independent of the scheme used to initialize the trainable parameters $\mlvec{\theta}$. Finally, the distribution of the ansatz is also invariant under \emph{arbitrarily} changes to the parameters, therefore the distribution of the ansatz does not change as the system evolves. This intuitively indicates that important quantities connected to the ansatz may be \emph{slow-varying} as their expectation value remains the same. The remainder of the section will formalize this notion.

Recall that a VQE instance is specified by a specified by a problem Hamiltonian
$\mlvec{M} \in \complex^{d\times d}$, an input state
$\ket{\Phi} \in \complex^{d}$ and an $p$-parameter variational ansatz
$\mlvec{U}:\real^{p}\rightarrow \complex^{d\times d}$, and seeks to solve the following optimization problem
\begin{align}
\min_{\mlvec\theta}L(\mlvec\theta):=\bra{\Phi}\mlvec{U}^{\dagger}(\mlvec\theta)\mlvec{M}\mlvec{U}(\mlvec\theta)\ket{\Phi}
\end{align}
We investigate the dynamics of training the system using \emph{gradient flow}, ie. the parameters are updated according to the differential equation $d\mlvec{\theta}/dt = -\eta \nabla_{\mlvec{\theta}} L(\theta)$, where $\eta$ is some previously chosen learning rate.

\paragraph{Notation and definitions} We define here some notations, quantities, and conventions that will play an important role in the forthcoming analysis.
\begin{itemize}
  \item $\mlvec{W}_{d^{2}\times d^{2}}$ denotes the swap operator $\sum_{a,b\in[d]}\mlvec{E}_{ab}\otimes\mlvec{E}_{ba}$.
  \item $\mlvec{I}_{d^{2}\times d^{2}}$ denotes the identity
        matrix in $\complex^{d^{2}\times d^{2}}$.

  \item Without loss of generality we assume $\tr(\mlvec{H}) = 0$.
  \item We use $\mlvec{U}(\theta_{k}) = \mlvec{U}_{k}\exp(-i\theta_{k}\mlvec{H})$ to denote the composition of the randomly initialized Haar operator and the parameterized rotation, corresponding to the $k^{\mathrm{th}}$ real parameter.
  \item We use the short hand $\mlvec{U}_{l:p}(\mlvec{\theta})$ to represent
        \begin{align}
          \mlvec{U}_{l:p}(\mlvec\theta) := \mlvec{U}_p(\theta_p) \cdots \mlvec{U}_{l}(\theta_{l}),
        \end{align}
         where $\mlvec\theta$ may be omitted when not ambiguous. $\mlvec{U}_{l:p}$ represents the composition of the unitaries corresponding to the last $p - l + 1$ parameters.
  \item A common normalizing factor appears due to Haar integral: for any
        $d\times d$-Hermitian $\mlvec{A}$, define
        $Z(\mlvec{A}, d):=\frac{\tr(\mlvec{A}^{2})}{d^{2}-1}$.
  \item The matrix
    \begin{align}
      \mlvec{V}_l(\mlvec\theta):=\mlvec{U}_p(\theta_p)\cdots\mlvec{U}_l(\theta_l)=\mlvec{U}_p\exp(-i\theta_p\mlvec{H})\cdot\cdots\cdot\mlvec{U}_{l}\exp(-i\theta_{l}\mlvec{H}),
    \end{align}
    is defined as the composition of all the layers of the ansatz from $p$ to $l$.
        Using this notation, the ansatz $\mlvec{U}(\mlvec\theta)$ can be written
        as $\mlvec{V}_{1}(\mlvec\theta)\mlvec{U}_{0}$.
  \item The output state $|\Psi(t)\>$ at time $t$ is given by $\mlvec{U}(\mlvec{\theta}(t))\ket{\Phi}$.
  \item We define the matrix $\mlvec{H}_l(\mlvec\theta):=\mlvec{V}_l(\mlvec\theta)\mlvec{H}\mlvec{V}_l(\mlvec\theta)^\dagger$.
  \item The central quantity dictating the dynamics of VQE (see Lemma~\ref{lm:vqe-dynamics}) is given by the matrix
    \begin{align}
      \label{eq:define-Y}
      \mlvec{Y}(\mlvec\theta):=\frac{1}{pZ(\mlvec{H},d)}\sum_{l=1}^{p}\mlvec{H}_{l}^{\otimes 2},
    \end{align}
    which we call the \emph{parameterized projection operator}. \footnote{Note that $\mlvec{Y}$ is not a projection in the sense of linear algebra, ie., $\mlvec{Y}^{2} = \mlvec{Y}$ in general, but rather is a quantity that controls (Lemma~\ref{lm:vqe-dynamics}) the projection of the dynamics of the output state onto the Reimannian Sphere.}
  \item  We note that $\ket{\Psi(t)}, \{\mlvec{U}_{l}\}, \{\mlvec{V}_{l}\}, \{\mlvec{H}_{l}\}, \mlvec{Y}$ are also functions of
  time $t$ through the evolution of the parameters $\mlvec{\theta}(t)$. We shall sometimes abuse notation write $\mlvec{Y}$ as a function only of $t$, this is to be implicitly understood as $\mlvec{Y}(\mlvec{\theta}(t))$.
  \item All derivatives of matrix valued quantities with respect to real parameters are to be taken elementwise.
\end{itemize}

\paragraph{Main elements of theory} Following the structure in Section~\ref{sec:classical-overparam} we outline the main components of our analysis:
    \begin{enumerate}
        \item \textbf{Identifying (idealized) gradient flow dynamics of VQE with classical dynamics that is known to converge.} We show that the dynamics of VQE is equivalent to Riemannian Gradient Flow (RGF) over the unit sphere in $d$-dimensions, by tracking the evolution of the output state. Specifically, we have the following lemma
        \begin{restatable}[VQE output-state dynamics under gradient flow]{lemma}{vqedynamics}
      \label{lm:vqe-dynamics}
      For a VQE instance $(\mlvec{M}, \ket{\Phi}, \mlvec{U})$, with $\mlvec{U}$
      being the ansatz defined in Definition~\ref{def:frozen-ansatz}, when optimized
      with gradient flow with learning rate $\eta$, the output state $\ket{\Psi(t)} = \mlvec{U}(\mlvec{\theta}(t))\ket{\Phi}$
      follow the dynamics
      \begin{align}
        \label{eq:vqe-dynamics}
        \frac{d}{dt}\ket{\Psi(t)} &= -(\eta\cdot p\cdot Z(\mlvec{H},d))\tr_{1}(\mlvec{Y}([\mlvec{M}, \newketbra{\Psi(t)}{\Psi(t)}]\otimes \mlvec{I}_{d\times d})) \ket{\Psi(t)}.
      \end{align}
      Here the Hermitian $\mlvec{Y} \in \complex^{d^{2}\times d^{2}}$ is a
      time-dependent matrix defined as
      \begin{align}
        \label{eq:Y-def}
            \mlvec{Y}(\mlvec\theta):=\frac{1}{pZ(\mlvec{H},d)}\sum_{l=1}^{p}\big(\mlvec{U}_{l:p}(\mlvec\theta(t))\mlvec{H}\mlvec{U}^{\dagger}_{l:p}(\mlvec\theta(t))\big)^{\otimes 2}.
      \end{align}
    \end{restatable}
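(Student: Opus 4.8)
The plan is to track the evolution of the output state $\ket{\Psi(t)} = \mlvec{U}(\mlvec{\theta}(t))\ket{\Phi}$ directly, reducing the whole computation to the elementary per-parameter derivatives of the ansatz and then assembling them through the chain rule. The conceptual claim is that the quadratic dependence of the dynamics on the conjugated generators $\mlvec{H}_{k}$ is exactly what makes the second-moment object $\mlvec{Y}$ appear; everything else is trace and tensor bookkeeping.

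First I would compute how $\ket{\Psi}$ depends on a single parameter $\theta_{k}$. Since $\theta_{k}$ enters $\mlvec{U}$ only through the factor $\exp(-i\theta_{k}\mlvec{H})$ in the $k$-th layer, differentiating the product inserts $-i\mlvec{H}$ at that position. Using that $\mlvec{H}$ commutes with $\exp(-i\theta_{k}\mlvec{H})$ together with the definition $\mlvec{H}_{k} = \mlvec{U}_{k:p}\mlvec{H}\mlvec{U}_{k:p}^{\dagger}$, this locally inserted generator can be conjugated past the trailing layers to land at the front, giving the clean identity $\partial\mlvec{U}/\partial\theta_{k} = -i\mlvec{H}_{k}\mlvec{U}$, hence $\partial\ket{\Psi}/\partial\theta_{k} = -i\mlvec{H}_{k}\ket{\Psi}$. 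I expect this to be the main (though not deep) obstacle, since it requires carefully verifying that the conjugation correctly converts the layer-local $\mlvec{H}$ into the globally-conjugated $\mlvec{H}_{k}$.

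With this in hand, the loss gradient is immediate. Writing $L = \bra{\Psi}\mlvec{M}\ket{\Psi}$ and using $\partial\bra{\Psi}/\partial\theta_{k} = i\bra{\Psi}\mlvec{H}_{k}$ (as $\mlvec{H}_{k}$ is Hermitian) yields $\partial L/\partial\theta_{k} = -i\bra{\Psi}[\mlvec{M},\mlvec{H}_{k}]\ket{\Psi}$, which is real since $[\mlvec{M},\mlvec{H}_{k}]$ is anti-Hermitian. Substituting the gradient-flow rule $\dot{\theta}_{k} = -\eta\,\partial L/\partial\theta_{k}$ into $\frac{d}{dt}\ket{\Psi} = \sum_{k=1}^{p}(\partial\ket{\Psi}/\partial\theta_{k})\dot{\theta}_{k}$ and collecting the factors of $-i$ and $i$ gives $\frac{d}{dt}\ket{\Psi} = \eta\sum_{k=1}^{p}\bra{\Psi}[\mlvec{M},\mlvec{H}_{k}]\ket{\Psi}\,\mlvec{H}_{k}\ket{\Psi}$.

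Finally I would recognize the double appearance of $\mlvec{H}_{k}$ — once as the operator acting on $\ket{\Psi}$ and once inside the scalar coefficient — as the tensor structure encoded by $\mlvec{Y} = (p\,Z(\mlvec{H},d))^{-1}\sum_{l}\mlvec{H}_{l}^{\otimes 2}$. Rewriting the scalar as a trace via $\bra{\Psi}[\mlvec{M},\mlvec{H}_{k}]\ket{\Psi} = -\tr(\mlvec{H}_{k}[\mlvec{M},\newketbra{\Psi}{\Psi}])$ and applying the partial-trace identity $\tr_{1}((\mlvec{H}_{k}\otimes\mlvec{H}_{k})([\mlvec{M},\newketbra{\Psi}{\Psi}]\otimes\mlvec{I}_{d\times d})) = \tr(\mlvec{H}_{k}[\mlvec{M},\newketbra{\Psi}{\Psi}])\,\mlvec{H}_{k}$ collapses the sum into $-(\eta\, p\, Z(\mlvec{H},d))\,\tr_{1}(\mlvec{Y}([\mlvec{M},\newketbra{\Psi}{\Psi}]\otimes\mlvec{I}_{d\times d}))\ket{\Psi}$, matching the claimed dynamics and its accompanying definition of $\mlvec{Y}$. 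The remaining work is routine: checking the signs through the two commutators and confirming the $\tr_{1}(A\otimes B) = \tr(A)B$ reduction term by term.
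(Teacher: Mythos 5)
Your proposal is correct and follows essentially the same route as the paper's proof: compute the per-parameter derivative by inserting $-i\mlvec{H}$ and conjugating it into $\mlvec{H}_{k}=\mlvec{U}_{k:p}\mlvec{H}\mlvec{U}_{k:p}^{\dagger}$, express $\partial L/\partial\theta_{k}$ as a commutator trace, combine with the gradient-flow rule via the chain rule, and collapse the quadratic appearance of $\mlvec{H}_{k}$ into $\mlvec{Y}$ using $\tr_{1}(\mlvec{A}\otimes\mlvec{B})=\tr(\mlvec{A})\mlvec{B}$. The only cosmetic difference is that you differentiate the output state directly, whereas the paper records the derivative of the partial products $\mlvec{U}_{r:p}$ first; the sign conventions and the final identity match.
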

        Over the randomness of ansatz initialization, for all
    $\mlvec\theta\in\real^{p}$, the expected value of matrix $\mlvec{Y}$ is
    $\mlvec{Y}^{\star}=\mlvec{W}_{d^{2}\times d^2}-\frac{1}{d}\mlvec{I}_{d^{2}\times d^2}$. If we choose
    $\eta = \frac{1}{p Z(\mlvec{H})}$, the VQE dynamics allows the following
    decomposition:
    \begin{align}
      \label{eq:vqe-ideal-dynamics}
        \frac{d}{dt}\ket{\Psi(t)} = -[\mlvec{M}, \newketbra{\Psi(t)}{\Psi(t)}] \ket{\Psi(t)} - \tr_{1}\big((\mlvec{Y}-\mlvec{Y}^{\star}) \cdot [\mlvec{M}, \newketbra{\Psi(t)}{\Psi(t)}\otimes \mlvec{I}_{d\times d}]\big)\ket{\Psi(t)}.
    \end{align}
    The first term is exactly the Riemannian gradient descent over the unit sphere with loss
    function $\bra{\Psi(t)}\mlvec{M}\ket{\Psi(t)}$, which is known to converge linearly to
    the ground state~\cite{xu2018convergence}. Lemma~\ref{lm:vqe-dynamics} shows that the main quantity that controls the deviation of the VQE gradient flow from RGF over the sphere is $\mlvec{Y} - \mlvec{Y}^{\star}$, ie. the deviation of the parameterized projection operator $\mlvec{Y}$ from it's expectation. It also shows that the value of $\mlvec{Y}$ is a parameterized quantity that dictates the projection of the time derivative of the loss function onto the sphere of normalized quantum states. Finally, we contrast the above dynamics of the residue vector $\mlvec{r}$ for feedforward ReLU networks, given by $\dot{\mlvec{r}} = -\eta \tilde{\mlvec{K}}\mlvec{r}$, where $\tilde{\mlvec{K}}$ is the parameterized Neural Tangent Kernel. In the limit of an infinite number of parameters, $\tilde{\mlvec{K}}$ tends to a constant limit, thus the asymptotic dynamics is kernel training.
    \item \textbf{Convergence of idealized dynamics under small perturbations.} The deviation of $\mlvec{Y}$ from its expectation cannot be zero in general, so we must establish that a small perturbation to RGF on the sphere maintains the property of linear convergence. The following lemma (analogous to Lemma F.1 in \cite{arora2019exact}) states that, if $\mlvec{Y}(t)$ is close to
    $\mlvec{Y}^{\star}$ through out the optimization, then the gradient flow is guaranteed to find the ground state efficiently
    \begin{restatable}[VQE Perturbation Lemma]{lemma}{vqeperturbation}
      \label{lm:vqe-perturbation}
      Conditioned on the event that the output state at initialization
      $\ket{\Psi(0)}$ has non-negligible overlap with the target ground state
      $\ket{\Psi^{\star}}$, such that $|\bra{\Psi(0)}\Psi^{\star}\>|^2\geq \Omega(\frac{1}{d})$, if for all $t\geq 0$,
      $\opnorm{\mlvec{Y}(t) - (\mlvec{W} - \frac{1}{d}\mlvec{I}_{d^2\times d^2})} \leq O(\frac{\lambda_{2} -\lambda_{1}}{\lambda_{d}-\lambda_{1}}\cdot \frac{1}{d})$,
    then under the dynamics
      $\frac{d}{dt}\ket{\Psi(t)} = -\tr_{1}(\mlvec{Y} ([\mlvec{M}, \newketbra{\Psi(t)}{\Psi(t)}]\otimes \mlvec{I}_{d\times d})) \ket{\Psi(t)}$,
    the output state will converge to the ground state as
      $1 - |\bra{\Psi(t)}\Psi^\star\>|^2\leq \exp(-c \frac{\lambda_{2}-\lambda_{1}}{\log d} t)$
      for some constant $c$.
    \end{restatable}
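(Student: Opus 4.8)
The plan is to exhibit the stated flow as a small perturbation of Riemannian Gradient Flow (RGF) on the unit sphere, whose linear convergence to the ground eigenvector $\ket{\Psi^{\star}}=\ket{v_{1}}$ is classical \cite{xu2018convergence}, and then to show that the hypothesis $\opnorm{\mlvec{Y}-(\mlvec{W}-\frac{1}{d}\mlvec{I})}\leq\epsilon$ makes the perturbation too weak to spoil that convergence. Writing $\langle\mlvec{M}\rangle:=\bra{\Psi}\mlvec{M}\ket{\Psi}$, $\mlvec{A}:=[\mlvec{M},\newketbra{\Psi}{\Psi}]$ and $\ket{\xi}:=(\mlvec{M}-\langle\mlvec{M}\rangle)\ket{\Psi}$, a direct computation gives $\mlvec{A}\ket{\Psi}=\ket{\xi}$ and $\mlvec{A}=\ket{\xi}\bra{\Psi}-\ket{\Psi}\bra{\xi}$, an anti-Hermitian operator of rank at most two with $\ket{\xi}\perp\ket{\Psi}$ and $\snorm{\xi}^{2}=\bra{\Psi}(\mlvec{M}-\langle\mlvec{M}\rangle)^{2}\ket{\Psi}$. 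First I would invoke the swap identities $\tr_{1}(\mlvec{W}(\mlvec{A}\otimes\mlvec{I}))=\mlvec{A}$ and $\tr_{1}(\mlvec{I}(\mlvec{A}\otimes\mlvec{I}))=\tr(\mlvec{A})\mlvec{I}=0$ (as $\tr\mlvec{A}=0$), so that $\tr_{1}((\mlvec{W}-\frac{1}{d}\mlvec{I})(\mlvec{A}\otimes\mlvec{I}))=\mlvec{A}$ and the dynamics splits as
\[
\frac{d}{dt}\ket{\Psi}=-(\mlvec{M}-\langle\mlvec{M}\rangle)\ket{\Psi}+\ket{\delta},\qquad \ket{\delta}:=-\tr_{1}(\mlvec{D}(\mlvec{A}\otimes\mlvec{I}))\ket{\Psi},\quad \mlvec{D}:=\mlvec{Y}-(\mlvec{W}-\tfrac{1}{d}\mlvec{I}),
\]
where the first term is exactly RGF on the sphere and $\opnorm{\mlvec{D}}\leq\epsilon=O(\frac{\lambda_{2}-\lambda_{1}}{\lambda_{d}-\lambda_{1}}\cdot\frac{1}{d})$ by hypothesis.

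Because $\mlvec{Y}$ is Hermitian, $\operatorname{Re}\bra{\Psi}\tr_{1}(\mlvec{Y}(\mlvec{A}\otimes\mlvec{I}))\ket{\Psi}=0$, so the flow preserves the norm and I may track the overlap $g(t):=|\langle v_{1}|\Psi(t)\rangle|^{2}$ and its complement $f:=1-g$. Expanding $\ket{\Psi}$ in the eigenbasis of $\mlvec{M}$, the RGF term alone gives $\dot{c}_{i}=-(\lambda_{i}-\langle\mlvec{M}\rangle)c_{i}$, whence
\[
\frac{d}{dt}g=2(\langle\mlvec{M}\rangle-\lambda_{1})\,g+R,\qquad R:=2\operatorname{Re}\big(\bar{c}_{1}\,\langle v_{1}|\delta\rangle\big).
\]
The driving coefficient is the engine of convergence and is controlled below by the spectral gap: $\langle\mlvec{M}\rangle-\lambda_{1}=\sum_{i\geq2}(\lambda_{i}-\lambda_{1})|c_{i}|^{2}\geq(\lambda_{2}-\lambda_{1})f$.

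The crux is to bound $R$ without the factor of $d$ that a naive estimate $\opnorm{\tr_{1}(\mlvec{D}(\mlvec{A}\otimes\mlvec{I}))}\leq d\,\opnorm{\mlvec{D}}\,\opnorm{\mlvec{A}}$ would incur — such a loss would force $\epsilon=O(1/d^{2})$. Since only the single matrix element $\langle v_{1}|\delta\rangle$ enters $R$, I would expand it using the rank-two form of $\mlvec{A}$ as
\[
\langle v_{1}|\tr_{1}(\mlvec{D}(\mlvec{A}\otimes\mlvec{I}))|\Psi\rangle=\langle\Psi\otimes v_{1}|\,\mlvec{D}\,|\xi\otimes\Psi\rangle-\langle\xi\otimes v_{1}|\,\mlvec{D}\,|\Psi\otimes\Psi\rangle,
\]
i.e. two inner products between unit-norm product states. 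Cauchy–Schwarz together with $\opnorm{\mlvec{D}}\leq\epsilon$ then yields $|\langle v_{1}|\delta\rangle|\leq2\epsilon\snorm{\xi}$, with no factor of $d$, and hence $|R|\leq4\epsilon\sqrt{g}\,\snorm{\xi}\leq4\epsilon(\lambda_{d}-\lambda_{1})\sqrt{gf}$ using $\snorm{\xi}\leq(\lambda_{d}-\lambda_{1})\sqrt{f}$. With the stated $\epsilon$ this is $O((\lambda_{2}-\lambda_{1})\sqrt{gf}/d)$, which is dominated by the driving term $2(\lambda_{2}-\lambda_{1})fg$ exactly when $\sqrt{gf}\gtrsim1/d$.

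Combining the two estimates gives $\frac{d}{dt}\log\frac{g}{1-g}=\frac{2(\langle\mlvec{M}\rangle-\lambda_{1})}{f}+\frac{R}{gf}\geq(\lambda_{2}-\lambda_{1})$ throughout the regime $\sqrt{gf}\gtrsim1/d$, so $g/(1-g)$ grows at least exponentially from its initial value $\Omega(1/d)$. This produces a two-phase picture: a Phase~1 of duration $O(\frac{\log d}{\lambda_{2}-\lambda_{1}})$ in which the overlap climbs from $\Omega(1/d)$ to a constant — the sole source of the $\log d$ factor in the advertised rate — followed by genuine exponential decay of $f$ down to error $O(1/d^{2})$; absorbing the prefactor $g(0)^{-1}=O(d)$ into the exponent over the full horizon yields $1-|\langle\Psi(t)|\Psi^{\star}\rangle|^{2}\leq\exp(-c\frac{\lambda_{2}-\lambda_{1}}{\log d}t)$. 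I expect the main obstacle to be the perturbation control of the third step: securing the factor-$d$ saving through the product-state expansion rather than operator-norm submultiplicativity, and then verifying that $|R|$ stays below the driving term uniformly along the trajectory — both at the small-overlap start, where $\sqrt{gf}\sim1/\sqrt{d}$, and in the fine-convergence tail, where $\sqrt{gf}\sim\sqrt{f}$ fixes the $1/d$ scale of the tolerance $\epsilon$. A secondary, purely book-keeping difficulty is extracting the clean rate $\frac{\lambda_{2}-\lambda_{1}}{\log d}$ from this two-phase behaviour.
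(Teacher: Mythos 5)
Your skeleton is the same as the paper's: split the generator into exact Riemannian gradient flow plus $\ket{\delta}=-\tr_{1}(\mlvec{D}(\mlvec{A}\otimes\mlvec{I}_{d\times d}))\ket{\Psi}$, track $g=|\langle v_{1}|\Psi\rangle|^{2}$, lower-bound the drive by $2(\lambda_{2}-\lambda_{1})fg$, and control the perturbation; your swap identities, the rank-two form $\mlvec{A}=\ket{\xi}\bra{\Psi}-\ket{\Psi}\bra{\xi}$, and the Cauchy--Schwarz bound $|\langle v_{1}|\delta\rangle|\leq2\epsilon\snorm{\xi}$ are all correct. The genuine gap is in the tail: your estimate $|R|\leq4\epsilon(\lambda_{d}-\lambda_{1})\sqrt{gf}$ carries only \emph{one} factor of $\sqrt{f}$, so the inequality $\frac{d}{dt}\log\frac{g}{1-g}\geq\lambda_{2}-\lambda_{1}$ is available only while $\sqrt{gf}\gtrsim1/d$, and the argument stalls at error $f\sim1/d^{2}$ --- as you concede. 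For all your bound knows, an adversarial $\mlvec{D}(t)$ with $\opnorm{\mlvec{D}(t)}\leq\epsilon$ holds $f$ at $\Theta(1/d^{2})$ forever. But the lemma asserts $1-|\langle\Psi(t)|\Psi^{\star}\rangle|^{2}\leq\exp(-c\frac{\lambda_{2}-\lambda_{1}}{\log d}t)$ for \emph{all} $t\geq0$, whose right-hand side tends to zero; once $t\gtrsim(\log d)^{2}/(\lambda_{2}-\lambda_{1})$ the claim demands error below $1/d^{2}$, which your analysis cannot deliver. ``Absorbing the prefactor into the exponent'' fixes early-time book-keeping (where you are at the same level of rigor as the paper itself), not the tail --- and the tail is exactly what Theorem~\ref{thm:vqe-convergence-full} needs, since it invokes the lemma at arbitrary target error $\epsilon$.

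The missing $\sqrt{f}$ is a genuine cancellation that the step $|R|\leq2|c_{1}|\cdot|\langle v_{1}|\delta\rangle|$ discards. Setting $\mlvec{B}:=[\newketbra{\Psi^{\star}}{\Psi^{\star}},\newketbra{\Psi}{\Psi}]$, one has the identity $R=\tr\big(\mlvec{D}(\mlvec{A}\otimes\mlvec{B})\big)$, and this is how the paper argues (Section~\ref{subsec:app_proof_vqe_perturbation}): by H\"older, $|R|\leq\opnorm{\mlvec{D}}\trnorm{\mlvec{A}}\trnorm{\mlvec{B}}$, and $\trnorm{\mlvec{B}}=2\sqrt{g(1-g)}$ (Lemma~\ref{lm:technical_commutation1}) supplies a second factor vanishing as $g\to1$. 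The paper's perturbation bound is thus proportional to $\sqrt{g}\,(1-g)$ --- the same power of $(1-g)$ as the drive --- so the perturbation-to-drive ratio is uniformly below a constant for all $t$ once $g\geq\Omega(1/d)$ (which is then maintained, $g$ being non-decreasing), and the decay never stalls. Your loss is not an artifact of bookkeeping: maximizing $\tr(\mlvec{D}(\mlvec{A}\otimes\mlvec{B}))$ over Hermitian $\mlvec{D}$ with $\opnorm{\mlvec{D}}\leq\epsilon$ gives exactly $\epsilon\trnorm{\mlvec{A}}\trnorm{\mlvec{B}}\asymp\epsilon(\lambda_{d}-\lambda_{1})\sqrt{g}\,f$, i.e.\ the worst $\mlvec{D}$ for $|\langle v_{1}|\delta\rangle|$ makes $\bar{c}_{1}\langle v_{1}|\delta\rangle$ essentially imaginary and contributes nothing to $\dot g$. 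The repair costs one line and keeps your best idea: your rank-two form gives $\trnorm{\mlvec{A}}=2\snorm{\xi}\leq2(\lambda_{d}-\lambda_{1})\sqrt{f}$, whence $|R|\leq4\epsilon(\lambda_{d}-\lambda_{1})\sqrt{g}\,f$ with no dimension factor at all --- in fact sharper than the paper's own estimate, which pays an extra $\sqrt{d}$ through $\trnorm{\mlvec{A}}\leq\sqrt{d}\,\fronorm{\mlvec{A}}$, and your version would even tolerate $\epsilon=O\big(\tfrac{\lambda_{2}-\lambda_{1}}{(\lambda_{d}-\lambda_{1})\sqrt{d}}\big)$. With that bound in place, your two-phase ODE analysis goes through for all $t\geq0$.
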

    \item \textbf{Concentration to idealized dynamics throughout training.} In order to show that the perturbations from RGF over the sphere are small, we leverage the concentration properties of $\mlvec{Y}$ arising from the large number of parameters used in order to bound the deviation from expectation by a quantity decreasing in $p$. We first show that concentration holds at initialization
    \begin{restatable} [Concentration at initialization for VQE]{lemma}{vqeconcentrationinit}
      \label{lm:vqe-concentration-init}
      Over the randomness of ansatz initialization (i.e. for
      $\{\mlvec{U}_{l}\}_{l=1}^{p}$ sampled $i.i.d.$ with respect to
      the Haar measure), for any initial $\mlvec\theta(0)$, with probability $1 - \delta$:
      \begin{align}
      \opnorm{\YY(\mlvec\theta(0)) - \YY^{\star}} \leq \frac{1}{\sqrt{p}}\cdot \frac{2\opnorm{\mlvec{H}}^{2}}{Z}\sqrt{\log\frac{d^{2}}{\delta}}.
      \end{align}
    \end{restatable}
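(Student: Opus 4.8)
The plan is to turn the sum defining $\mlvec{Y}(\mlvec\theta(0)) = \tfrac{1}{pZ}\sum_{l=1}^{p}\mlvec{H}_l^{\otimes 2}$ into a bounded \emph{matrix martingale} and apply a matrix concentration inequality. The only subtlety is that the summands are not independent, since the Haar unitaries overlap across layers; the key observation will be that they nevertheless form a martingale-difference sequence.

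First I would record the single-layer Haar second moment. Because each $\mlvec{U}_l$ is Haar on $SU(d)$ and the conjugation $\mlvec{A}\mapsto\mlvec{U}\mlvec{A}\mlvec{U}^{\dagger}$ is insensitive to a global phase, the standard two-design computation---solving $c_1 d^2 + c_2 d = (\tr\mlvec{H})^2 = 0$ and $c_1 d + c_2 d^2 = \tr(\mlvec{H}^2) = (d^2-1)Z$ for the coefficients of $\mlvec{I}$ and $\mlvec{W}$ spanning the commutant---gives $\EXP_{\mlvec{U}}[(\mlvec{U}\mlvec{H}\mlvec{U}^{\dagger})^{\otimes 2}] = Z\mlvec{W} - \tfrac{Z}{d}\mlvec{I}$. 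Summing over $l$ recovers $\EXP[\mlvec{Y}(\mlvec\theta(0))] = \mlvec{W} - \tfrac{1}{d}\mlvec{I} = \mlvec{Y}^{\star}$; writing $\mlvec{X}_l := \mlvec{H}_l^{\otimes 2} - (Z\mlvec{W} - \tfrac{Z}{d}\mlvec{I})$ we then have $\mlvec{Y}(\mlvec\theta(0)) - \mlvec{Y}^{\star} = \tfrac{1}{pZ}\sum_{l=1}^{p}\mlvec{X}_l$.

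The crux is the martingale structure. Using $\exp(-i\theta_l\mlvec{H})\mlvec{H}\exp(i\theta_l\mlvec{H}) = \mlvec{H}$, the layer rotations cancel and $\mlvec{H}_l = \mlvec{U}_{l+1:p}\,(\mlvec{U}_l\mlvec{H}\mlvec{U}_l^{\dagger})\,\mlvec{U}_{l+1:p}^{\dagger}$. I would order the filtration so that $\mlvec{U}_p, \mlvec{U}_{p-1}, \dots$ are revealed in turn; conditioned on $\mlvec{U}_{l+1},\dots,\mlvec{U}_p$ the factor $\mlvec{U}_{l+1:p}$ is fixed while $\mlvec{U}_l$ remains Haar, so the single-layer moment yields $\EXP[\mlvec{H}_l^{\otimes 2}\mid \mlvec{U}_{l+1},\dots,\mlvec{U}_p] = \mlvec{U}_{l+1:p}^{\otimes 2}(Z\mlvec{W} - \tfrac{Z}{d}\mlvec{I})(\mlvec{U}_{l+1:p}^{\dagger})^{\otimes 2} = Z\mlvec{W} - \tfrac{Z}{d}\mlvec{I}$, the last equality because $\mlvec{U}^{\otimes 2}$ commutes with both $\mlvec{W}$ and $\mlvec{I}$. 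Hence $\EXP[\mlvec{X}_l\mid\mlvec{U}_{l+1},\dots,\mlvec{U}_p]=0$, so $\{\mlvec{X}_l\}$ is a Hermitian matrix martingale-difference sequence (equivalently, the $\mlvec{X}_l$ are exactly the Doob differences of $\sum_l\mlvec{X}_l$). This step is where I expect the main difficulty: the overlapping Haar unitaries are precisely what rules out a naive sum-of-independent-matrices bound, and the resolution is this conditional Haar average together with the invariance of $\mlvec{W}$ and $\mlvec{I}$ under $\mlvec{U}^{\otimes 2}$-conjugation. Everything after this is quantitative bookkeeping.

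Finally I would bound the differences and invoke concentration. Each $\mlvec{X}_l$ is a $d^2\times d^2$ Hermitian with $\opnorm{\mlvec{X}_l}\le \opnorm{\mlvec{H}_l^{\otimes 2}} + \opnorm{Z\mlvec{W} - \tfrac{Z}{d}\mlvec{I}} = \opnorm{\mlvec{H}}^2 + Z(1+\tfrac{1}{d})$, which is $(1+O(1/d))\opnorm{\mlvec{H}}^2$ since $Z = \tr(\mlvec{H}^2)/(d^2-1) = O(\opnorm{\mlvec{H}}^2/d)$; thus $\mlvec{X}_l^2 \preceq b^2\mlvec{I}$ with $b=(1+o(1))\opnorm{\mlvec{H}}^2$ and variance proxy $\sigma^2 = \opnorm{\sum_l \mlvec{X}_l^2}\le pb^2$. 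A matrix martingale concentration inequality (matrix Azuma, or matrix Freedman for the sharpest constant) gives $\Prob[\opnorm{\sum_l\mlvec{X}_l}\ge t] \le 2d^2\exp(-t^2/(8\sigma^2))$; choosing $t$ so the right-hand side equals $\delta$ gives $\opnorm{\sum_l\mlvec{X}_l} = O(\opnorm{\mlvec{H}}^2\sqrt{p\log(d^2/\delta)})$ with probability $1-\delta$. Dividing by $pZ$ yields $\opnorm{\mlvec{Y}(\mlvec\theta(0)) - \mlvec{Y}^{\star}} = O\!\big(\tfrac{1}{\sqrt{p}}\tfrac{\opnorm{\mlvec{H}}^2}{Z}\sqrt{\log(d^2/\delta)}\big)$, matching the stated bound (the explicit constant $2$ following from the precise form of the inequality used). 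Since none of these estimates depend on $\mlvec\theta(0)$---the layer rotations commuted away---the conclusion holds for every initialization, as claimed.
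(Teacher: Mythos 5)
Your proposal is correct --- the two-design moment computation, the centering, the increment bound, and the $\mlvec\theta(0)$-independence all hold --- but it takes a genuinely different route from the paper on the one probabilistic step that matters, and it does so because of a factual misreading of the dependence structure. The paper defines the same centered per-layer terms and then observes that they are \emph{independent}: by left/right invariance of the Haar measure, the conditional law of $\mlvec{U}_{l:p}$ given $\mlvec{U}_{l+1:p},\dots,\mlvec{U}_{p:p}$ is Haar \emph{whatever} the conditioning values are, so the products $\{\mlvec{U}_{l:p}\}_{l=1}^{p}$ are jointly i.i.d.\ Haar, the matrices $\mlvec{H}_{l}^{\otimes 2}$ are genuinely independent, and the matrix Hoeffding inequality for independent sums (Tropp, Thm.~1.3) finishes the proof. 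Your claim that ``the summands are not independent, since the Haar unitaries overlap across layers'' is therefore wrong: sharing underlying random variables does not prevent the joint distribution from factorizing, and this factorization is exactly what the paper means by ``the Haar random unitary removes all the correlation.'' Your proof is unharmed by the misconception because the martingale-difference property you establish (constant conditional expectation, via the commutation of $\mlvec{U}^{\otimes 2}$ with $\mlvec{W}$ and $\mlvec{I}$) is a strictly weaker consequence of the same invariance --- indeed, your conditional-expectation computation is the first half of the independence argument; one only needs to note that the conditional \emph{law}, not merely the conditional mean, of the $\mlvec{U}_{l}$-conjugation is constant. As for what each approach buys: your matrix Azuma/Freedman route is self-contained and would generalize to settings where only conditional means are pinned down, at the cost of slightly larger constants; the paper's independence observation plus matrix Hoeffding is shorter. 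Neither write-up tracks the constant exactly (both matrix Hoeffding and matrix Azuma carry a $t^{2}/8\sigma^{2}$ exponent and a two-sided union bound over the $2d^{2}$ eigenvalue tails), so both arguments honestly deliver $\opnorm{\YY(\mlvec\theta(0)) - \YY^{\star}} \leq O\big(\frac{1}{\sqrt{p}}\cdot\frac{\opnorm{\mlvec{H}}^{2}}{Z}\sqrt{\log\frac{d^{2}}{\delta}}\big)$ rather than the literal constant $2$; this looseness is shared with the paper's own proof and is immaterial downstream.
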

    We further show that the concentration is maintained throughout the evolution of the dynamics as long as exponential convergence holds
    \begin{restatable}[Concentration during training (time dependent)]{lemma}{vqeconcentrationtime}
    \label{lm:vqe-concentration-time}
      Suppose that under learning rate $\eta=\frac{1}{p Z(\mlvec{H},d)}$, for all $0 \le t \le T$,
      $1 - |\braket{\Psi|\Psi^{\star}}|^{2} \leq \exp(-c\frac{(\lambda_{2}-\lambda_{1})}{\log d}t)$,
      then with probability $\geq 1-\delta$, for all $0\leq t\leq T$:
      \begin{align}
     \opnorm{\mlvec{Y}(\mlvec{\theta}(t)) - \mlvec{Y}(\mlvec{\theta}(0))} \leq C_3\left(\frac{T}{\sqrt{p}}\cdot\sqrt{2}(\lambda_{d}-\lambda_{1})\cdot\sqrt{\frac{d^{2}-1}{Z(\mlvec{H},d)^3}}\left(1 + \sqrt{\log\left(\frac{2d}{\delta}\right)}\right)\right),
      \end{align}
      where $C_3$ is a constant.
    \end{restatable}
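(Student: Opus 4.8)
The plan is to bound the drift of the parameterized projection operator along the (assumed convergent) trajectory by treating the deviation $\mlvec{Y}(\mlvec{\theta}(t)) - \mlvec{Y}(\mlvec{\theta}(0))$ as a \emph{random field indexed by time} and controlling its supremum by a chaining argument, rather than integrating pointwise concentration estimates. First I would differentiate $\mlvec{Y}$ along the gradient flow. Since $\mlvec{Y}(\mlvec{\theta}) = \frac{1}{pZ(\mlvec{H},d)}\sum_{l=1}^{p}\mlvec{H}_{l}^{\otimes 2}$ with $\mlvec{H}_{l} = \mlvec{V}_{l}\mlvec{H}\mlvec{V}_{l}^{\dagger}$, the chain rule gives $\frac{d}{dt}\mlvec{Y} = \sum_{k}\partial_{\theta_{k}}\mlvec{Y}\cdot\dot{\theta}_{k}$ with $\dot{\theta}_{k} = -\eta\,\partial_{\theta_{k}}L$, which decomposes into a sum over the $p$ layers of commutator-type terms. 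Invoking Lemma~\ref{lm:vqe-dynamics} and the gradient-flow relations, each per-layer contribution can be expressed through the driving commutator $[\mlvec{M},\newketbra{\Psi(t)}{\Psi(t)}]$, whose operator norm is bounded by $O(\lambda_{d}-\lambda_{1})$; this is precisely where the exponential-convergence hypothesis $1 - \abs{\braket{\Psi|\Psi^{\star}}}^{2}\leq \exp(-c(\lambda_{2}-\lambda_{1})t/\log d)$ enters, as it pins the output state to a controlled trajectory and lets me uniformly bound the driving force over $[0,T]$. The normalization factors $Z(\mlvec{H},d)$ and $d^{2}-1$ appear through the definition of $\mlvec{Y}$ and the Haar second-moment identities.

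Second, I would fix a test direction $\mlvec{u}$ and study the scalar process $X_{t} := \mlvec{u}^{\dagger}\bigl(\mlvec{Y}(\mlvec{\theta}(t)) - \mlvec{Y}(\mlvec{\theta}(0))\bigr)\mlvec{u}$. The crucial structural fact is that $\EXP[\mlvec{Y}(\mlvec{\theta})] = \mlvec{Y}^{\star}$ for \emph{every} fixed $\mlvec{\theta}$, so the increments are centered, and that $\mlvec{Y}$ is a sum of $p$ contributions built from the $i.i.d.$ Haar unitaries $\{\mlvec{U}_{l}\}$. The heart of the argument is to establish that the increment $X_{t} - X_{s}$ is subgaussian with scale $\norm{X_{t}-X_{s}}_{\psi_{2}} \lesssim \frac{1}{\sqrt{p}}(\lambda_{d}-\lambda_{1})\sqrt{\tfrac{d^{2}-1}{Z(\mlvec{H},d)^{3}}}\,\abs{t-s}$; that is, a \emph{correlation bound between the field values at different times}. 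This follows from the velocity estimate of the first step (giving the $(\lambda_{d}-\lambda_{1})$ and $\abs{t-s}$ factors together with Lipschitzness in $t$) combined with a concentration inequality for the sum of $p$ independent Haar-driven terms (giving the $1/\sqrt{p}$). The operator-norm version is then recovered with only a \emph{logarithmic} dependence on the matrix dimension, either by a matrix concentration bound applied at each scale or by an $\epsilon$-net over directions, which is the source of the $\log(2d/\delta)$ factor.

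Third, with the subgaussian increment metric in hand, I would apply Dudley's integral inequality over the interval $[0,T]$: because the canonical metric is Lipschitz in $t$, the covering number at scale $\epsilon$ behaves like $(\text{diameter})/\epsilon$ and the entropy integral evaluates to a quantity \emph{linear} in $T$, yielding $\EXP\sup_{t\in[0,T]}\opnorm{\mlvec{Y}(t)-\mlvec{Y}(0)} \lesssim \frac{T}{\sqrt{p}}(\lambda_{d}-\lambda_{1})\sqrt{(d^{2}-1)/Z(\mlvec{H},d)^{3}}$, which accounts for the leading $1$ in the factor $(1+\sqrt{\log(2d/\delta)})$. A tail bound for the supremum of a subgaussian process (concentration of the supremum about its mean) then upgrades this to the stated high-probability estimate, contributing the $\sqrt{\log(2d/\delta)}$ term and the failure probability $\delta$. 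I expect the main obstacle to be the second step: establishing the cross-time correlation bound rigorously. The difficulty is twofold — the gradient-flow trajectory $\mlvec{\theta}(t)$ is itself a function of the \emph{same} random unitaries $\{\mlvec{U}_{l}\}$ that drive the concentration, so the process is not a clean subgaussian field and the trajectory dependence must be disentangled (e.g.\ by bounding the reachable parameter region deterministically from the convergence hypothesis); and the per-layer increments of $\mlvec{Y}$ must be computed and bounded uniformly in $t$. This is exactly the point where pointwise analysis of gradients and smoothness at individual times is insufficient, and where the random-field/chaining treatment is essential.
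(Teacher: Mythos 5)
Your proposal is essentially the paper's own proof: the paper likewise treats $\mlvec{Y}(\mlvec{\theta}(t))$ as a random field indexed by time, uses the convergence hypothesis to prove a slow-varying-parameter bound (Lemma~\ref{lm:vqe-slow-theta-new}, giving $\|\mlvec{\theta}(t_2)-\mlvec{\theta}(t_1)\|_{2}\leq \frac{1}{\sqrt{p}}\sqrt{2}(\lambda_d-\lambda_1)\sqrt{\tfrac{d^2-1}{Z(\mlvec{H},d)}}\,|t_2-t_1|$), converts this into sub-gaussian increments of $\mlvec{Y}$ in exactly the metric you write down (Lemma~\ref{lm:vqe-expected-deviation}), and then applies the matrix form of Dudley's integral inequality (Lemma~\ref{lm:dudley-subgaussian-matrix}) over $[0,T]$ to obtain the stated $\frac{T}{\sqrt{p}}(1+\sqrt{\log(2d/\delta)})$ bound. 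The only implementation-level caveat is in your second step: the $p$ summands of the increment $\mlvec{Y}(\mlvec{\theta}(t))-\mlvec{Y}(\mlvec{\theta}(s))$ are \emph{not} independent (each depends on all later unitaries), so the paper instead uses Haar invariance to identify the increment in distribution with $\mlvec{Y}(\delta\mlvec{\theta})-\mlvec{Y}(\mlvec{0})$ and applies McDiarmid's bounded-difference inequality over the $p$ independent underlying unitaries, with the per-coordinate influence $O(|\delta\theta_k|\opnorm{\mlvec{H}}^{3}/Z)$ supplied by the Taylor-expansion estimate (Lemma~\ref{lm:helper-taylor}) — precisely the disentangling of trajectory randomness you anticipated would be the main obstacle.
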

    \item \textbf{(Main Result) Sufficient conditions for convergence.} The previously established conditions on concentration and convergence under perturbations are combined to yield a sufficient condition on the degree of over-parameterization required to ensure that a VQE converges to its ground state under gradient flow.
    \begin{restatable}[Exponential convergence for VQE]{theorem}{vqeconcentrationmain}
    \label{thm:vqe-convergence-full}
    Consider a VQE system in a $d$-dimensional Hilbert space (with architecture as described in Definition~\ref{def:frozen-ansatz}) with a target Hamiltonian $\mlvec{M}$ with eigenvalues $\lambda_1 \le \lambda_2 \dots \le \lambda_d$, and an ansatz $\ansatzname$ with a generating Hamiltonian $\mlvec{H}$. Let the number of parameters $p$ be greater than or equal to a threshold $p_{\mathrm{th}} = O\left(\left(\frac{\lambda_d - \lambda_1}{\lambda_2 - \lambda_1}\right)^4,\frac{d^4}{Z(\mlvec{H},d)^{3}},\log\left(d\right)\right)$, and the ground state of the system by $|\Psi^\star\>$. Then gradient flow training of the VQE system with a learning rate of $\eta = \frac{1}{pZ(\mlvec{H},d)}$, converges to the ground state with error $\epsilon = 1 - |\<\Psi(T_{\epsilon})|\Psi^\star\>|^2$ in time $T_\epsilon = O\big(\frac{\log d}{\lambda_2 - \lambda_1}\log{\frac{1}{\epsilon}}\big)$ with failure probability at most $0.99$. The success probability can be boosted to $1 - \delta$ for any $0 \le \delta \le 1$ using $O\left(\log\left(\frac{1}{\delta}\right)\right)$ repetitions, with the parameters randomly reinitialized each time.
    \end{restatable}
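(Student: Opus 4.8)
The plan is to follow the bootstrapping scheme outlined in Section~\ref{sec:classical-overparam}, stitching the four structural lemmas into a single self-consistent argument. Write $\kappa := (\lambda_d - \lambda_1)/(\lambda_2 - \lambda_1)$ and let $\tau := O\!\left(\frac{1}{\kappa d}\right)$ denote the operator-norm radius in the hypothesis of Lemma~\ref{lm:vqe-perturbation}, i.e.\ the tolerance within which $\mlvec{Y}$ must stay around $\mlvec{Y}^\star = \mlvec{W} - \frac{1}{d}\mlvec{I}$ for the idealized Riemannian-gradient-flow dynamics to converge exponentially. The whole difficulty is that the two facts I want to chain are mutually dependent: Lemma~\ref{lm:vqe-perturbation} turns \emph{closeness of $\mlvec{Y}$ to $\mlvec{Y}^\star$} into \emph{exponential convergence}, whereas Lemma~\ref{lm:vqe-concentration-time} turns \emph{exponential convergence} into \emph{closeness of $\mlvec{Y}(t)$ to $\mlvec{Y}(0)$}. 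I would break this circularity by a continuity / first-exit-time argument carried out on a single high-probability event.

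Concretely, I would fix the good event $\mathcal{E} = \mathcal{E}_0 \cap \mathcal{E}_1 \cap \mathcal{E}_2$, where $\mathcal{E}_0$ is the event that the Haar-random initial state has overlap $|\langle\Psi(0)|\Psi^\star\rangle|^2 \ge \Omega(1/d)$, $\mathcal{E}_1$ is the initialization-concentration event of Lemma~\ref{lm:vqe-concentration-init}, and $\mathcal{E}_2$ is the event from Lemma~\ref{lm:vqe-concentration-time} that the random field $\mlvec{Y}(\mlvec\theta)$ deviates little along \emph{every} exponentially converging trajectory up to time $T_\epsilon$. On $\mathcal{E}$ the argument becomes deterministic. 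Define the first exit time $T^\star := \inf\{t : \opnorm{\mlvec{Y}(t) - \mlvec{Y}^\star} > \tau\}$; on $\mathcal{E}_1$ one has $\opnorm{\mlvec{Y}(0)-\mlvec{Y}^\star} \le \tau/3$, so $T^\star > 0$. Assume for contradiction $T^\star < T_\epsilon$. On $[0, T^\star]$ the perturbation hypothesis holds, so Lemma~\ref{lm:vqe-perturbation} applied on that interval (together with $\mathcal{E}_0$) gives exponential convergence there; feeding this into $\mathcal{E}_2$ yields $\opnorm{\mlvec{Y}(t)-\mlvec{Y}(0)} \le \tau/3$ for all $t \le T^\star$. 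The triangle inequality then forces $\opnorm{\mlvec{Y}(T^\star)-\mlvec{Y}^\star} \le 2\tau/3 < \tau$, contradicting $\opnorm{\mlvec{Y}(T^\star)-\mlvec{Y}^\star} = \tau$, which holds at the boundary by continuity of the ODE solution. Hence $T^\star \ge T_\epsilon$, so $\mlvec{Y}$ stays within tolerance throughout, exponential convergence holds on all of $[0, T_\epsilon]$, and plugging $t = T_\epsilon = O\!\big(\frac{\log d}{\lambda_2-\lambda_1}\log\frac{1}{\epsilon}\big)$ into the rate of Lemma~\ref{lm:vqe-perturbation} yields error $\epsilon$.

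It remains to choose $p$ so that both concentration contributions sit below $\tau/3$. Requiring the bound of Lemma~\ref{lm:vqe-concentration-init} to be $\le \tau/3$ forces $p \gtrsim \frac{\opnorm{\mlvec{H}}^4}{Z^2}\,\kappa^2 d^2\log(d^2/\delta)$, while requiring the bound of Lemma~\ref{lm:vqe-concentration-time} at $T = T_\epsilon$ to be $\le \tau/3$ forces (using $T_\epsilon(\lambda_d-\lambda_1) = O(\kappa\log d\log\frac1\epsilon)$ and $\tau^{-1} = O(\kappa d)$) the stronger condition $p \gtrsim \kappa^4\,\frac{d^4}{Z^3}\,\mathrm{polylog}(d,\tfrac1\epsilon,\tfrac1\delta)$. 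The latter dominates and reproduces the claimed threshold $p_{\mathrm{th}} = O\!\big(\kappa^4,\frac{d^4}{Z^3},\log d\big)$. For the probability bookkeeping I would note that for a Haar-random pure state $|\langle\Psi(0)|\Psi^\star\rangle|^2$ is $\mathrm{Beta}(1,d-1)$-distributed, so $\Pr[\mathcal{E}_0] = (1-\Theta(1/d))^{d-1} = \Omega(1)$ is a constant; taking $\delta$ a small constant in $\mathcal{E}_1,\mathcal{E}_2$ and union-bounding gives a constant overall success probability (hence failure $\le 0.99$), which $O(\log\frac1\delta)$ independent reinitializations amplify to $1-\delta$.

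The main obstacle I expect is not any single estimate but the \emph{logical interlocking} of Lemmas~\ref{lm:vqe-perturbation} and~\ref{lm:vqe-concentration-time}: the stopping time $T^\star$ is itself a function of the random initialization, so a merely \emph{pointwise} (fixed-$t$) concentration bound would be useless here—one cannot re-invoke a per-time high-probability statement at a data-dependent random time without paying a union bound over a continuum. This is exactly why $\mathcal{E}_2$ must be a \emph{uniform} statement over the entire trajectory, which is what the random-field / Dudley-integral treatment behind Lemma~\ref{lm:vqe-concentration-time} buys us; arranging the continuity argument and the uniform field bound so that the deviation invoked at $T^\star$ is already contained in the single event $\mathcal{E}_2$ is the delicate part.
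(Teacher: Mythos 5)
Your proposal is correct and follows essentially the same route as the paper's proof: establish the $\Omega(1/d)$ initial overlap for the Haar-random state, choose $p$ so that the bounds of Lemma~\ref{lm:vqe-concentration-init} and Lemma~\ref{lm:vqe-concentration-time} (with $T = T_\epsilon$) fall below the tolerance of Lemma~\ref{lm:vqe-perturbation}, and break the mutual dependence between convergence and concentration by a continuity argument — the paper phrases this as a minimal violation time $t_0$ with a two-sided contradiction, which is the same device as your first-exit time $T^\star$ with the triangle-inequality split. Your observation that the trajectory-concentration event must be uniform over time (which is what the Dudley-integral treatment provides) is precisely the point the paper's Lemma~\ref{lm:vqe-concentration-time} is designed to address, so no gap remains.
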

\end{enumerate}

\paragraph{Technical details and proofs} In the following sections we describe the main technical ideas behind the results outlined previously. The proof of convergence under perturbation, and concentration of initialization follow relatively well known techniques and are postponed to Section~\ref{sec:vqe-proof} in the appendix. The identification of VQE gradient flow (Lemma~\ref{lm:vqe-dynamics}) is proved in Section~\ref{sec:vqe-rgf}. The proof of concentration during training (Lemma~\ref{lm:vqe-concentration-time}) is in Section~\ref*{sec:vqe-concentration}, and the main theorem is proved in Section~\ref*{sec:vqe-main-proof}.

\subsection{Identify VQE with Reimannian gradient flow (RGF) over unit sphere}
\label{sec:vqe-rgf}
\vqedynamics*
\begin{proof}
  We start by calculating the gradient of $\mlvec{U}_{r:p}(\mlvec\theta)$ with
  respect to $\theta_{l}$. (1) For $r > l$, $\mlvec{U}_{r:p}$ is independent of
  $\theta_{l}$, therefore $\partial\mlvec{U}_{r:p}/\partial\theta_{l} = 0$. (2) For $r \leq l$,
  \begin{align}
    \frac{\partial \mlvec{U}_{r:p}}{\partial\theta_{l}}
    = \mlvec{U}_{l:p}(\mlvec{\theta}) (-i\mlvec{H})  \mlvec{U}_{r:l-1}(\mlvec\theta)
    = -i\mlvec{U}_{l:p}\mlvec{H}\mlvec{U}_{l:p}^{\dagger}\mlvec{U}_{r:p}.
  \end{align}
Therefore
  \begin{align}
    \frac{\partial L(\mlvec{\theta})}{\partial\theta_l}
    &=\bra{\Phi}\mlvec{U}^{\dagger}_{0}\frac{\partial}{\partial \theta_{l}}\mlvec{U}^{\dagger}_{1:p}\mlvec{M}\mlvec{U}_{1:p}\mlvec{U}_{0}\ket{\Phi}
      + \bra{\Phi}\mlvec{U}^{\dagger}_{0}\mlvec{U}^{\dagger}_{1:p}\mlvec{M}\frac{\partial}{\partial \theta_{l}}\mlvec{U}_{1:p}\mlvec{U}_{0}\ket{\Phi}\\
    &= \bra{\Phi}\mlvec{U}^{\dagger}_{0}\mlvec{U}^{\dagger}_{1:p}i[\mlvec{U}_{l:p}\mlvec{H}\mlvec{U}^{\dagger}_{l:p},\mlvec{M}]\mlvec{U}_{1:p}\mlvec{U}_{0}\ket{\Phi}\\
    &= \bra{\Psi(t)}i[\mlvec{U}_{l:p}\mlvec{H}\mlvec{U}^{\dagger}_{l:p},\mlvec{M}]\ket{\Psi(t)}\\
    &= i\tr([\mlvec{M}, \newketbra{\Psi(t)}{\Psi(t)}] \mlvec{U}_{l:p}\mlvec{H}\mlvec{U}^{\dagger}_{l:p}).
  \end{align}
  The third equality follows from that fact that
  $\mlvec{U}_{1:p}\mlvec{U}_{0}\ket{\Phi}$ is exactly the output state
  $\ket{\Psi(t)}$. Following the dynamics of gradient flow with learning rate $\eta$:
\begin{align}
  \frac{d\theta_l}{dt} = -\eta\frac{\partial}{\partial \theta_{l}}L(\mlvec\theta)
    = -i\eta\tr([\mlvec{M}, \newketbra{\Psi(t)}{\Psi(t)}] \mlvec{U}_{l:p}\mlvec{H}\mlvec{U}^{\dagger}_{l:p}).
\end{align}

The dynamics for $\mlvec{U}_{l:p}$ and $\ket{\Psi(t)}$ as functions of $\mlvec\theta(t)$ are therefore
\begin{align}
  \frac{d}{dt}\mlvec{U}_{l:p} = \sum_{r=l}^p\frac{d\theta_r}{dt}\frac{\partial}{\partial\theta_{r}}\mlvec{U}_{l:p}
                             = -\eta\sum_{r=l}^p\tr([\mlvec{M},\newketbra{\Psi(t)}{\Psi(t)}]\mlvec{U}_{l:p}\mlvec{H}\mlvec{U}^{\dagger}_{l:p})\mlvec{U}_{l:p}\mlvec{H}\mlvec{U}^{\dagger}_{l:p}\mlvec{U}_{l:p}
\end{align}
and
\begin{align}
  \frac{d}{dt}\ket{\Psi(t)} &= \frac{d}{dt}\big(\mlvec{U}_{1:p}\mlvec{U}_0\ket{\Phi}\big)\\
  &= -(\eta\cdot p Z)\frac{1}{p Z}\big(\sum_{l=1}^p\tr([\mlvec{M},\newketbra{\Psi(t)}{\Psi(t)}]\mlvec{U}_{l:p}\mlvec{H}\mlvec{U}^{\dagger}_{l:p})\mlvec{U}_{l:p}\mlvec{H}\mlvec{U}^{\dagger}_{l:p}\big)\mlvec{U}_{1:p}\mlvec{U}_0\ket{\Phi}\\
  &= - (\eta\cdot pZ)\tr_1(\mlvec{Y}  [\mlvec{M},\newketbra{\Psi(t)}{\Psi(t)}]]\otimes \mlvec{I})\ket{\Psi(t)}.
\end{align}
\end{proof}

\subsection{Concentration of dynamics from over-parameterization}
\label{sec:vqe-concentration}
In this section we wish to prove that $\mlvec{Y}$ concentrates to its expected value throughout training upto any point in time until which the linear convergence condition holds on the gradient flow dynamics. The proof will be based on two main ideas:
\begin{enumerate}
    \item The linear convergence of the gradient flow dynamics allows the deviation of the parameters $\mlvec{\theta}$ from their initial values to be bounded in terms of the evolution time (See Lemma~\ref{lm:vqe-slow-theta-new}).
    \item The random variables $\mlvec{Y}(\mlvec{\theta}(t))$ for different times $t$ form a \emph{random field}, whose deviations $\mlvec{Y}(\mlvec{\theta}(t_1)) - \mlvec{Y}(\mlvec{\theta}(t_1))$ we show to be bounded by a quantity proportional to $|t_1 - t_2|/\sqrt{p}$. We then use Dudley's lemma~\cite[Theorem~8.1.6]{vershynin2018high} on the concentration of random fields to bound the supremum of the deviation from initialization over time.
\end{enumerate}

We first show a result connecting the evolution time to the corresponding deviation in $\mlvec{\theta}$.
\begin{restatable}[Slow-varying $\theta$]{lemma}{vqeslowtheta-new}
  \label{lm:vqe-slow-theta-new}
  Suppose that under learning rate $\eta=\frac{1}{p Z(\mlvec{H},d)}$, for all $0 \le t \le T$,
  $1 - |\braket{\Psi(t)|\Psi^{\star}}|^{2} \leq \exp(-c\frac{(\lambda_{2}-\lambda_{1})}{\log d}t)$,
  then for all $0 \le t_1,t_2 \le T$:
  \begin{align}
    \|\mlvec{\theta}(t_2) - \mlvec{\theta}(t_1)\|_{\infty}
    &\leq \frac{1}{p}\cdot\sqrt{2}(\lambda_{d}-\lambda_{1})\cdot\sqrt{\frac{d^{2}-1}{Z(\mlvec{H},d)}}\cdot |t_2 - t_1|,\\
    \|\mlvec{\theta}(t_2) - \mlvec{\theta}(t_1)\|_{2}
    &\leq \frac{1}{\sqrt{p}}\cdot\sqrt{2}(\lambda_{d}-\lambda_{1})\cdot\sqrt{\frac{d^{2}-1}{Z(\mlvec{H},d)}}\cdot |t_2 - t_1|.
  \end{align}
\end{restatable}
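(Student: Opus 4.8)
The plan is to integrate the coordinate-wise gradient-flow equation obtained inside the proof of Lemma~\ref{lm:vqe-dynamics} and to control the instantaneous speed of each parameter uniformly in time. Recall that under gradient flow with $\eta=\frac{1}{pZ(\mlvec{H},d)}$ each coordinate obeys
\begin{align}
  \frac{d\theta_l}{dt} = -i\eta\,\tr\!\big([\mlvec{M},\newketbra{\Psi(t)}{\Psi(t)}]\,\mlvec{U}_{l:p}\mlvec{H}\mlvec{U}_{l:p}^{\dagger}\big),
\end{align}
so that $\theta_l(t_2)-\theta_l(t_1)=\int_{t_1}^{t_2}\frac{d\theta_l}{dt}\,dt$. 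I would therefore first reduce both claims to a single uniform bound on $\big|\tfrac{d\theta_l}{dt}\big|$ valid for every $l$ and every $t$: integrating such a bound over $[t_1,t_2]$ gives the $\ell_\infty$ estimate immediately, and the $\ell_2$ estimate then follows from the elementary inequality $\snorm{\cdot}_2\le\sqrt{p}\,\snorm{\cdot}_\infty$ applied to the $p$-vector of coordinate increments.

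First I would bound the integrand by Cauchy--Schwarz in the Frobenius inner product: writing $\mlvec{P}(t)=\newketbra{\Psi(t)}{\Psi(t)}$ and $\mlvec{H}_l=\mlvec{U}_{l:p}\mlvec{H}\mlvec{U}_{l:p}^{\dagger}$, this gives $|\tr([\mlvec{M},\mlvec{P}]\mlvec{H}_l)|\le\fronorm{[\mlvec{M},\mlvec{P}]}\,\fronorm{\mlvec{H}_l}$. Two computations then close the estimate. The factor $\fronorm{\mlvec{H}_l}$ is unchanged by unitary conjugation, so $\fronorm{\mlvec{H}_l}^2=\tr(\mlvec{H}^2)=Z(\mlvec{H},d)(d^2-1)$. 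For the commutator factor I would use the rank-one-projector identity
\begin{align}
  \fronorm{[\mlvec{M},\mlvec{P}]}^2 = 2\big(\bra{\Psi}\mlvec{M}^2\ket{\Psi}-\bra{\Psi}\mlvec{M}\ket{\Psi}^2\big) = 2\,\mathrm{Var}_{\Psi}(\mlvec{M}),
\end{align}
and then bound the variance uniformly: since it is invariant under the shift $\mlvec{M}\mapsto\mlvec{M}-\lambda_1\mlvec{I}$ and $0\preceq\mlvec{M}-\lambda_1\mlvec{I}\preceq(\lambda_d-\lambda_1)\mlvec{I}$, we get $\mathrm{Var}_{\Psi}(\mlvec{M})\le\bra{\Psi}(\mlvec{M}-\lambda_1\mlvec{I})^2\ket{\Psi}\le(\lambda_d-\lambda_1)^2$. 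Substituting $\eta=1/(pZ)$ gives the clean coordinate bound $\big|\tfrac{d\theta_l}{dt}\big|\le\frac{1}{p}\sqrt{2}\,(\lambda_d-\lambda_1)\sqrt{\tfrac{d^2-1}{Z(\mlvec{H},d)}}$.

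Integrating this over $[t_1,t_2]$ yields $|\theta_l(t_2)-\theta_l(t_1)|\le\frac{1}{p}\sqrt{2}(\lambda_d-\lambda_1)\sqrt{\tfrac{d^2-1}{Z}}\,|t_2-t_1|$ for each $l$, which is the $\ell_\infty$ bound, and passing through $\snorm{\cdot}_2\le\sqrt{p}\,\snorm{\cdot}_\infty$ gives the $\ell_2$ bound with the advertised $1/\sqrt{p}$ scaling. The only genuinely substantive step is the commutator--variance identity together with its uniform bound by $(\lambda_d-\lambda_1)^2$; everything else is routine integration and norm bookkeeping. I would also note that the exponential-convergence hypothesis in the statement is not needed for this worst-case coordinate bound---it only fixes the time window on which the estimate is later invoked in Lemma~\ref{lm:vqe-concentration-time}---although it could in principle be used to sharpen the bound by letting $\mathrm{Var}_{\Psi}(\mlvec{M})\to 0$ as $\ket{\Psi(t)}$ approaches the ground state.
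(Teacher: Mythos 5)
Your proof is correct and follows essentially the same route as the paper's: integrate the coordinate ODE from Lemma~\ref{lm:vqe-dynamics}, apply Cauchy--Schwarz in the Frobenius inner product, use unitary invariance to get $\fronorm{\mlvec{H}_l}=\sqrt{Z(\mlvec{H},d)(d^2-1)}$, and bound the commutator factor by $\sqrt{2}(\lambda_d-\lambda_1)$; your commutator--variance identity is exactly the computation inside the paper's Lemma~\ref{lm:technical_commutation2}. The one difference is in handling that commutator factor: the paper keeps the overlap-dependent form $\sqrt{2}(\lambda_d-\lambda_1)\sqrt{1-|\braket{\Psi(t)|\Psi^{\star}}|^{2}}$, invokes the exponential-convergence hypothesis to replace the square root by $e^{-\frac{c}{2}\frac{\lambda_{2}-\lambda_{1}}{\log d}t}$, integrates this exponential over $[t_1,t_2]$, and only in the final step bounds the resulting decay factor by $1$ --- so your observation that the hypothesis is not actually needed for the stated worst-case bound is accurate, and your uniform bound $\mathrm{Var}_{\Psi}(\mlvec{M})\le(\lambda_d-\lambda_1)^2$ lands on exactly the same final inequality (the paper's intermediate, overlap-weighted estimate is simply discarded). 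Your explicit passage from the $\ell_\infty$ to the $\ell_2$ bound via $\snorm{\cdot}_2\le\sqrt{p}\,\snorm{\cdot}_\infty$ is also the intended, though unstated, step in the paper's proof.
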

\begin{proof}
Let $\mlvec{J}(t) := [\mlvec{M}, \newketbra{\Psi(t)}{\Psi(t)}]$ and $\mlvec{H}_{l}:=\mlvec{U}_{l:p}\mlvec{H}\mlvec{U}_{l:p}^{\dagger}$. Recall that
\begin{align}
  \frac{d\theta_l}{dt} = -\frac{1}{pZ(\mlvec{H},d)}\tr(i\mlvec{J}(t)\mlvec{H}_l(t)),\quad
\end{align}
\begin{align}
  |{\theta}_{l}(t_{2}) - {\theta}_{l}(t_{1})| &= |\int_{t_1}^{t_2}\frac{d{\theta}_{l}(t)}{dt} dt| =\frac{1}{pZ}|\int_{t_1}^{t_2}\tr(\mlvec{H}_{l}(t) \mlvec{J}(t)dt|\\
  &\leq\frac{1}{pZ}\|\mlvec{H}(t)\|_F\int_{t_1}^{t_2} \|\mlvec{J}(t)\|_F dt\\
  &\leq\frac{1}{pZ}\sqrt{\tr(\mlvec{H}^2)}\int_{t_1}^{t_2} \sqrt{2}(\lambda_{d}-\lambda_{1})e^{-\frac{c}{2}\frac{\lambda_{2}-\lambda_{1}}{\log d} t} dt\\
  &=2\sqrt{2}\frac{\lambda_{d}-\lambda_{1}}{\lambda_{2}-\lambda_{1}}\sqrt{\frac{d^2 - 1}{p^{2}Z}}(\log d /c) \left(e^{-\frac{c}{2}\frac{\lambda_{2}-\lambda_{1}}{\log d} t_1} - e^{-\frac{c}{2}\frac{\lambda_{2}-\lambda_{1}}{4\log d} t_2}\right)\\
  &\leq \frac{1}{p}\cdot\sqrt{2}(\lambda_{d}-\lambda_{1})\cdot\sqrt{\frac{d^{2}-1}{Z(\mlvec{H},d)}}\cdot |t_2 - t_1| \cdot e^{-\frac{c}{2}\frac{\lambda_{2}-\lambda_{1}}{\log d} t_1}\\
  &\leq \frac{1}{p}\cdot\sqrt{2}(\lambda_{d}-\lambda_{1})\cdot\sqrt{\frac{d^{2}-1}{Z(\mlvec{H},d)}}\cdot |t_2 - t_1|.
\end{align}
Here we use the fact that
$\|\mlvec{J}\|_{F} \leq \sqrt{2}(\lambda_{d}-\lambda_{1})\sqrt{1 - |\braket{\Psi(t)|\Psi^{\star}}|^{2}}$,
following technical Lemma~\ref{lm:technical_commutation2}.
\end{proof}

We next consider the random variables $\mlvec{Y}(\mlvec{\theta}(t))$ for $t$ in
some interval $[0,T]$. These variables form a random field and we show a
concentration inequality on the expected deviation in
$\mlvec{Y}(\mlvec{\theta}(t))$ over different intervals. A random variable
$\mlvec{X}$ is said to be \emph{sub-gaussian} \cite[Proposition
2.5.2]{vershynin2018high} if its tails satisfy
$\Prob[\mlvec{X} \ge t] \le 2 \exp\(-t^2/K_1^2\)$ for some $K_{1}$. The largest
$K_1$ satifying this relation is defined to be the second Orlicz norm, or $\psi_2$-norm of $\mlvec{X}$.

\begin{restatable}[Concentration of deviations in $\mlvec{Y}(\mlvec{\theta}(t))$]{lemma}{vqeydiff}
  \label{lm:vqe-expected-deviation}
  \begin{align}
    \Prob[\opnorm{\mlvec{Y}(\mlvec{\theta}(t_2)) - \mlvec{Y}(\mlvec{\theta}(t_1))} > t] \le 2\exp\left(-\frac{-t^2 Z(\mlvec{H},d)^2}{2C_1\|\mlvec{\theta}(t_2) - \mlvec{\theta}(t_1)\|_{2}^2}\right)
\end{align}
for some constant $C_1$.
\end{restatable}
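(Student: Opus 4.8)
The plan is to fix the two endpoints $\mlvec{\theta}(t_1),\mlvec{\theta}(t_2)$ as deterministic parameter vectors, regard the Hermitian increment $\mlvec{g}:=\mlvec{Y}(\mlvec{\theta}(t_2))-\mlvec{Y}(\mlvec{\theta}(t_1))$ as a matrix-valued function of the Haar-random initialization $(\mlvec{U}_1,\dots,\mlvec{U}_p)\in SU(d)^p$, and then apply concentration of measure on the special unitary group to the scalar $\opnorm{\mlvec{g}}$. Because $\opnorm{\cdot}$ is $1$-Lipschitz in its matrix argument, the map $(\mlvec{U}_1,\dots,\mlvec{U}_p)\mapsto\opnorm{\mlvec{g}}$ is Lipschitz with the \emph{same} constant as the matrix-valued map $\mlvec{U}\mapsto\mlvec{g}$, so the concentration step is immediate once that Lipschitz constant is controlled; the whole difficulty is showing it scales linearly in $\|\mlvec{\theta}(t_2)-\mlvec{\theta}(t_1)\|_2$ with the correct powers of $d$ and $Z(\mlvec{H},d)$.

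The starting point is an exact integral representation of the increment. From the gradient computation in the proof of Lemma~\ref{lm:vqe-dynamics} one extracts the commutator identity $\partial_{\theta_k}\mlvec{H}_l=-i[\mlvec{H}_k,\mlvec{H}_l]$ for $k\ge l$ (and $0$ for $k<l$), hence $\partial_{\theta_k}(\mlvec{H}_l^{\otimes 2})=-i([\mlvec{H}_k,\mlvec{H}_l]\otimes\mlvec{H}_l+\mlvec{H}_l\otimes[\mlvec{H}_k,\mlvec{H}_l])$. Writing $\mlvec{\theta}(s)=\mlvec{\theta}(t_1)+s(\mlvec{\theta}(t_2)-\mlvec{\theta}(t_1))$ and applying the fundamental theorem of calculus to $s\mapsto\mlvec{Y}(\mlvec{\theta}(s))$ gives
\begin{align}
\mlvec{g}=\frac{-i}{pZ(\mlvec{H},d)}\int_0^1\sum_{l=1}^p\sum_{k\ge l}\big(\theta_k(t_2)-\theta_k(t_1)\big)\Big([\mlvec{H}_k,\mlvec{H}_l]\otimes\mlvec{H}_l+\mlvec{H}_l\otimes[\mlvec{H}_k,\mlvec{H}_l]\Big)\,ds,
\end{align}
with $\mlvec{H}_l=\mlvec{H}_l(\mlvec{\theta}(s))$. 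This exhibits $\mlvec{g}$ as a quantity that is \emph{linear} in the parameter displacement and vanishes as $\mlvec{\theta}(t_2)\to\mlvec{\theta}(t_1)$, which is precisely the source of the $\|\mlvec{\theta}(t_2)-\mlvec{\theta}(t_1)\|_2$ factor; moreover, since $\EXP[\mlvec{Y}(\mlvec{\theta})]=\mlvec{Y}^\star$ for every $\mlvec{\theta}$, we have $\EXP[\mlvec{g}]=0$.

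For the concentration I would invoke the Gromov--Milman/log-Sobolev inequality for $L$-Lipschitz functions on $SU(d)^p$: the product group inherits the Ricci-curvature lower bound ($\sim d$) of a single factor, yielding $\Prob[\opnorm{\mlvec{g}}>\EXP\opnorm{\mlvec{g}}+t]\le 2\exp(-cdt^2/L^2)$. To evaluate $L$ I would differentiate the representation above along a tangent perturbation of a single $\mlvec{U}_j$: each differentiation replaces a factor by one more nested commutator of norm $O(\opnorm{\mlvec{H}})$ while leaving the weights $\theta_k(t_2)-\theta_k(t_1)$ intact, so $L$ is again proportional to the displacement, and aggregating the per-coordinate sensitivities in the $\ell_2$ product metric produces $\|\mlvec{\theta}(t_2)-\mlvec{\theta}(t_1)\|_2$ together with the requisite powers of $d$ and $Z(\mlvec{H},d)$. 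Finally, to convert concentration-around-mean into the one-sided sub-Gaussian tail from zero stated in the lemma, I would bound $\EXP\opnorm{\mlvec{g}}=O(\|\mlvec{\theta}(t_2)-\mlvec{\theta}(t_1)\|_2/Z(\mlvec{H},d))$ and note that for $t$ below this scale the claimed right-hand side already exceeds one, so the inequality holds for all $t$ after enlarging $C_1$.

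I expect the main obstacle to lie in the quantitative bookkeeping in two places. First, the mean bound $\EXP\opnorm{\mlvec{g}}=O(\|\mlvec{\theta}(t_2)-\mlvec{\theta}(t_1)\|_2/Z)$ cannot come from the triangle inequality applied to the representation (that loses a spurious $\sqrt{p}$); it must exploit the Haar-cancellation among the commutators $[\mlvec{H}_k,\mlvec{H}_l]$ for distinct $(k,l)$. Second, in the Lipschitz estimate, perturbing $\mlvec{U}_j$ affects every term with $l\le j$, so the double sum over $(l,k)$ must be reorganized and bounded by Cauchy--Schwarz so as to turn $\sum_k|\theta_k(t_2)-\theta_k(t_1)|$ into $\|\mlvec{\theta}(t_2)-\mlvec{\theta}(t_1)\|_2$, while the $1/p$ normalization and the $\sqrt{d}$ from the curvature combine to leave the final exponent exactly $t^2 Z(\mlvec{H},d)^2/\big(2C_1\|\mlvec{\theta}(t_2)-\mlvec{\theta}(t_1)\|_2^2\big)$ with no residual $d$. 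Lemma~\ref{lm:vqe-slow-theta-new} then supplies the explicit $|t_2-t_1|/\sqrt{p}$ dependence consumed by the subsequent Dudley chaining argument.
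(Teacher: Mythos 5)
Your high-level architecture (freeze the endpoints, exhibit the increment as linear in $\delta\mlvec{\theta}:=\mlvec{\theta}(t_2)-\mlvec{\theta}(t_1)$, then apply concentration of measure over the Haar-random initialization, aggregating per-coordinate sensitivities in $\ell_2$) is the same as the paper's, and your FTC/commutator representation and the Gromov--Milman route are viable in principle. But there is a genuine gap precisely at the step you defer to ``Cauchy--Schwarz reorganization.'' In the coordinates $(\mlvec{U}_1,\dots,\mlvec{U}_p)$, perturbing $\mlvec{U}_j$ moves \emph{every} $\mlvec{H}_l$ with $l\le j$, and each affected term of your double sum carries the weight $\delta\theta_k$ of its own index $k$, not of $j$; bounding term by term gives a per-coordinate sensitivity of order $\frac{1}{pZ}\sum_{l\le j}\sum_{k\ge l}|\delta\theta_k|\,\opnorm{\mlvec{H}}^3$, which is of order $\|\delta\mlvec{\theta}\|_1\opnorm{\mlvec{H}}^3/Z$ \emph{uniformly in} $j$. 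Summing squares over $j$ then yields $L^2\lesssim p\,\|\delta\mlvec{\theta}\|_1^2\opnorm{\mlvec{H}}^6/Z^2$, and Cauchy--Schwarz goes the wrong way ($\|\delta\mlvec{\theta}\|_1\le\sqrt{p}\,\|\delta\mlvec{\theta}\|_2$ \emph{introduces} a $\sqrt{p}$ rather than removing one), so the resulting exponent is weaker than the claimed one by a factor polynomial in $p$ --- fatal, since the lemma is consumed in a regime where $p$ exceeds $d^4$. The missing idea, which is the actual crux of the paper's proof, is a re-parameterization exploiting Haar invariance: first use right-invariance to replace the increment, in distribution, by $\mlvec{Y}(\delta\mlvec{\theta})-\mlvec{Y}(\mlvec{0})$; then rewrite $\mlvec{Y}$ in terms of $\mlvec{H}_l=\mlvec{V}_l\mlvec{H}\mlvec{V}_l^{\dagger}$ with $\mlvec{V}_l$ i.i.d.\ Haar, so that the $k$-th random coordinate enters the terms $l<k$ \emph{only} through the factors $e^{\pm i\delta\theta_k\mlvec{H}_k}$. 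Only after this localization is the sensitivity to coordinate $k$ of order $|\delta\theta_k|\opnorm{\mlvec{H}}^3/Z$ (the paper extracts this from the Taylor estimate, Lemma~\ref{lm:helper-taylor}, applied to a difference of differences), and only then does $\ell_2$ aggregation produce $\|\delta\mlvec{\theta}\|_2$. With that step supplied, your plan does become a legitimately different proof: the paper finishes with the scalar McDiarmid inequality, whereas curvature-based concentration on $SU(d)^p$ would give an even stronger exponent (an extra factor of $d$); but without the localization your Lipschitz estimate does not close.

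A secondary issue is the passage from concentration around the mean to the stated tail from zero. You correctly observe that $\EXP\opnorm{\mlvec{g}}$ must be bounded by $O(\|\delta\mlvec{\theta}\|_2\opnorm{\mlvec{H}}^3/Z)$ without the $\sqrt{p}$ loss of the triangle inequality, and your ``small $t$ is trivial'' device is the right shape; but the mean bound itself still requires an argument (e.g., matrix bounded differences or noncommutative Khintchine, which typically costs a $\sqrt{\log d}$ that must be absorbed into a dimensional prefactor rather than into the constant $C_1$). In fairness, the paper's own proof is silent on this point as well --- it invokes McDiarmid, which controls $|\phi-\EXP\phi|$, and then asserts the lemma --- so this is a shared lacuna rather than one specific to your route; the substantive missing step remains the localization described above.
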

\begin{proof}
We first observe that due to the Haar distribution of the unitaries $U_{l}$, $\mlvec{Y}(\mlvec{\theta}(t_2)) - \mlvec{Y}(\mlvec{\theta}(t_1))$ is distributed identically to $\mlvec{Y}(\mlvec{\theta}(t_2) - \mlvec{\theta}(t_1)) - \mlvec{Y}(0)$. For convenience we define $\delta\mlvec{\theta} = \mlvec{\theta}(t_2) - \mlvec{\theta}(t_1)$ in the remainder of the proof.

Define $\mlvec{Y}_l(\mlvec{\theta}) = \mlvec{H}_l^{\otimes 2}$; then $\mlvec{Y}(\mlvec{\theta}) = \frac{1}{pZ(\mlvec{H},d)}\sum_{l=1}^{p} \mlvec{Y}_l$. 
We consider a re-parameterization of the random variables $\mlvec{H}_l(\theta)$ by constructing random variables that are identically distributed, but are functions on a different latent probability space. Defining $\mlvec{H}_{l}$ as
  $\mlvec{U}_{p}\cdots \mlvec{U}_{l}\mlvec{H}\mlvec{U}^{\dagger}_{l}\cdots\mlvec{U}^{\dagger}_{p}$,
  $\mlvec{Y}$ can be rewritten as:
  \begin{align}
    \mlvec{Y}(\mlvec\theta) = \frac{1}{pZ} \sum_{l=1}^{p}
    \big(
    e^{-i\theta_{p}\mlvec{H}_{p}}\cdots e^{-i\theta_{l+1}\mlvec{H}_{l+1}}
    \mlvec{H}_{l}
    e^{i\theta_{l+1}\mlvec{H}_{l+1}}\cdots e^{i\theta_{p}\mlvec{H}_{p}}
    \big)^{\otimes 2}.
  \end{align}
  By the Haar randomness of $\{\mlvec{U}_{l}\}_{l=1}^{p}$, we can view
  $\{\mlvec{H}_{l}\}_{l=1}^{p}$ as random Hermitians generated by
  $\{\mlvec{V}_{l}\mlvec{H}\mlvec{V}^{\dagger}_{l}\}$ for \textit{i.i.d.} Haar
  random $\{\mlvec{V}_{l}\}_{l=1}^{p}$. This variable is identically distributed to $\mlvec{Y}$ and $\mlvec{Y}_{l}$ can be defined as each term in the sum.

We will apply the well-known McDiarmid inequality~\cite[Theorem~2.9.1]{vershynin2018high} that can be stated as follows: Consider independent random variables $X_1,\dots,X_k \in \mathcal{X}$. Suppose a random variable $\phi \colon \mathcal{X}^{k} \to \real$ satisfies the condition that for all $1 \le j \le k$ and for all $x_1,\dots,x_j,\dots,x_k,x'_j \in \mathcal{X}$,
\begin{align}
    |\phi(x_1,\dots,x_j,\dots,x_k) - \phi(x_1,\dots,x'_j,\dots,x_k)| \le c_j,
\end{align}
then the tails of the distribution satisfy
\begin{align}
    \Prob[|\phi(X_1,\dots,X_k) - \EXP\phi| \ge t] \le \exp\left(\frac{-2t^2}{\sum_{i=1}^{k}c_i^2}\right).
\end{align}

With our earlier re-parameterization we can consider $\mlvec{Y}$ and consequently $\mlvec{Y}_l$ as functions of the randomly sampled Hermitian operators $\mlvec{H}_l$. Define the variable $\mlvec{Y}^{(k)}$ as that obtained by resampling $\mlvec{H}_k$ independently, and $\mlvec{Y}_l^{(k)}$ correspondingly. Finally we define
\begin{align}
\Delta^{(k)} \mlvec{Y} = \opnorm{(\mlvec{Y}(\delta\mlvec{\theta}) - \mlvec{Y}(0)) - (\mlvec{Y}^{(k)}(\delta\mlvec{\theta}) - \mlvec{Y}^{(k)}(0))} = \opnorm{\mlvec{Y}(\delta\mlvec{\theta}) - \mlvec{Y}^{(k)}(\delta\mlvec{\theta})}.
\end{align}
Via the triangle inequality,
\begin{align}
    \Delta^{(k)} \mlvec{Y} &= \lVert \mlvec{Y}(\delta\mlvec{\theta}) - \mlvec{Y}^{(k)}(\delta\mlvec{\theta}) \rVert
    = \frac{1}{pZ}\lVert \sum_{l=1}^k \mlvec{Y}_{l}(\delta\mlvec{\theta}) - \mlvec{Y}_{l}^{(k)}(\delta\mlvec{\theta}) \rVert \\
    &\le \frac{1}{pZ} \sum_{l=1}^k \lVert \mlvec{Y}_{l}(\delta\mlvec{\theta}) - \mlvec{Y}_{l}^{(k)}(\delta\mlvec{\theta}) \rVert.
\end{align}
Then by definition,
\begin{align}
    &\lVert\mlvec{Y}_{l}(\delta\mlvec{\theta}) - \mlvec{Y}_{l}^{(k)}(\delta\mlvec{\theta})\rVert \nonumber\\
    =& (e^{-i\delta\mlvec\theta_p\mlvec{H}_p}\cdots e^{-i\delta\mlvec\theta_{k+1}\mlvec{H}_{k+1}})^{\otimes 2}
    \big((e^{-i\delta\mlvec{\theta}_{k}\mlvec{H}_{k}} \mlvec{K} e^{i\delta\mlvec{\theta}_{k}\mlvec{H}_{k}})^{\otimes 2}\nonumber\\
    -&(e^{-i\delta\mlvec{\theta}_{k}\mlvec{H}^{\prime}_{k}} \mlvec{K} e^{i\delta\mlvec{\theta}_{k}\mlvec{H}^{\prime}_{k}})^{\otimes 2}
     \big)
     (e^{i\delta\mlvec{\theta}_{l+1}\mlvec{H}_{l+1}}\cdots e^{i\delta\mlvec{\theta}_{p}\mlvec{H}_{p}})^{\otimes 2}.
\end{align}
where $\mlvec{K}:=
    e^{-i\delta\mlvec{\theta}_{k-1}\mlvec{H}_{k-1}}\cdots e^{-i\delta\mlvec{\theta}_{l+1}\mlvec{H}_{l+1}}
    \mlvec{H}_{l}
    e^{i\delta\mlvec{\theta}_{l+1}\mlvec{H}_{l+1}}\cdots e^{i\delta\mlvec{\theta}_{k-1}\mlvec{H}_{k-1}}$.
By Lemma~\ref{lm:helper-taylor},
\begin{align}
    \|\big(\mlvec{Y}_{l}(\delta\mlvec{\theta}) - \mlvec{Y}_{l}(\mlvec{0})\big)
  - \big(\mlvec{Y}^{(k)}_{l}(\delta\mlvec{\theta}) - \mlvec{Y}^{(k)}_{l}(\mlvec{0})\big)\|
  \leq 8\abs{\delta\mlvec{\theta}_{k}}\opnorm{\mlvec{H}}\opnorm{\mlvec{K}}^{2}
  = 8\abs{\delta\mlvec{\theta}_{k}}\opnorm{\mlvec{H}}^{3}.
\end{align}
We finally have $\Delta^{(k)}(y) \le \frac{8|\delta\mlvec{\theta}_k|\opnorm{\mlvec{H}}^{3}}{Z}$. By the McDiarmid inequality, the result follows.
\end{proof}

To bound the supremum of the deviation over an entire time interval, we employ Dudley's integral inequality (stated below in it's matrix form).
\begin{lemma}[Dudley's integral inequality: subgaussian matrix version (Adapted from Theorem 8.1.6 in \cite{vershynin2018high}]
\label{lm:dudley-subgaussian-matrix}
Let $\mlvec{\mathcal{R}}$ be a metric space equipped with a metric $\mathbf{d}(\cdot,\cdot)$, and $\mlvec{X}: \mlvec{\mathcal{R}} \mapsto \real^{D \times D}$ with subgaussian increments ie. it satisfies
\begin{align}
    \Prob[\| \mlvec{X}(r_1) - \mlvec{X}(r_2)\|_{\mathsf{op}} > t] \le 2D\exp\left(-\frac{t^2}{C_{\sigma}^2\mathbf{d}(r_1,r_2)^2}\right),
\end{align}
Then with probability at least $1 - 2D\exp(-u^2)$ for any subset $\mathcal{S} \subseteq \mlvec{\mathcal{R}}$:
\begin{align}
    \sup_{(r_1,r_2) \in \mathcal{S}} \|\mlvec{X}(r_1) - \mlvec{X}(r_2)\|_{\mathsf{op}} \le C \cdot C_{\sigma} \left[\int_0^{\mathrm{diam}(\mathcal{S})} \sqrt{\mathcal{N}(\mathcal{S},\mathbf{d},\epsilon)}\,d\epsilon + u \cdot \mathrm{diam}(\mathcal{S})\right].
\end{align}
for some constant $C$, where $\mathcal{N}(\mathcal{S},\mathbf{d},\epsilon)$ is the metric entropy defined as the logarithm of the $\epsilon$-covering number of $\mathcal{S}$ using metric $d$.
\end{lemma}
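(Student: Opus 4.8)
The plan is to follow the chaining proof of the scalar Dudley inequality (Theorem~8.1.6 in \cite{vershynin2018high}) while carrying the dimension factor $2D$ from the operator-norm tail bound through each union bound, so that it surfaces exactly once in the final probability. First I would reduce the two-sided supremum to a one-sided one: fixing a reference point $r_0$, the triangle inequality gives $\opnorm{\mlvec{X}(r_1) - \mlvec{X}(r_2)} \le \opnorm{\mlvec{X}(r_1) - \mlvec{X}(r_0)} + \opnorm{\mlvec{X}(r_0) - \mlvec{X}(r_2)}$, so it suffices to bound $\sup_{r \in \mathcal{S}} \opnorm{\mlvec{X}(r) - \mlvec{X}(r_0)}$, up to a factor of two absorbed into the constant $C$.

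Next I would set up the chain. For $k = 0,1,2,\dots$ let $\mathcal{N}_k$ be a minimal $\epsilon_k$-net of $\mathcal{S}$ with $\epsilon_k = 2^{-k}\,\mathrm{diam}(\mathcal{S})$, so that $\log|\mathcal{N}_k| = \mathcal{N}(\mathcal{S},\mathbf{d},\epsilon_k)$ by the definition of metric entropy, with $\mathcal{N}_0$ a single point taken to be $r_0$. Writing $\pi_k(r)$ for the closest point of $\mathcal{N}_k$ to $r$, the telescoping identity $\mlvec{X}(r) - \mlvec{X}(r_0) = \sum_{k \ge 1}\big(\mlvec{X}(\pi_k(r)) - \mlvec{X}(\pi_{k-1}(r))\big)$ holds (the tail vanishes under separability), with each link obeying $\mathbf{d}(\pi_k(r),\pi_{k-1}(r)) \le \epsilon_k + \epsilon_{k-1} \le 3\epsilon_k$.

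The core step is a per-level union bound. At level $k$ there are at most $|\mathcal{N}_k|\cdot|\mathcal{N}_{k-1}| \le |\mathcal{N}_k|^2$ distinct links, and for each the hypothesis gives a subgaussian tail of width $C_\sigma\cdot 3\epsilon_k$. Choosing the threshold $s_k = 3C_\sigma\epsilon_k(\sqrt{2\log|\mathcal{N}_k|} + \mu_k)$ and expanding the square shows $(\sqrt{2\log|\mathcal{N}_k|}+\mu_k)^2 \ge 2\log|\mathcal{N}_k| + \mu_k^2$, so the count $|\mathcal{N}_k|^2 = \exp(2\log|\mathcal{N}_k|)$ is exactly cancelled and the union bound over level $k$ is at most $2D\exp(-\mu_k^2)$. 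Taking slacks $\mu_k = u + k$ makes the total failure probability $\sum_{k\ge1}2D\exp(-(u+k)^2) \le 2D\exp(-u^2)\sum_{k\ge1}\exp(-k^2)$, which is $2D\exp(-u^2)$ up to an absolute constant. On the complementary event, summing the thresholds yields $\sup_r\opnorm{\mlvec{X}(r) - \mlvec{X}(r_0)} \le \sum_{k\ge1}s_k$; here $\sum_k \epsilon_k\sqrt{2\log|\mathcal{N}_k|}$ is the Riemann-sum comparison to $\int_0^{\mathrm{diam}(\mathcal{S})}\sqrt{\mathcal{N}(\mathcal{S},\mathbf{d},\epsilon)}\,d\epsilon$ (using monotonicity of the entropy and $\epsilon_k - \epsilon_{k+1} = \epsilon_{k+1}$), while $\sum_k \epsilon_k\mu_k \lesssim u\,\mathrm{diam}(\mathcal{S})$ since both $\sum_k \epsilon_k$ and $\sum_k k\epsilon_k$ are $O(\mathrm{diam}(\mathcal{S}))$. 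This reproduces the stated bound.

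The main obstacle I anticipate is precisely the bookkeeping of the factor $2D$: a naive union bound would multiply $2D$ across chaining levels and inflate the probability. The resolution is the exact cancellation in the per-level step, where the entropy term inside the Gaussian exponent is engineered to absorb the link count $|\mathcal{N}_k|^2$, leaving a single $2D$ per level and hence, after the geometrically decaying slack series, a single $2D\exp(-u^2)$ overall. Everything else is standard scalar chaining, and the operator-norm triangle inequality ensures the matrix-valued increments behave exactly as scalar subgaussian increments do, apart from this one dimensional factor.
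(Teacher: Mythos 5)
The paper never proves this lemma at all---it is invoked purely as a black-box adaptation of Theorem 8.1.6 in Vershynin's book---and your chaining argument is exactly the adaptation that citation implicitly relies on: the only ingredient beyond the scalar proof is that the dimensional factor $2D$ enters each level's union bound just once, since your threshold $3C_\sigma\epsilon_k\bigl(\sqrt{2\log|\mathcal{N}_k|}+u+k\bigr)$ lets the entropy term cancel the link count $|\mathcal{N}_k|^2$, leaving a total failure probability $\sum_{k\ge 1} 2D\,e^{-(u+k)^2}\le 2D\,e^{-u^2}$. Your proof is correct; the one small imprecision is the claim $\sum_k k\,\epsilon_k \lesssim u\cdot\mathrm{diam}(\mathcal{S})$, which needs either $u\gtrsim 1$ (the only regime where the conclusion is non-vacuous, as the probability bound is trivial for $u\le\sqrt{\log(2D)}$) or absorption of the extra $O(\mathrm{diam}(\mathcal{S}))$ term into the Dudley integral, which dominates $\mathrm{diam}(\mathcal{S})$ whenever $\mathcal{S}$ contains at least two points.
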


We then have the following main result:
\vqeconcentrationtime*
\begin{proof}
Via Lemma~\ref{lm:vqe-expected-deviation},
\begin{align}
    \Prob[\lVert\mlvec{Y}(\mlvec{\theta}(t_2)) - \mlvec{Y}(\mlvec{\theta}(t_1))\rVert_{\mathsf{op}} > t] \le 2\exp\left(-\frac{-t^2 Z(\mlvec{H},d)^{2}}{2C_1\|\mlvec{\theta}(t_2) - \mlvec{\theta}(t_1)\|^2}\right),
\end{align}
By Lemma~\ref{lm:vqe-slow-theta-new} $\|\mlvec{\theta}(t_2) - \mlvec{\theta}(t_1)\|_2 \le \frac{1}{\sqrt{p}}\cdot\sqrt{2}(\lambda_{d}-\lambda_{1})\cdot\sqrt{\frac{d^{2}-1}{Z(\mlvec{H},d)}}\cdot |t_2 - t_1|$. Thus, $\mlvec{Y}$ has sub-gaussian increments if we define the metric $\mathbf{d}(t_2,t_1) = \frac{1}{\sqrt{p}}\cdot\sqrt{2}(\lambda_{d}-\lambda_{1})\cdot\sqrt{\frac{d^{2}-1}{Z(\mlvec{H},d)^3}}\cdot |t_2 - t_1|$, thereby satisfying the conditions for Lemma~\ref{lm:dudley-subgaussian-matrix}. Under this metric, the diameter of the interval $[0,T]$ is $\frac{T}{\sqrt{p}}\cdot\sqrt{2}(\lambda_{d}-\lambda_{1})\cdot\sqrt{\frac{d^{2}-1}{Z(\mlvec{H},d)^3}}$. Applying Lemma~\ref{lm:dudley-subgaussian-matrix}, with $u = \sqrt{\log(2d/\delta)}$ to ensure a failure probability at most $\delta$ we have
\begin{align}
    \sup_{t \in [0,T]} \|\mlvec{Y}(\mlvec{\theta}(t)) - \mlvec{Y}(\mlvec{\theta}(0))\|_{\mathsf{op}} \le C_2\left(\int_{0}^{\mathrm{diam}([0,T])}\epsilon^{-1}\,d\epsilon + \mathrm{diam}([0,T])\sqrt{\log\left(\frac{2d}{\delta}\right)}\right).
\end{align}
We assume without loss of generality that $p$ is large enough such that $\mathrm{diam}([0,T]) < 1$. Then,
\begin{align}
    \sup_{t \in [0,T]} \|\mlvec{Y}(\theta(t)) - \mlvec{Y}(\theta(0))\|_{\mathsf{op}} \le C_2\left(\mathrm{diam}([0,T])\left(1 + \sqrt{\log\left(\frac{2d}{\delta}\right)}\right)\right).
\end{align}
By the previous consideration,
\begin{align}
    \sup_{t \in [0,T]} \|\mlvec{Y}(\theta(t)) - \mlvec{Y}(\theta(0))\|_{\mathsf{op}} \le C_3\left(\frac{T}{\sqrt{p}}\cdot\sqrt{2}(\lambda_{d}-\lambda_{1})\cdot\sqrt{\frac{d^{2}-1}{Z(\mlvec{H},d)^3}}\left(1 + \sqrt{\log\left(\frac{2d}{\delta}\right)}\right)\right),
\end{align}
where $C_2,C_3$ are constants.
\end{proof}

\subsection{Linear convergence to the ground state}
\label{sec:vqe-main-proof}
Finally, we can combine our previous results to show that with sufficient over-parameterization, the VQE dynamics can be made to exponentially converge to the ground state

\vqeconcentrationmain*
\begin{proof}
  We first show that the initial output state $\ket{\Psi(0)}$ satisfies the condition $|\<\Psi(0)|\Psi^{\star}\>|^{2} \ge \Omega(1/d)$, required by Lemma~\ref{lm:vqe-perturbation}, with high probability. To see this, observe that $\ket{\Psi(0)}$ is obtained by applying a Haar uniform unitary operator to an input vector, therefore, $\ket{\Psi(0)}$ obeys the uniform Haar distribution over the space of quantum states $S(\complex^{d})$. Due to this uniformity, $|\<\Psi(0)|\Psi^{\star}\>|^{2}$ is equidistributed to $|\<\Psi(0)|1\>|^{2} $. Furthermore, $|\Psi(0)\>$ follows the same distribution as a state vector $|w\> = \frac{1}{\sum_{j=1}^{d} w_{j,\mathrm{re}}^{2} + w_{j,\mathrm{im}}^{2} }\left(\sum_{k=1}^{d}w_{k,\mathrm{re}} + iw_{k,\mathrm{im}}\right)$, where each $w_{j,\mathrm{im}},w_{j,\mathrm{re}}$ are drawn from independent standard normal distributions. The distribution of $|\<\Psi(0)|\Psi^{\star}\>|^{2}$ is therefore identical to that of the quantity $\frac{w_{1,\mathrm{re}}^{2}}{\sum_{j=1}^{d} w_{j,\mathrm{re}}^{2} + w_{j,\mathrm{im}}^{2} }$. By standard concentration of normal variables, the numerator is $\Theta(1)$ and the denominator is $\Theta(d)$, with any constant failure probability. Choosing both the failure probabilities to be $0.0025$, we have that the condition $|\<\Psi(0)|\Psi^{\star}\>|^{2} \ge \Omega(1/d)$ is satisfied with probability at least $0.995$.

  Once the above condition is satisfied, Lemma~\ref{lm:vqe-perturbation} states
  that if the closeness condition on $\mlvec{Y}$ is maintained for time
  $T_\epsilon = \frac{1}{c}\frac{\log d}{\lambda_2 - \lambda_1}\log\left(\frac{1}{\epsilon}\right)$
  the obtained error is less than or equal to $\epsilon$. Therefore, by
  Lemma~\ref{lm:vqe-concentration-time} and
  Lemma~\ref{lm:vqe-concentration-init}, in order to ensure  with failure
  probability at most $0.005$, that
  $\|\mlvec{Y}(t) - \mlvec{Y}(0)\| \leq \frac{C_{0}}{d} \cdot\frac{\lambda_2 - \lambda_1}{\lambda_d - \lambda_1}$
  for constant $C_{0}$
  up to any time $t$ such that $0 < t \le T_\epsilon$ and $1 - |\braket{\Psi|\Psi^{\star}}|^{2} \leq \exp(-c\frac{(\lambda_{2}-\lambda_{1})}{\log d}t')$ for all $t' \le t$, it suffices to choose $p$ such that
  \begin{align}
    C_3\left(\frac{T_{\epsilon}}{\sqrt{p}}\cdot\sqrt{2}(\lambda_{d}-\lambda_{1})\cdot\sqrt{\frac{d^{2}-1}{Z(\mlvec{H},d)^3}}\left(1 + \sqrt{\log\left(\frac{2d}{0.005}\right)}\right)\right) \le \frac{\lambda_{2} -\lambda_{1}}{\lambda_{d}-\lambda_{1}}\cdot \frac{C_{0}}{d}.
  \end{align}
  By simple algebra, it can be verified from the above that it suffices to choose any $p$ greater than or equal to some over-parameterization threshold $p_{\mathrm{th}}$, where $p_{\mathrm{th}} = O\left(\left(\frac{\lambda_d - \lambda_1}{\lambda_2 - \lambda_1}\right)^4,\frac{d^4}{Z(\mlvec{H},d)^3},\log\left(d\right)\right)$.

The above argument shows that if the dynamics exhibits linear convergence upto some time $t$, the closeness condition on $\mlvec{Y}$ will also be satisified with failure probability at most $\delta$, if the number of parameters is chosen appropriately. We now argue that \emph{both} these conditions must hold until time $T_{\epsilon}$.
Let $t_0$ be the minimum time such that either $1 - |\braket{\Psi_{t_0}|\Psi^{\star}}|^{2} > \exp(-c\frac{(\lambda_{2}-\lambda_{1})}{\log d}t_0)$ or $\|\mlvec{Y}(t_0) - \mlvec{Y}(0)\| > \frac{\lambda_2 - \lambda_1}{\lambda_d - \lambda_1}\cdot\frac{C_{0}}{d}$. If $1 - |\braket{\Psi|\Psi^{\star}}|^{2} > \exp(-c\frac{(\lambda_{2}-\lambda_{1})}{\log d}t_0)$, we must have $\|\mlvec{Y}(t'_0) - \mlvec{Y}(0)\| > \frac{\lambda_2 - \lambda_1}{\lambda_d - \lambda_1}\cdot\frac{C_{0}}{d}$ at some earlier time $t'_0$ (Lemma~\ref{lm:vqe-perturbation}). Similarly, if $\|\mlvec{Y}(t_0) - \mlvec{Y}(0)\| > \frac{\lambda_2 - \lambda_1}{\lambda_d - \lambda_1}\cdot\frac{C_{0}}{d}$, we must have $1 - |\braket{\Psi(t'_0)|\Psi^{\star}}|^{2} > \exp(-c\frac{(\lambda_{2}-\lambda_{1})}{\log d}t'_0)$ at some earlier time $t'_0$ (Lemma~\ref{lm:vqe-concentration-time}). Therefore by contradiction, both conditions must be realized for all times $t \le T_\epsilon$, yielding the result.
\end{proof}


\section{Convergence under noisy gradient}
\label{sec:robustness}
So far we have assumed perfect access to the exact gradient
$\nabla L = (\frac{\partial L}{\partial\theta_{1}}, \frac{\partial L}{\partial\theta_{2}}, \cdots, \frac{\partial L}{\partial\theta_{p}})^{T}$.
In practical NISQ settings, the estimation of gradients is noisy either due to
the finite number of measurements, or to the noisy implementation of circuits.
In this section, we extend the convergence theorem and show
that for sufficiently small amount of noise, the efficient convergence remains.
We comment that, while the noise level required in the following theorem depends
polynomially on $1/d$ and is still not practical for NISQ settings, our result is
the first rigorous result to establish the convergence of VQE in the noisy
setting. In addition, the result in this section suggest that the convergence
theorem is robust, and reveal the dependency of the noise level on the approximation
error $1 - |\braket{\Psi(t)|\Psi^{\star}}|^{2}$.

In the continuous-time setting we consider the following definition for noisy gradient flow:
\begin{definition}[Noisy gradient flow] For loss function
  $L: \real^{p}\rightarrow \real$, the noisy gradient flow on the parameters
  $\mlvec\theta\in\real^{p}$ with learning rate $\eta$ is defined as
  \begin{align}
  \frac{d\mlvec\theta}{dt} = -\eta (\nabla L + \noisevec) \quad \text{or}\quad\frac{d\theta_{l}}{dt} = -\eta (\frac{\partial L}{\partial \theta_{l}} + \noise{l}(t))\quad \forall l\in[p]
  \end{align}
  where $\noisevec(t):=(\noise{1}(t), \cdots, \noise{p}(t))^{T}$ is the noise to the
gradient estimation.
\end{definition}
The following noisy version of the convergence theorem states that when
the $\ell_{\infty}$-norm of $\noisevec(t)$ is sufficiently small, the convergence result still holds:
\begin{restatable}[Convergence theorem with noisy gradient]{corollary}{noisyconvergence}
\label{cor:noisy-convergence}
    Consider training a $p$-parameter $d$-dimensional VQE instance
    $(\mlvec{M}, \ket{\Phi}, \mlvec{U})$ with learning rate
    $\eta = \frac{1}{pZ(\mlvec{H}, d)}$, where the ansatz $\mlvec{U}$ is generated by $\mlvec{H}$ as described in
    Definition~\ref{def:frozen-ansatz}. Let
    $\lambda_{1} < \lambda_{2} \leq \cdots \leq \lambda_{d}$ denote the eigenvalues of
    $\mlvec{M}$, and $\ket{\Psi^{\star}}$ denote the ground state.
    If
    \begin{itemize}
      \item the number of parameters $p$ greater than a threshold of order
    $O\big((\frac{\lambda_d - \lambda_1}{\lambda_2 - \lambda_1})^4,\frac{d^4}{Z(\mlvec{H},d)^{3}},\log(d)\big)$,
      \item the gradient estimation error
      $\|\noisevec(t)\|_{\infty} \leq c'\cdot \frac{Z}{\opnorm{\mlvec{H}}}(\lambda_{2} -\lambda_{1})\sqrt{1 - |\bra{\Psi(t)}\Psi^\star\>|^2} |\bra{\Psi(t)}\Psi^\star\>|$,
            for some constant $c'$,
    \end{itemize}
    then with probability $\geq 0.99$, the output state $\ket{\Psi(t)}$ converges under noisy gradient to the
      ground state with error $\epsilon := 1 - |\bra{\Psi(T_{\epsilon})}\Psi^\star\>|^2$ in time
      $T_\epsilon = O\big(\frac{\log d}{\lambda_2 - \lambda_1}\log\frac{1}{\epsilon}\big)$. The success probability can be boosted to $1 - \delta$ for any $0 < \delta < 1$ using $O\left(\log\left(\frac{1}{\delta}\right)\right)$ random restarts.
\end{restatable}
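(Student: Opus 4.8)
The plan is to treat the gradient noise as an additional, state-dependent perturbation of the output-state dynamics and then rerun essentially the argument of Theorem~\ref{thm:vqe-convergence-full}. First I would redo the computation of Lemma~\ref{lm:vqe-dynamics} with the noisy update $d\theta_l/dt = -\eta(\partial L/\partial\theta_l + \noise{l})$. Since the only change is the additive term $-\eta\noise{l}$ in each $\dot\theta_l$, and since $\frac{\partial}{\partial\theta_l}\ket{\Psi} = -i\mlvec{H}_l\ket{\Psi}$ with $\mlvec{H}_l = \mlvec{U}_{l:p}\mlvec{H}\mlvec{U}_{l:p}^{\dagger}$ (as in the proof of Lemma~\ref{lm:vqe-dynamics}), the output state obeys
\begin{align}
  \frac{d}{dt}\ket{\Psi(t)} = -[\mlvec{M}, \newketbra{\Psi(t)}{\Psi(t)}]\ket{\Psi(t)} - \tr_{1}\big((\mlvec{Y}-\mlvec{Y}^{\star})[\mlvec{M},\newketbra{\Psi(t)}{\Psi(t)}\otimes\mlvec{I}]\big)\ket{\Psi(t)} + \ket{\xi(t)},
\end{align}
where $\ket{\xi(t)} := i\eta\sum_{l=1}^{p}\noise{l}(t)\,\mlvec{H}_l\ket{\Psi(t)}$ is the noise-induced drift. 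Thus the noisy dynamics is exactly the idealized RGF of \eqref{eq:vqe-ideal-dynamics}, plus the usual $\mlvec{Y}$-deviation term, plus the single extra term $\ket{\xi}$.

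The next step is to bound $\ket{\xi}$ and fold it into the perturbation budget already tolerated by the convergence analysis. Using $\eta = 1/(pZ)$, $\opnorm{\mlvec{H}_l} = \opnorm{\mlvec{H}}$ and the triangle inequality gives $\norm{\ket{\xi(t)}} \le (\opnorm{\mlvec{H}}/Z)\,\norm{\noisevec(t)}_{\infty}$; substituting the hypothesized noise bound yields $\norm{\ket{\xi(t)}} \le c'(\lambda_2-\lambda_1)\sqrt{1-|\braket{\Psi(t)|\Psi^{\star}}|^{2}}\,|\braket{\Psi(t)|\Psi^{\star}}|$. The crucial observation is that this is exactly the \emph{shape} of the restoring term already present in the dynamics: by Lemma~\ref{lm:technical_commutation2}, $\fronorm{[\mlvec{M},\newketbra{\Psi}{\Psi}]} \le \sqrt{2}(\lambda_d-\lambda_1)\sqrt{1-|\braket{\Psi|\Psi^{\star}}|^{2}}$, so both the $\mlvec{Y}$-deviation term and $\ket{\xi}$ vanish at the same rate $\propto\sqrt{1-|\braket{\Psi|\Psi^{\star}}|^{2}}$ as the state approaches $\ket{\Psi^{\star}}$. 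I would therefore prove a noise-robust strengthening of Lemma~\ref{lm:vqe-perturbation}: tracking the infidelity $g(t) = 1-|\braket{\Psi(t)|\Psi^{\star}}|^{2}$, the signal $\frac{d}{dt}|\braket{\Psi|\Psi^{\star}}|^{2} \ge 2(\lambda_2-\lambda_1)|\braket{\Psi|\Psi^{\star}}|^{2}\,g$ is diminished by at most $2|\braket{\Psi|\Psi^{\star}}|\cdot\norm{\ket{\xi}} \le 2c'(\lambda_2-\lambda_1)|\braket{\Psi|\Psi^{\star}}|^{2}\sqrt{g}$, so that for a small enough constant $c'$ the same logistic differential inequality, and hence the same exponential rate, survives.

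Having secured convergence under perturbations, the remaining task is concentration, which is essentially unchanged because the ansatz initialization (the Haar unitaries $\mlvec{U}_l$) is identical. The only place the noise intervenes is the slow-varying bound on $\mlvec\theta$ (Lemma~\ref{lm:vqe-slow-theta-new}): the noisy update adds $-\eta\noise{l}$ to $\dot\theta_l$, contributing at most $\eta\norm{\noisevec}_{\infty} \le \frac{1}{pZ}\cdot c'\frac{Z}{\opnorm{\mlvec{H}}}(\lambda_2-\lambda_1)\sqrt{g}$ per coordinate, which is of the same order as the noiseless integrand; I would thus recover the same $\ell_2/\ell_\infty$ deviation bounds up to a constant, so that Lemma~\ref{lm:vqe-concentration-time} and Lemma~\ref{lm:vqe-concentration-init} carry over and keep $\opnorm{\mlvec{Y}(t)-\mlvec{Y}^{\star}}$ below the robustness threshold for $p \ge p_{\mathrm{th}}$. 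I would then close with the same bootstrap/contradiction step as in Theorem~\ref{thm:vqe-convergence-full}: the earliest time at which either the convergence rate or the $\mlvec{Y}$-closeness fails forces the other to have failed strictly earlier, a contradiction, so both hold up to $T_\epsilon$; the initial-overlap condition $|\braket{\Psi(0)|\Psi^{\star}}|^{2}\ge\Omega(1/d)$ is unaffected by the noise and holds with the same probability, boosted by random restarts. The main obstacle is precisely the noise-robust perturbation lemma of the second paragraph: one must control the projection of the \emph{adversarial} drift $\ket{\xi}$ onto the convergence potential and verify that the state-dependent factor $\sqrt{1-|\braket{\Psi|\Psi^{\star}}|^{2}}\,|\braket{\Psi|\Psi^{\star}}|$ in the noise hypothesis is exactly what keeps this projection dominated by the restoring force along the whole trajectory, so that no new error floor is introduced beyond the one already governed by the $\mlvec{Y}$-deviation.
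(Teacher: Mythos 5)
Your overall scaffolding is the same as the paper's: you re-derive the output-state dynamics with the extra drift $\ket{\xi(t)} = i\eta\sum_{l}\noise{l}\mlvec{H}_{l}\ket{\Psi(t)}$ (this is exactly Lemma~\ref{lm:noisy-vqe-dynamics}), you keep the concentration lemmas and the bootstrap of Theorem~\ref{thm:vqe-convergence-full} essentially unchanged, and you reduce everything to a noise-robust version of Lemma~\ref{lm:vqe-perturbation} (the paper's Lemma~\ref{lm:noisy-vqe-perturbation}). However, that reduction --- the step you yourself flag as the main obstacle --- fails as you have set it up. Write $h = |\braket{\Psi(t)|\Psi^{\star}}|^{2}$ and $g = 1-h$. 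Your bound on the noise contribution to $dh/dt$ is $2\sqrt{h}\,\norm{\ket{\xi}} \le 2c'(\lambda_{2}-\lambda_{1})\,h\sqrt{g}$, while the restoring signal is $2(\lambda_{2}-\lambda_{1})hg$. The ratio of noise to signal is $c'/\sqrt{g}$, which diverges as $g \to 0$: no fixed constant $c'$ keeps the right-hand side of $dh/dt \ge 2(\lambda_{2}-\lambda_{1})h\big(g - c'\sqrt{g}\big)$ positive once $g < (c')^{2}$. With your estimate the dynamics is only guaranteed to converge down to an error floor of order $(c')^{2}$, not to arbitrary error $\epsilon$ in time $O\big(\frac{\log d}{\lambda_{2}-\lambda_{1}}\log\frac{1}{\epsilon}\big)$ as the corollary claims, so the claimed ``same logistic differential inequality'' does not in fact survive.

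The missing idea is to use the structure of the drift rather than only its norm. Since $\ket{\xi} = -N(t)\ket{\Psi(t)}$ with $N(t) = -\frac{1}{pZ}\sum_{l}i\noise{l}\mlvec{H}_{l}$ \emph{anti-Hermitian}, its contribution to $dh/dt$ is exactly $\tr\big(N(t)\,[\newketbra{\Psi^{\star}}{\Psi^{\star}},\newketbra{\Psi(t)}{\Psi(t)}]\big)$ --- this is how the paper's proof of Lemma~\ref{lm:noisy-vqe-perturbation} organizes the term. By H\"older's inequality and Lemma~\ref{lm:technical_commutation1}, the commutator of the two rank-one projectors has trace norm $2\sqrt{h(1-h)}$, so the noise term is bounded by $2\opnorm{N(t)}\sqrt{hg}$; combined with $\opnorm{N(t)} \le \frac{\opnorm{\mlvec{H}}}{Z}\norm{\noisevec}_{\infty} \le c'(\lambda_{2}-\lambda_{1})\sqrt{hg}$ from the noise hypothesis, this yields a perturbation of size at most $2c'(\lambda_{2}-\lambda_{1})hg$, i.e.\ the same shape $hg$ as the signal, and then a small enough constant $c'$ works uniformly along the whole trajectory. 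The factor $\sqrt{1-h}$ that your bound discards has a geometric meaning: the noise enters through the parameters, hence generates a unitary rotation of $\ket{\Psi}$; since $h$ attains its maximum at $\ket{\Psi} = \ket{\Psi^{\star}}$, the first-order effect of any such rotation on $h$ must vanish as the state aligns with the target. Treating $\ket{\xi}$ as an arbitrary additive adversarial drift forfeits exactly this cancellation, and with it the theorem. Once this sharper commutator bound replaces yours, the rest of your argument (the noise's effect on Lemma~\ref{lm:vqe-slow-theta-new}, concentration, the bootstrap/contradiction step, and boosting by random restarts) goes through as in the paper.
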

\begin{remark}
To interpret the upper bound on $\norm{\noisevec}_{\infty}$, notice that
\begin{align}
\sqrt{1 - |\bra{\Psi(t)}\Psi^\star\>|^2} |\bra{\Psi(t)}\Psi^\star\>| \leq \max\{ |\bra{\Psi(t)}\Psi^\star\>|^{2}, 1-|\bra{\Psi(t)}\Psi^\star\>|^{2}\}.
\end{align}
At the initial stage of training, $\norm{\noisevec}_{\infty}$ need to be
$O(|\bra{\Psi(t)}\Psi^\star\>|^{2})$ so that the worst-case perturbation in the
gradient does not eliminate the overlap between $\ket{\Psi(t)}$ and
$\ket{\Psi^{\star}}$; at the final stage of training $\norm{\noisevec}_{\infty}$ need to be
$O(1-|\bra{\Psi(t)}\Psi^\star\>|^{2})$ to obtain solutions with high quality.
\end{remark}
\begin{remark}
  The premise of Corollary~\ref{cor:noisy-convergence} requires
  $\|\noisevec(t)\|_{\infty} / \opnorm{\mlvec{H}}$ to be of order
  $Z / \opnorm{\mlvec{H}}^{2}$, which depends polynomially on $1/d$. We
 highlight that our analysis here considers the worst-case (or adversarial)
perturbation on the gradient. It is possible that the requirement on
$\norm{\noisevec}_{\infty}$ can be further relaxed in the practical scenerio.
For example, when the noise is purely due to the finite measurements, we can
further assume $\noisevec$ to be stochastic and unbiased.
\end{remark}
The proof of Corollary~\ref{cor:noisy-convergence} follows directly from the following
Lemma~\ref{lm:noisy-vqe-dynamics}, which calculates the dynamics at the presence of
gradient noise, and Lemma~\ref{lm:noisy-vqe-perturbation}, which states the
convergence of the noisy dynamics.
The proofs for the lemmas are based on the proofs for Lemma~\ref{lm:vqe-dynamics} and \ref{lm:vqe-perturbation} and are postponed
to Section~\ref{sec:app_noise}.

\begin{restatable}[Output-state dynamics with noisy gradient
  estimation]{lemma}{noisyvqedynamics}
  \label{lm:noisy-vqe-dynamics}
  Consider VQE instance $(\mlvec{M}, \ket{\Phi}, \mlvec{U})$, with $\mlvec{U}$
  being the ansatz defined in Definition~\ref{def:frozen-ansatz}.
  Under gradient flow with learning rate $\eta$ and noisy gradient estimation
  $\nabla L + \noisevec(t) = \big(\frac{\partial L}{\partial \theta_{l}} + \noise{l}(t)\big)_{l\in[p]}$,
  the output state $\ket{\Psi(t)}$ follow the dynamics
  \begin{align}
    \label{eq:noisy-vqe-dynamics}
    \frac{d}{dt}\ket{\Psi(t)} &= -(\eta\cdot p\cdot Z(\mlvec{H},d))\tr_{1}(\mlvec{Y}([\mlvec{M}, \newketbra{\Psi(t)}{\Psi(t)}]\otimes \mlvec{I}_{d\times d})) \ket{\Psi(t)} + \eta \sum_{l=1}^{p}i\noise{l}\mlvec{H}_{l}\ket{\Psi(t)}.
  \end{align}
  Here
  $\mlvec{H}_{l}$ are function of $\mlvec\theta(t)$, defined as
  $\mlvec{U}_{l:p}(\mlvec\theta(t))\mlvec{H}\mlvec{U}^{\dagger}_{l:p}(\mlvec\theta(t))$
  for all $l\in[p]$, and
  $\mlvec{Y}$ is
  defined as $\frac{1}{pZ(\mlvec{H},d)}\sum_{l=1}^{p}\mlvec{H}_{l}^{\otimes 2}$.
\end{restatable}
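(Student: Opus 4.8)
The plan is to follow the structure of the proof of Lemma~\ref{lm:vqe-dynamics} verbatim, since noisy gradient flow differs from the noiseless case only by the additive perturbation $\noisevec(t)$ in each parameter's velocity. First I would reuse the partial-derivative computation already carried out for Lemma~\ref{lm:vqe-dynamics}, which established that
\begin{align}
\frac{\partial L(\mlvec\theta)}{\partial\theta_l} = i\tr\big([\mlvec{M}, \newketbra{\Psi(t)}{\Psi(t)}]\, \mlvec{U}_{l:p}\mlvec{H}\mlvec{U}^{\dagger}_{l:p}\big),
\end{align}
and, from the same calculation, that the sensitivity of the output state to each parameter is $\partial\ket{\Psi(t)}/\partial\theta_l = -i\mlvec{H}_l\ket{\Psi(t)}$, where $\mlvec{H}_l = \mlvec{U}_{l:p}\mlvec{H}\mlvec{U}^{\dagger}_{l:p}$. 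This latter identity is the only structural fact I need, and it follows immediately from $\partial\mlvec{U}_{1:p}/\partial\theta_l = -i\mlvec{H}_l\mlvec{U}_{1:p}$ derived in that proof.

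Next I would substitute the noisy gradient-flow update $d\theta_l/dt = -\eta(\partial L/\partial\theta_l + \noise{l}(t))$ into the chain rule $\frac{d}{dt}\ket{\Psi(t)} = \sum_{l=1}^{p}(d\theta_l/dt)\,\partial_{\theta_l}\ket{\Psi(t)}$. Because the update is \emph{affine} in the noise, the sum splits cleanly into two pieces: the piece proportional to $\partial L/\partial\theta_l$ reproduces exactly the noiseless dynamics of Lemma~\ref{lm:vqe-dynamics}, namely $-(\eta p Z)\tr_1(\mlvec{Y}([\mlvec{M},\newketbra{\Psi(t)}{\Psi(t)}]\otimes\mlvec{I}))\ket{\Psi(t)}$, while the piece proportional to $\noise{l}(t)$ becomes $\sum_{l=1}^{p}(-\eta\noise{l})(-i\mlvec{H}_l)\ket{\Psi(t)} = \eta\sum_{l=1}^{p}i\noise{l}\mlvec{H}_l\ket{\Psi(t)}$, matching the claimed extra term. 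Crucially, the factorization $\partial_{\theta_l}\ket{\Psi(t)} = -i\mlvec{H}_l\ket{\Psi(t)}$ is the same for both pieces, so the noise contribution inherits the $\mlvec{H}_l$-structure without any further manipulation.

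Since both terms arise by direct substitution, there is no genuine analytic difficulty here; the work is entirely bookkeeping, and the hard part is only ensuring consistency with the noiseless lemma. The one point deserving care is verifying that summing the $\partial L/\partial\theta_l$ contributions reproduces the $\mlvec{Y}$-weighted partial-trace form. This step uses the definition $\mlvec{Y} = \frac{1}{pZ}\sum_l \mlvec{H}_l^{\otimes 2}$ together with the partial-trace identity $\tr_1\big(\mlvec{H}_l^{\otimes 2}([\mlvec{M},\newketbra{\Psi(t)}{\Psi(t)}]\otimes\mlvec{I})\big) = \tr\big([\mlvec{M},\newketbra{\Psi(t)}{\Psi(t)}]\mlvec{H}_l\big)\mlvec{H}_l$, which converts the scalar-trace-times-$\mlvec{H}_l$ form appearing in each summand into the compact $\mlvec{Y}$-weighted partial-trace expression. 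The main obstacle, to the extent there is one, is thus purely notational: making the tensor-factor conventions and the placement of $\tr_1$ match those of Lemma~\ref{lm:vqe-dynamics} so that the two derivations align term by term and only the additive noise term survives as a genuinely new contribution.
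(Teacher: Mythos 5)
Your proposal is correct and follows essentially the same route as the paper's proof: the paper likewise recomputes $\partial L/\partial\theta_l = i\tr([\mlvec{M},\newketbra{\Psi(t)}{\Psi(t)}]\mlvec{H}_l)$, substitutes the noisy update $d\theta_l/dt = -\eta(\partial L/\partial\theta_l + \noise{l})$ into the chain rule for $\ket{\Psi(t)}$, and splits the result by linearity into the noiseless $\mlvec{Y}$-weighted partial-trace term plus the $\eta\sum_l i\noise{l}\mlvec{H}_l\ket{\Psi(t)}$ correction. The only cosmetic difference is that the paper rewrites the full derivation rather than citing Lemma~\ref{lm:vqe-dynamics} for the intermediate identities, which is immaterial.
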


The following modified version of Lemma~\ref{lm:vqe-perturbation} implies that the
main theorem holds with noisy gradient estimation:
\begin{restatable}[VQE perturbation lemma under noisy gradients]{lemma}{noisyvqeperturbation}
    \label{lm:noisy-vqe-perturbation}
    If
    \begin{itemize}
    \item the output state at initialization $\ket{\Psi(0)}$ has non-negligible overlap with the ground state
            $\ket{\Psi^{\star}}$: $|\bra{\Psi(0)}\Psi^{\star}\>|^2\geq \Omega(\frac{1}{d})$,
    \item for all $0\leq t\leq T$,
            $\opnorm{\mlvec{Y}(t) - \mlvec{Y}^{\star}(t)} \leq O\big(\frac{\lambda_{2} -\lambda_{1}}{\lambda_{d}-\lambda_{1}}\cdot \frac{1}{d}\big)$,
    \item for all $0\leq t\leq T$,
    $\|\noisevec(t)\|_{\infty} \leq c'\frac{Z}{\opnorm{\mlvec{H}}}(\lambda_{2} -\lambda_{1})\sqrt{1 - |\bra{\Psi(t)}\Psi^\star\>|^2} |\bra{\Psi(t)}\Psi^\star\>|$
            for some positive constant $c'$,
    \end{itemize}
    then under the dynamics
    \begin{align}
      \frac{d}{dt}\ket{\Psi(t)} = -\tr_{1}(\mlvec{Y} ([\mlvec{M}, \newketbra{\Psi(t)}{\Psi(t)}]\otimes \mlvec{I}_{d\times d})) \ket{\Psi(t)} + \frac{1}{pZ}\sum_{l=1}^{p}i\noise{l}\mlvec{H}_{l}\ket{\Psi(t)},
    \end{align}
the output states converges to the ground state such that for all $0\leq t\leq T$:
\begin{align}
    1 - |\bra{\Psi(t)}\Psi^\star\>|^2\leq \exp(-c\frac{\lambda_{2}-\lambda_{1}}{\log d} t),\ \text{for
  some constant $c$.}
\end{align}
\end{restatable}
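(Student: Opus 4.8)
The plan is to imitate the proof of the noiseless perturbation lemma (Lemma~\ref{lm:vqe-perturbation}) while carrying one additional term coming from $\noisevec$. I would track the squared ground-state overlap $f(t):=|\braket{\Psi^{\star}|\Psi(t)}|^{2}$ and its complement $g(t):=1-f(t)$, aiming to derive a single differential inequality for $g$ that forces exponential decay. Setting $a(t)=\braket{\Psi^{\star}|\Psi(t)}$ and using $\tfrac{d}{dt}f=2\,\mathrm{Re}(\overline{a}\,\dot a)$ with $\dot a=\bra{\Psi^{\star}}\frac{d}{dt}\ket{\Psi}$, I would substitute the dynamics from the statement (with $\eta=1/(pZ)$, equivalently Lemma~\ref{lm:noisy-vqe-dynamics}) and split the velocity $\tfrac{d}{dt}\ket{\Psi}$ into three pieces: (i) the idealized Riemannian-gradient-flow term $-(\mlvec{M}-\bra{\Psi}\mlvec{M}\ket{\Psi})\ket{\Psi}$ recovered when $\mlvec{Y}=\mlvec{Y}^{\star}$; (ii) the $\mlvec{Y}$-deviation term $-\tr_{1}\big((\mlvec{Y}-\mlvec{Y}^{\star})([\mlvec{M},\newketbra{\Psi}{\Psi}]\otimes\mlvec{I})\big)\ket{\Psi}$; and (iii) the new noise term $\tfrac{1}{pZ}\sum_{l}i\,\noise{l}\,\mlvec{H}_{l}\ket{\Psi}$.

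\textbf{Favorable drift.} For the driving term (i) I would reuse the spectral mechanism behind Lemma~\ref{lm:vqe-perturbation}: expanding $\ket{\Psi}$ in the eigenbasis of $\mlvec{M}$ gives $\tfrac{d}{dt}f\big|_{\text{(i)}}=2\big(\bra{\Psi}\mlvec{M}\ket{\Psi}-\lambda_{1}\big)f\ge 2(\lambda_{2}-\lambda_{1})\,g\,f$, since $\bra{\Psi}\mlvec{M}\ket{\Psi}-\lambda_{1}=\sum_{i\ge 2}(\lambda_{i}-\lambda_{1})|c_{i}|^{2}\ge(\lambda_{2}-\lambda_{1})g$. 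This force pushes $\ket{\Psi}$ toward $\ket{\Psi^{\star}}$ and supplies the $(\lambda_{2}-\lambda_{1})$ rate; integrating the resulting logistic inequality from the seed $f(0)\ge\Omega(1/d)$ is exactly what yields the $\log d$ factor in the exponent, as in the noiseless lemma.

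\textbf{Controlling the two perturbations.} For (ii) I would bound its contribution to $\tfrac{d}{dt}f$ by $2|a|\cdot\opnorm{\tr_{1}((\mlvec{Y}-\mlvec{Y}^{\star})([\mlvec{M},\newketbra{\Psi}{\Psi}]\otimes\mlvec{I}))}$ and then combine the hypothesis $\opnorm{\mlvec{Y}-\mlvec{Y}^{\star}}\le O\big(\tfrac{\lambda_{2}-\lambda_{1}}{\lambda_{d}-\lambda_{1}}\tfrac1d\big)$ with the commutator estimate $\|[\mlvec{M},\newketbra{\Psi}{\Psi}]\|\lesssim(\lambda_{d}-\lambda_{1})\sqrt{g}$ (the same bound used in the proof of Lemma~\ref{lm:vqe-slow-theta-new}) to show this piece is a controlled fraction of the favorable drift; this step is identical to the noiseless argument. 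The genuinely new piece is (iii): I would estimate its contribution by $2|a|\cdot\tfrac{1}{pZ}\|\noisevec\|_{\infty}\sum_{l}|\bra{\Psi^{\star}}\mlvec{H}_{l}\ket{\Psi}|\le \tfrac{2|a|\opnorm{\mlvec{H}}}{Z}\|\noisevec\|_{\infty}$, using $\opnorm{\mlvec{H}_{l}}=\opnorm{\mlvec{H}}$ since each $\mlvec{H}_{l}$ is a unitary conjugate of $\mlvec{H}$. Substituting the hypothesised bound $\|\noisevec\|_{\infty}\le c'\tfrac{Z}{\opnorm{\mlvec{H}}}(\lambda_{2}-\lambda_{1})\sqrt{1-|a|^{2}}\,|a|$ collapses the factors of $Z$ and $\opnorm{\mlvec{H}}$ and leaves a noise contribution whose size, by the state-dependent factor $\sqrt{1-|a|^{2}}\,|a|$, is to be matched against the favorable drift $gf$. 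I would then fold (i)--(iii) into a single differential inequality for $g$ and integrate it as in the noiseless case.

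\textbf{Main obstacle.} The delicate point --- and the reason the tolerable noise level is made to depend on the current overlap rather than being a fixed constant --- is guaranteeing that the worst-case (adversarial-sign) noise in (iii) never overwhelms the favorable drift at \emph{any} stage of training. Near initialization the overlap $|a|$ is only $\Theta(1/\sqrt{d})$, so the drift $gf$ is weak; near convergence $g\to 0$, so the drift again degrades. The factor $\sqrt{1-|a|^{2}}\,|a|$ is chosen precisely to keep the noise term proportional to (and, for small enough $c'$, dominated by) the favorable drift across both regimes, and the hard part is verifying this domination uniformly in $t$ --- in particular, checking that the $g\to 0$ end still admits a net exponentially contracting inequality. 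Once this matching is established, the remaining bookkeeping (the logistic integration and the emergence of the $\log d$ factor from the $\Omega(1/d)$ seed) is inherited essentially verbatim from Lemma~\ref{lm:vqe-perturbation}, and the continuity/bootstrap wrapping that promotes the pointwise bound to the whole interval $[0,T]$ carries over unchanged.
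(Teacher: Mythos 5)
Your skeleton matches the paper's proof almost everywhere: same potential $h=|\braket{\Psi^{\star}|\Psi(t)}|^{2}$, same three-way split of the velocity, same lower bound $2(\lambda_{2}-\lambda_{1})h(1-h)$ for the drift, the same treatment of the $\mlvec{Y}$-deviation term via the hypothesis and the commutator estimate, and the same logistic integration from the $\Omega(1/d)$ seed. But there is a genuine gap in your treatment of the noise term (iii), and it is exactly the point you flag as the "main obstacle" and then leave unresolved. Your bound on the noise contribution is $2|a|\cdot\frac{\opnorm{\mlvec{H}}}{Z}\|\noisevec\|_{\infty}$, which after substituting the hypothesis gives a term of size $c'(\lambda_{2}-\lambda_{1})\,h\sqrt{1-h}$. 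The favorable drift is only $2(\lambda_{2}-\lambda_{1})\,h(1-h)$, so the ratio of noise to drift scales as $c'/\sqrt{1-h}$, which blows up as $h\to 1$. With your bound the differential inequality cannot close once $1-h\lesssim c'^{2}$: the "matching across both regimes" you hope for is impossible at this level of estimation, so no exponentially contracting inequality follows, and the lemma's conclusion (valid for all $t\le T$, hence for arbitrarily small error) is out of reach.

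The missing idea is a cancellation coming from the fact that the noise enters through an \emph{anti-Hermitian} generator (it perturbs parameters, hence rotates the state unitarily). Writing $N(t):=-\frac{1}{pZ}\sum_{l}i\noise{l}\mlvec{H}_{l}$, the noise contribution to $\frac{d}{dt}h$ is exactly
\begin{align}
\tr\big(N(t)\,[\newketbra{\Psi^{\star}}{\Psi^{\star}},\newketbra{\Psi(t)}{\Psi(t)}]\big),
\end{align}
and the component of $N$ along $\newketbra{\Psi^{\star}}{\Psi^{\star}}$ drops out: since $N^{\dagger}=-N$, the quantity $\bra{\Psi^{\star}}N\ket{\Psi^{\star}}$ is purely imaginary and contributes only a phase to $\braket{\Psi^{\star}|\Psi}$, not a change in its magnitude. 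Quantitatively, the commutator $[\newketbra{\Psi^{\star}}{\Psi^{\star}},\newketbra{\Psi}{\Psi}]$ has rank two with trace norm $2\sqrt{h(1-h)}$ (this is the paper's Lemma~\ref{lm:technical_commutation1}), so the noise term is bounded below by $-2\opnorm{N(t)}\sqrt{h(1-h)}$; combined with $\opnorm{N(t)}\le \frac{\opnorm{\mlvec{H}}}{Z}\|\noisevec\|_{\infty}\le c'(\lambda_{2}-\lambda_{1})\sqrt{h(1-h)}$ this yields $-2c'(\lambda_{2}-\lambda_{1})h(1-h)$, i.e.\ a constant fraction of the favorable drift \emph{uniformly} in $h$, after which the noiseless argument goes through verbatim. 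Your estimate $|\dot a|\le\opnorm{N}$ followed by multiplying by $2|a|$ discards precisely this factor of $\sqrt{1-h}$, and that single lost factor is the difference between a proof and a bound that fails near convergence.
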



\section{Ansatz-dependent convergence theorem}
\label{sec:subgroup}
Theorem~\ref{thm:vqe-convergence-full} provides a sufficient condition on the
number of classical parameters to ensure a VQE instance to converge
with a linear rate. The bound depends on the system dimension $d$ as well as the
spectral ratio $\kappa =\frac{\lambda_d - \lambda_1}{\lambda_2 - \lambda_1}$. In
this section, we develop a tighter ansatz-dependent bound.

In Section~\ref{subsec:ansatz-dependent-bound}, we provide an ansatz-dependent
bound on the over-parameterization  threshold. We connect the
over-parameterization threshold to
the effective dimension $\deff$ and effective spectral ratio $\kappaeff$, two
ansatz-dependent quantities that will
defined later in Definition~\ref{def:effective-quantities}. As shown later in
Section~\ref{sec:exp_threshold}, $\deff$ and $\kappaeff$ can be significantly
smaller than $d$ and $\kappa$ for physical problems with good ansatz designs.

In Section~\ref{subsec:estimate-eff}, we describe a procedure for estimating
$\deff$ and $\kappaeff$ given an ansatz design. This leads to a principled
method for evaluating the performance of a given ansatz without repeatedly
performing gradient descent. In contrast, as the convergence of VQE training
depends highly on initializations, most existing empirical studies evaluate and
compare the ansatz performance by repeatedly optimizing the VQE instance over random
initializations, even for toy-sized VQE problems with small $d$.

\subsection{Ansatz-dependent upper bound}
\label{subsec:ansatz-dependent-bound}
Given an ansatz design $\ansatzname$, recall that $G_{\ansatzname}$ is a subgroup of
$SU(d)$ associated with $\ansatzname$ defined in Section~\ref{sec:prelim},
containing all the realizable unitary matrices by $\ansatzname$ with varying number of
layers $L=0, 1, 2, \cdots$. Fixing the input state $\ket{\Phi}$, if $G_{\ansatzname}$
is a proper subgroup of $SU(d)$, output state $\ket{\Psi} = \mlvec{U}\ket{\Phi}$
is restricted to a subspace of $\complex^{d}$, leading to a tighter bound on the
number of parameters for convergence. We now formalize the intuition
using the group theory language.

A finite-dimensional representation $(W, \grouphomo)$ of a group $G$ is
specified by a vector space $W$ and a group homomorphism
$\grouphomo \colon G \to GL(W)$, such that
$\grouphomo(g_1)\grouphomo(g_2) = \grouphomo(g_1g_2)$ for all $g_1,g_2 \in G$.
And the representation is said to be \emph{unitary} if $\grouphomo(g)$ is unitary for all
$g \in G$. Apparently, as $G_{\ansatzname}$ are composed of unitary matrices, the identity map
furnishes a unitary representation of $G_{\ansatzname}$ (which we will refer to
as the \emph{natural representation}).

An important concept in the group representation theory is
\emph{irreducibility}. Given a representation $(W, \grouphomo)$ of
$G_{\ansatzname}$, a subspace $V \subseteq W$ is
said to be \emph{invariant} if $\grouphomo(g)v \in V$ for all $v \in V$ and
$g \in G_{\ansatzname}$. A representation is further said to be
\emph{irreducible} if it has no invariant subspaces other than the trivial
subspaces consisting of the empty set $\emptyset$ and the whole space $W$. We
are especially interested in the setting where $G_{\ansatzname}$ is reducible,
as the reducibility induces a decomposition of the ambient space $\mathcal{H} = \complex^{d}$:
\begin{proposition}[Adapted from {\cite[Proposition 4.27]{hall2003lie}}]
\label{prop:unitary-reducibility}
Let $G$ be a group with \emph{unitary} representation $\Pi$ acting on a vector space $W$. Then this representation is \emph{completely reducible} i.e. $W$ is isomorphic to a direct sum $V_1 \oplus \dots \oplus V_m$ where each $V_{j}$ is an invariant subspace which itself has no non-trivial invariant subspaces.
\end{proposition}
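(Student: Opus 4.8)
The plan is to exploit the one feature that distinguishes a \emph{unitary} representation from an arbitrary one: each $\grouphomo(g)$ preserves the inner product on $W$, and therefore the orthogonal complement of any invariant subspace is again invariant. Once this is in hand, complete reducibility follows by a routine induction on $\dim W$ (which is finite, being $d$ in our setting where $W = \complex^{d}$), peeling off invariant pieces one orthogonal summand at a time.

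The first and essential step is the \emph{invariant-complement lemma}: if $V \subseteq W$ is invariant under $\grouphomo$, then so is $V^{\perp}$. To prove it I would take any $w \in V^{\perp}$ and $g \in G$ and check that $\grouphomo(g)w \perp V$. For every $v \in V$,
\begin{align}
  \ip{v}{\grouphomo(g)w} = \ip{\grouphomo(g)^{\dagger}v}{w} = \ip{\grouphomo(g^{-1})v}{w},
\end{align}
using that $\grouphomo(g)$ is unitary, so $\grouphomo(g)^{\dagger} = \grouphomo(g)^{-1} = \grouphomo(g^{-1})$. Since $V$ is invariant, $\grouphomo(g^{-1})v \in V$, and as $w \in V^{\perp}$ this inner product vanishes; hence $\grouphomo(g)w \in V^{\perp}$, establishing the lemma.

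With the lemma available, I would induct on $\dim W$. If $W$ has no nontrivial invariant subspace it is already irreducible, and the decomposition is the trivial one ($m=1$, $V_1 = W$). Otherwise choose a nonzero proper invariant subspace $V$; by the lemma $V^{\perp}$ is invariant as well, so $W = V \oplus V^{\perp}$ splits orthogonally into two invariant subspaces, each of strictly smaller dimension. Restricting $\grouphomo$ to each summand gives unitary representations on $V$ and on $V^{\perp}$, to which the inductive hypothesis applies; concatenating the resulting irreducible decompositions of the two summands yields the claimed $W \cong V_1 \oplus \dots \oplus V_m$ with each $V_j$ irreducible.

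The only conceptually nontrivial point — and the place where the unitarity hypothesis is indispensable — is the invariant-complement lemma. For a general (non-unitary) representation the orthogonal complement of an invariant subspace need \emph{not} be invariant, and complete reducibility genuinely fails (the standard counterexample is the unipotent action $t \mapsto \big(\begin{smallmatrix} 1 & t \\ 0 & 1 \end{smallmatrix}\big)$, whose unique line is invariant but has no invariant complement). Thus unitarity is exactly what supplies an invariant complement and drives the splitting; the remaining induction is entirely mechanical given finite-dimensionality of $W = \complex^{d}$.
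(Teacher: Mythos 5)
Your proof is correct. The paper itself gives no proof of this proposition --- it is quoted directly from Hall's textbook (Proposition 4.27) --- and your argument (unitarity implies the orthogonal complement of an invariant subspace is invariant, followed by induction on the finite dimension of $W$) is exactly the standard proof given in that reference, so your proposal matches the intended argument.
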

By Proposition~\ref{prop:unitary-reducibility}, the natural representation of $G_{\ansatzname}$
induces a decomposition of the state space
$\mathcal{H} = V_1\oplus\dots\oplus V_{m}$.
We now define the ansatz compatibility and the key quantities $\deff$ and $\kappaeff$ for a VQE instance
$(\mlvec{M}, \ket{\Phi}, \mlvec{U})$ using this decomposition.
\begin{definition}[Compatibility of ansatz]
  \label{def:ansatz-compatibility}
  Consider a VQE instance $(\mlvec{M}, \ket{\Phi}, \mlvec{U})$ with ansatz
  design $\ansatzname$.
  Let $\mathcal{H} = V_1\oplus\dots\oplus V_{m}$ be the completely-reduced
  decomposition induced by the ansatz design $\ansatzname$ through the natural
  representation of $G_{\ansatzname}$ and let $\ket{\Psi^{\star}}$ denote the
  ground state of $\mlvec{M}$. The ansatz design $\ansatzname$ is
  said to be compatible with the VQE problem if there
  exists $j\in[m]$ such that both the input state $\ket{\Phi}$ and the target
  ground state $\ket{\Psi^{\star}}$ lie within the invariant subspace $V_{j}$.
  We will drop the subscript $j$ and refer to this subspace as $V$ when there
  is no ambiguity.
\end{definition}
As we see in Section~\ref{subsec:estimate-eff} and later in
Section~\ref{sec:exp_threshold}, all the combinations and physical problems
(including Kitaev model, Ising model and Heisenberg model) we
examine are compatible.

The effective quantities for compatible ansatz can be defined using the
invariant subspace:
\begin{definition}[Effective dimension $\deff$ and effective ratio $\kappaeff$]
  \label{def:effective-quantities}
  Consider a VQE instance $(\mlvec{M}, \ket{\Phi}, \mlvec{U})$ with compatible ansatz
  design $\ansatzname$. And let $V$ denote the invariant subspace where the input and
  the ground state lies with projection
  $\mlvec{\Pi}=\subspacecolumn\subspacecolumn^{\dagger}$ (here
  $\subspacecolumn\in\complex^{d\times \deff}$ is an arbitrary set of
  orthonormal basis).
  The effective dimension $\deff$ is defined as the dimension of $V$.
  The effective spectrum is defined as the ordered eigenvalues $(\lambda_{1}',\cdots,\lambda_{\deff}')$ of the Hermitian
  $\subspacecolumn^{\dagger}\mlvec{M}\subspacecolumn$.
  The effective spectral ratio $\kappaeff$ is defined as $\frac{\lambda'_{\deff} - \lambda'_{1}}{\lambda'_{2}-\lambda'_{1}}$.
  The effective generating Hamiltonian $\Heff$ is defined as $\subspacecolumn^{\dagger}\mlvec{H}\subspacecolumn$.
\end{definition}
Given the projection $\mlvec{\Pi}$ onto $V$, the basis $\subspacecolumn$
is not unique, but allow a $\deff\times \deff$ unitary transformation. This does
not introduce any ambiguity in the definition of the effective spectrum as
unitary transformations does not change the eigenvalues of $\subspacecolumn^{\dagger}\mlvec{M}\subspacecolumn$.

The Killing-Cartan classification indicates that the subgroup $G_{\ansatzname}$
restricted on the invariant subspace $V$ must be one of the simple lie groups. Here we
focus on the case where the subgroup $G_{\ansatzname}$ restricted
on the invariant subspace $V$ is a special unitary group $SU(\deff)$. Similar
results can be proved for special orthogonal, symplectic group by replacing the
integral forumla, which can be found for example in \cite{collins2006integration}).
By definition $V$ is invariant under the action of
any operator represented by ansatz $\ansatzname$, indicating the dynamics of the
output state is restricted to the subspace $V$ spanned by $\subspacecolumn$. By
transforming all the Hamiltonians and the input state
by $\subspacecolumn$ in the proof of Theorem~\ref{thm:vqe-convergence-full}, we
have the following corollary:
\begin{corollary}
\label{cor:effective-convergence}
Let $(\mlvec{M}, \ket{\Phi}, \mlvec{U})$ be a VQE instance using compatible
ansatz design $\ansatzname$ with $\deff$, $\kappaeff$, $\Heff$ and $(\lambda_{1}',\cdots, \lambda_{\deff}')$ as defined in
Definition~\ref{def:ansatz-compatibility} and that the distribution of the subgroup $G_{\ansatzname}$
restricted to the subspace is Haar measure over special unitary matrices. Let $\ket{\Psi^{\star}}$ denote the
ground state of $\mlvec{M}$ and $\ket{\Psi(t)}$ the output state at time $t$.
If the number of parameters $p$ is greater than or equal to an over-parameterization threshold
$p_{\mathrm{th}}$ of order
      $O\big(
          \kappa_{\mathrm{eff}}^4,
          \frac{d_{\mathrm{eff}}^4}{Z({\Heff},d_{\mathrm{eff}})^{3}},
          \log\left({d_{\mathrm{eff}}}\right)
        \big)$,
then with probability $\geq 0.99$, under gradient flow with learning rate
of $\eta = \frac{1}{pZ(\Heff,\deff)}$,
the output state converges to the ground state with error
$\epsilon = 1 - |\<\Psi(t)|\Psi^\star\>|^2$ in time
$T_\epsilon = O\big(\frac{\log d_\mathrm{eff}}{\lambda'_2 - \lambda'_1}\log\frac{1}{\epsilon}\big)$. The success probability may be boosted to $1-\delta$ for any $0 \le \delta \le 1$ using $O\left(\log\frac{1}{\delta}\right)$ random restarts.
\end{corollary}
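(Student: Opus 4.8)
The plan is to prove Corollary~\ref{cor:effective-convergence} by a reduction to Theorem~\ref{thm:vqe-convergence-full}: since the compatible ansatz confines the entire training trajectory to the invariant subspace $V$, the $d$-dimensional VQE instance behaves identically to a $\deff$-dimensional instance obtained by compressing every object onto $V$ through the isometry $\subspacecolumn$. First I would argue that the dynamics never leaves $V$. Because $\ket{\Phi}\in V$ by compatibility, and every factor of $\mlvec{U}(\mlvec\theta)$ — both the Haar unitaries $\mlvec{U}_l\in G_{\ansatzname}$ and the rotations $\exp(-i\theta_l\mlvec{H})$ with $\mlvec{H}\in\ansatzname$ — leaves $V$ invariant (the latter because $\exp(-i\theta\mlvec{H})\in G_{\ansatzname}$), the output state $\ket{\Psi(t)}=\mlvec{U}(\mlvec\theta(t))\ket{\Phi}$ stays in $V$ for all $t$. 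In particular $\mlvec{H}V\subseteq V$, so $\mlvec{H}$ block-decomposes and $\subspacecolumn^{\dagger}\exp(-i\theta\mlvec{H})\subspacecolumn=\exp(-i\theta\Heff)$.

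Next I would set up the compressed instance $(\mlvec{M}',\ket{\Phi'},\mlvec{U}')$ with $\mlvec{M}'=\subspacecolumn^{\dagger}\mlvec{M}\subspacecolumn$, $\ket{\Phi'}=\subspacecolumn^{\dagger}\ket{\Phi}$, and restricted unitaries $\mlvec{U}'_l=\subspacecolumn^{\dagger}\mlvec{U}_l\subspacecolumn$, and verify that it is a legitimate $\deff$-dimensional instance of the type handled by Definition~\ref{def:frozen-ansatz}. The restricted ansatz has the required layered form $\mlvec{U}'(\mlvec\theta)=\mlvec{U}'_p\exp(-i\theta_p\Heff)\cdots\mlvec{U}'_1\exp(-i\theta_1\Heff)\mlvec{U}'_0$ by the block-decomposition above; the hypothesis that $G_{\ansatzname}$ restricted to $V$ carries the Haar measure over $SU(\deff)$ guarantees the $\mlvec{U}'_l$ are i.i.d.\ Haar on $SU(\deff)$; and a global-phase shift makes $\Heff$ traceless without loss of generality, so that $Z(\Heff,\deff)=\tr(\Heff^{2})/(\deff^{2}-1)$ is the relevant normalizing constant. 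The central identity is that the loss is unchanged under compression, $L(\mlvec\theta)=\bra{\Phi}\mlvec{U}^{\dagger}(\mlvec\theta)\mlvec{M}\mlvec{U}(\mlvec\theta)\ket{\Phi}=\bra{\Phi'}\mlvec{U}'^{\dagger}(\mlvec\theta)\mlvec{M}'\mlvec{U}'(\mlvec\theta)\ket{\Phi'}$, because the state never leaves $V$; consequently $\nabla_{\mlvec\theta}L$ and the entire gradient-flow trajectory $\mlvec\theta(t)$ coincide in the two pictures, with $\subspacecolumn^{\dagger}\ket{\Psi(t)}$ tracing out the compressed output state.

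Then I would check that the spectral data match. Since $\ket{\Psi^{\star}}\in V$, it is an eigenvector of $\mlvec{M}'$ with eigenvalue $\lambda_1$; the min-max characterization $\lambda'_1=\min_{v\in V,\,\snorm{v}=1}\bra{v}\mlvec{M}\ket{v}\geq\lambda_1$ together with this eigenvalue achieving the minimum forces $\lambda'_1=\lambda_1$, so the ground state of the compressed instance is exactly $\subspacecolumn^{\dagger}\ket{\Psi^{\star}}$ and the spectral ratio of $\mlvec{M}'$ is precisely $\kappaeff$ (this also requires the implicit gap $\lambda'_1<\lambda'_2$, under which $\kappaeff$ is well defined). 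Applying Theorem~\ref{thm:vqe-convergence-full} to $(\mlvec{M}',\ket{\Phi'},\mlvec{U}')$ then yields exactly the stated threshold $O(\kappaeff^{4},\,\deff^{4}/Z(\Heff,\deff)^{3},\,\log\deff)$, learning rate $\eta=1/(pZ(\Heff,\deff))$, convergence time $O(\frac{\log\deff}{\lambda'_2-\lambda'_1}\log\frac{1}{\epsilon})$, and success probability $0.99$ boostable by random restarts. Finally, because $\subspacecolumn$ is an isometry and both states lie in $V$, overlaps are preserved, $\abs{\langle\Psi(t)|\Psi^{\star}\rangle}^{2}=\abs{\langle\subspacecolumn^{\dagger}\Psi(t)|\subspacecolumn^{\dagger}\Psi^{\star}\rangle}^{2}$, so convergence of the compressed instance transfers verbatim to the original error $\epsilon=1-\abs{\langle\Psi(t)|\Psi^{\star}\rangle}^{2}$.

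The main obstacle will be the bookkeeping in the second step: confirming that each ingredient Theorem~\ref{thm:vqe-convergence-full} relies on — the precise layered structure of the ansatz, the i.i.d.\ Haar distribution on $SU(\deff)$ of the restricted frozen unitaries, and the tracelessness normalization of $\Heff$ — survives the compression by $\subspacecolumn$ intact. The representation-theoretic content (that $V$ is an irreducible component on which $G_{\ansatzname}$ acts as $SU(\deff)$, and that the Haar measure pushes forward to the Haar measure under the restriction homomorphism) is the conceptual crux; once it is granted, as the corollary's hypothesis does, the reduction is essentially a change of basis and the convergence statement follows with no further probabilistic analysis.
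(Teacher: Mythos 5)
Your proposal is correct and follows essentially the same route as the paper: confine the trajectory to the invariant subspace $V$, compress every object through the isometry $\subspacecolumn$, check that the compressed problem is a bona fide $\deff$-dimensional instance (Haar unitaries on $SU(\deff)$, traceless $\Heff$, matching ground state and spectrum), and reduce to Theorem~\ref{thm:vqe-convergence-full}. The differences are only presentational: the paper proves invariance and the compressed dynamics at the level of the gradient-flow equation (Lemma~\ref{lm:vqe-dynamics}) and then re-runs the main argument via the subgroup integration formula (Lemma~\ref{lm:subgroup_integration}), whereas you argue invariance directly from $G_{\ansatzname}$ preserving $V$, observe that the two loss functions coincide as functions of $\mlvec{\theta}$ (so the trajectories are literally identical), and invoke the theorem as a black box; both are sound.
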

The proof for Corollary~\ref{cor:effective-convergence} is postponed to
Section~\ref{sec:app_subgroup}. For general ansatz design $\ansatzname$ including HEA with
$G_{\ansatzname} = SU(d)$, the effective dimension $\deff$ (resp. effective
ratio $\kappaeff$) is the same as the system dimension $d$ (resp. the ratio
$\kappa=\frac{\lambda_{d} - \lambda_{1}}{\lambda_{2}-\lambda_{1}}$). In fact
this is the case for fully-trainable ansatz that contain universal gate sets and
satisfy the premise of \cite{brandao2016local}.
On the other hand, a problem-specific compatible ansatz design can have much
smaller $\deff$ and $\kappaeff$ and achieve reasonable performance with much
fewer number of parameters. As we see in Section~\ref{sec:exp_threshold}, for
physical problems like transverse field Ising models and Heinsenberg models,
certain HVA designs can have $\deff$ and $\kappaeff$ orders of magnitudes
smaller than $d$ and $\kappa$.

\subsection{Estimating \texorpdfstring{$\deff$}{deff} and \texorpdfstring{$\kappaeff$}{kappa\_eff}}
\label{subsec:estimate-eff}
Given a VQE problem $(\mlvec{M}, \ket{\Phi}, \mlvec{U})$ with a compatible ansatz design
$\ansatzname$, we can estimate the column space of $\subspacecolumn$ of the
invariant subspace by estimating the support of the matrix
\begin{align}
\hat{\mlvec{\Pi}} = \frac{1}{R}\sum_{r=1}^R\mlvec{U}_{r}\newketbra{\Phi}{\Phi}\mlvec{U}^{\dagger}_{r}\label{eq:estimate_pi}
\end{align}
with $\mlvec{U}_{r}$ sampled $i.i.d.$ from the Haar measure over
$G_{\ansatzname}$. Empirically we approximate the Haar measure over $G_{\ansatzname}$ by calculating
\begin{align}
  \mlvec{U}(\mlvec\phi) = \prod_{l'=1}^{L_\mathsf{sample}}\prod_{k=1}^{K}\exp(-i\phi_{l',k} \generatorH{k}) \label{eq:approximate_haar}
\end{align}
for large $L_{\mathsf{sample}}$ and randomly initialized
$\{\phi_{l',k}\}_{k\in[K], l'\in[L_\mathsf{sample}]}$
(throughout this work $\phi_{l,k}$ are sampled uniformly and $i.i.d.$ from the
whole real space).
Any orthonormal
basis of the support of $\mlvec{\Pi}$ can be used as $\subspacecolumn$ to
estimate $\deff$ and $\kappaeff$ using
Definition~\ref{def:effective-quantities}. The computational cost for the
procedure depends on the quantities $R$ and $L_{\mathsf{sample}}$, and can be
$\mathsf{poly}(\deff)$ in the worst case, therefore we do not claim a fundamental
superiority in terms of computational complexity for large $d$ and $\deff$ when compared
with the standard practice of directly training VQE over multiple random
initializations and sweeping different number of parameters. However, we do
observe in our experiments that the estimation of $\deff$ and $\kappaeff$ for a
family of problem Hamiltonians is tremendously faster than training over multiple
random initializations and varying number of parameters for a single problem Hamiltonian. For example, it takes
$<0.2$ hours to evaluate transverse field ising model with up to $10$-qubit for
transverse field ranging from 0.1 to 1.5 on an Amazon C5 EC2 instance, while it
takes $~5$ hours to characterize a $4$-qubit instance with transverse field $g=0.3$
by performing training using the same machine.

\paragraph{Example: Kitaev model}
For a concrete example, consider the HVA for the Kitaev model on
square-octagon lattice with external field introduced in \cite{mixedkitaev2021}.
We will see that the proper ansatz design leads to an effective dimension much
smaller than the system dimension ($\deff=76$ v.s. $d=256$) and that the
effective ratio $\kappaeff$ can be orders of magnitudes smaller than
$\kappa=\frac{\lambda_{d} - \lambda_{1}}{\lambda_{2}-\lambda_{1}}$
(Figure~\ref{fig:cond-num-kitaev}).

The problem Hamiltonian for Kitaev models with external field is defined as
\begin{align}
  \mlvec{M}_{\mathsf{Kitaev}}(J_{xy}, h) = \sum_{(u,v)\in S_{Z}} Z_{u}Z_{v} + \frac{J_{xy}}{\sqrt{2}}\big(\sum_{(u,v)\in S_{X}}X_{u}X_{v} + \sum_{(u,v)\in S_{Y}}Y_{u}Y_{v}\big)  + h   \sum_{i=0}^{7}\big(X_{i} + Y_{i} + Z_{i}\big)
\end{align}
with $X_{i}$ denoting the Pauli-$X$ matrix acting on the $i$-th qubit. This
system has coupling in the X, Y, Z directions on edge sets $S_{X}$, $S_{Y}$ and
$S_{Z}$ respectively. The parameter $J_{xy}$ controls the coupling in the $X/Y$-direction and
$h$ controls the strength of the external field.  For $8$-qubit Kitaev models on
square-octagon lattice, by labeling each qubit with indexes $0$ through $7$, the
edge sets are defined as
$S_{X}= \{(0,1), (2,3)\}$, $S_{Y}= \{(1,2), (0,3)\}$, and
$S_{Z}= \{(4,0), (1,5), (3,7), (2,6)\}$ (See Figure~\ref{fig:kitaevconfig} or
Figure 1(c) in \cite{mixedkitaev2021}).

We use the ansatz proposed in \cite{mixedkitaev2021}
$\ansatzname = \{\generatorH{1}, \cdots, \generatorH{6}\}$ with
\begin{align}
  & \generatorH{1} \propto\sum_{(u,v)\in S_{X}}X_{u}X_{v}, \generatorH{4} \propto\sum_{i=0}^{7}X_{i},\nonumber\\
  & \generatorH{2} \propto\sum_{(u,v)\in S_{Y}}Y_{u}Y_{v}, \generatorH{5} \propto\sum_{i=0}^{7}Y_{i},\nonumber\\
  & \generatorH{3} \propto\sum_{(u,v)\in S_{Z}}Z_{u}Z_{v}, \generatorH{6} \propto\sum_{i=0}^{7}Z_{i}.\label{eq:ansatz-kitaev}
\end{align}
In Figure~\ref{fig:invar_sub}, we plot the eigenvalues of $\hat{\mlvec\Pi}$ for
the Kitaev models for input state $\ket{\Phi} = \ket{0}^{\otimes 8}$ and the ansatz specified in
Equation~(\ref{eq:ansatz-kitaev}) with $L_{\mathsf{sample}}$ chosen to be $20$.
The x-axis correponds to the indices of eigenvalues for the $256\times 256$
problem Hamiltonian, and the y-axis corresponds to the sorted eigenvalues.
The spectrums are color-coded for different $R$ ranging from $1$ to $100$, with
blue corresponding to small $R$ and red corresponding to large $R$. For small
$R$, $\hat{\mlvec{\Pi}}$ is restricted to a small subspace. As the number of
samples $R$ increases,  the rank of $\hat{\mlvec\Pi}$ increases, and converges
to a matrix with uniform eigenvalues. Figure~\ref{fig:invar_sub} indicates that the $\ket{\Phi}$ lies within the
$76$-dimensional invariant subspace $V$ embedded in a $256$-dimensional state
space $\mathcal{H}$. It is also verified that the ground state of $\mlvec{M}_{\mathsf{Kitaev}}$ lies
within the subspace $V$ as well. We also compare the effective ratio $\kappaeff$
with $\kappa=\frac{\lambda_{d}-\lambda_{1}}{\lambda_{2}-\lambda_{1}}$ (i.e. the
effective ratio for generic ansatz designs) for a wide range of parameters
$(J_{xy}, h)$ in Figure~\ref{fig:cond-num-kitaev}. We observe that the HVA proposed in
\cite{mixedkitaev2021} reduces $\kappaeff$ by orders of magnitudes.
\begin{figure}[!htbp]
  \begin{minipage}{0.45\textwidth}
    \centering
    \includegraphics[width=0.4\linewidth]{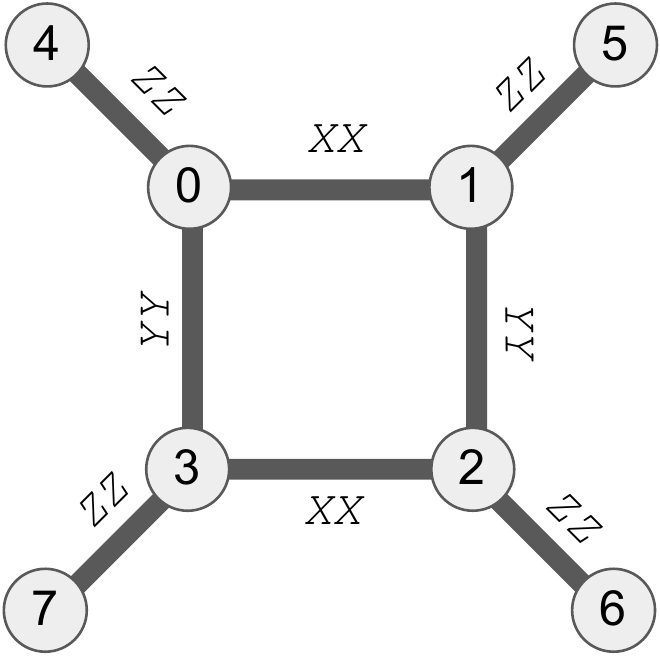}
    \caption{
    Configuration of the $8$-qubit Kitaev model on square-octagon lattice
    defined in \cite{mixedkitaev2021}. Qubits are labeled by $0, 1, \cdots, 7$,
    and each edge corresponds to an interation term. The types of interactions
    $XX, YY$ and $ZZ$
    are as specified in texts.
    }
    \label{fig:kitaevconfig}
  \end{minipage}
  \hfill
  \begin{minipage}{0.45\textwidth}
    \centering
      \includegraphics[width=0.7\linewidth]{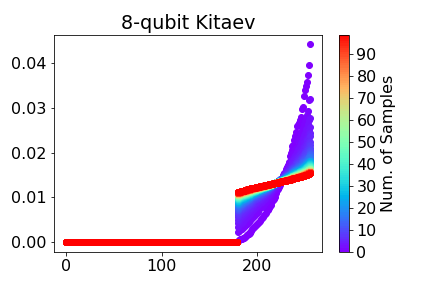}
    \caption{Spectrum of $\hat{\mlvec{\Pi}}$ for $8$-qubit Kitaev
      model with 8 qubits for number of samples $R=1, 2, \cdots, 100$. As
    the number of samples increases (the color changing from blue to red),
    $\hat{\mlvec{\Pi}}$ converges to a Hermitian with
    uniform spectrum, and can thus be good approximation of the normalized
    projection to the invariant subspace $V$.
    }
    \label{fig:invar_sub}
  \end{minipage}
\end{figure}

\begin{figure}[!htbp]
  \centering
  \subfigure[Kitaev Model: N=8, varying $J_{xy}$]{
    \includegraphics[width=0.33\linewidth]{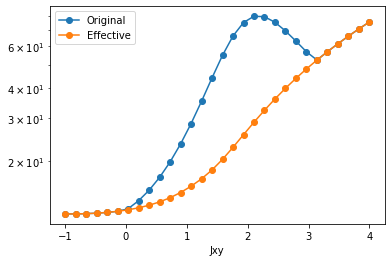}
  }
  \subfigure[Kitaev Model: N=8, varying $h$]{
    \includegraphics[width=0.33\linewidth]{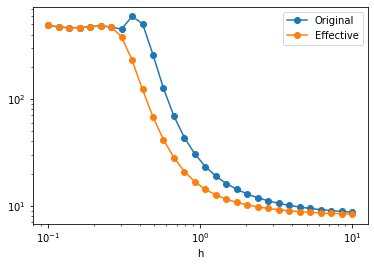}
  }
  \caption{The spectral ratio $\kappaeff$ for 8-qubit Kitaev models by varying
     $J_{xy}$ while fixing
     the external field $h=1$ and varying $h$ while fixing $J_{xy}=1$. The
     effective ratio is significantly smaller than the actual
     ratio for a wide range of $(J_{xy}, h)$.
  }
  \label{fig:cond-num-kitaev}
\end{figure}


\section{Empirical study I: Soundness of theory}
\label{sec:exp_confirm}
In this section we present two sets of numerical simulations to corroborate our
theoretical results.
\begin{enumerate}
  \item In Section~\ref{subsec:exp1}, we calculate the deviation of
        $\mlvec{Y}$ and $\mlvec{\theta}$ for HVAs and HEAs.
        We show that Lemma~\ref{lm:vqe-concentration-time} and
        \ref{lm:vqe-slow-theta-new} correctly predict the maximal deviation of
        $\mlvec{Y}$ and $\mlvec\theta$ for both the partially-trainable and the
        fully-trainable settings.

  \item In Section~\ref{subsec:exp2}, we confirm that the over-parameterization
        threshold is positively correlated to the proposed quantities
        $\kappa_{\mathsf{eff}}$ and $d_{\mathsf{eff}}$ as predicted in
        Theorem~\ref{thm:vqe-convergence-full} and
        Corollary~\ref{cor:effective-convergence} using synthetic VQE examples.
\end{enumerate}

\subsection{Experiment 1: Deviation of key quantities during training}
\label{subsec:exp1}
In this section we optimize Hamiltonian variational ansatz (HVA) and
\heafullname (HEA) in both the partially- and fully-trainable settings
and evaluate $\mlvec{Y}$ and $\mlvec\theta$ during training.
Recall that $\mlvec{Y}$ is a function of timestep $t$ through its dependency on
the parameters $\mlvec\theta(t)$.
Lemma~\ref{lm:vqe-slow-theta-new} and \ref{lm:vqe-concentration-time} predict that $\mlvec\theta$ remains in a $\ell_{\infty}$-ball
centered at $\mlvec\theta(0)$ with radius $O(1/p)$ throughout training, and that
$\|\mlvec{Y}(t) -\mlvec{Y}(0)\|_{\mathsf{op}} = O(\frac{1}{\sqrt{p}})$. Our
experiments show that it is true for both partially- and fully-trainable HVAs and HEAs.

\paragraph{Defining \mlvec{Y} for fully-trainable ansatz}
For
partially-trainable ansatz defined in Definition~\ref{def:partially-trainable-ansatz}, $\mlvec{Y}$ can be equivalently expressed as
\begin{align}
   \mlvec{Y}(\mlvec\theta) := \frac{1}{p} \sum_{l=1}^{p}\big(\mlvec{U}_{l,+}(\mlvec\theta) \mlvec{H} \mlvec{U}^{\dagger}_{l,+}(\mlvec\theta)\big)^{\otimes 2}
\end{align}
using $\mlvec{U}_{l,+}(\mlvec\theta) = \big(\prod_{l'=l+1}^{p}\mlvec{U}_{l'}\exp(-i\theta_{l'}\mlvec{H}) \big) \mlvec{U}_{l}$,
 the matrix applied to the input state after the rotation $\exp(-i\theta_{l}\mlvec{H})$.
 Similarly for fully-trainable ansatz
 $\mlvec{Y}$ can be defined as:
 \begin{align}
   \mlvec{Y}(\mlvec
   \theta) := \frac{1}{p} \sum_{l=1}^{L}\sum_{k=1}^{K}\big(\mlvec{U}_{(l,k),+}(\mlvec\theta) \generatorH{k} \mlvec{U}^{\dagger}_{(l,k),+}(\mlvec\theta)\big)^{\otimes 2}
 \end{align}
 by defining
 $\mlvec{U}_{(l,k),+}(\mlvec\theta)$ as $\prod_{l'=l+1}^{L}\prod_{k=1}^{K}\exp(-i\theta_{l',k} \generatorH{k})\cdot \prod_{k'=k+1}^{K}\exp(-i\theta_{l,k'}\generatorH{k'})$
 as the matrix applied to the input state after the rotation
 $\exp(-i\theta_{l,k}\generatorH{k})$. Recall that in the fully-trainable setting the total number of trainable
 parameters $p$ is  $K\cdot L$.

\paragraph{HVA for transverse field Ising models} For HVAs, we consider the one-dimensional transverse field Ising
models (TFI1d). The $N$-qubit problem Hamiltonian is defined as
\begin{align}
  \mlvec{M}_{\mathsf{TFI1d}}(g) = \sum_{i=0}^{N-1}X_{i}X_{i+1} + g \sum_{i=0}^{N-1}Z_{i}
  \label{eq:tfi-problemHamiltonian}
\end{align}
with periodic boundary conditions (i.e the $N$-th qubit is identified with the $0$-th qubit).
The parameter $g$ is the strength of the transverse field. In this experiment we
choose the input state $\frac{1}{\sqrt{2^{N}}}(1,1,\cdots,1)^{T}$ and the compact HVA for TFI1d model proposed in \cite{Wiersema2020} with $K=2$ and
\begin{align}
\generatorH{1} \propto \sum_{i=0}^{N-1}X_{i}X_{i+1}, \quad \generatorH{2} \propto \sum_{i=0}^{N-1}Z_{i}.\label{eq:tfi-TFI2}
\end{align}
For all the experiments, $\{\generatorH{k}\}_{k=1}^{K}$ are normalized such
that ${Z(\generatorH{k}, d)} = \tr(\generatorH{k}^{2}) / (d^{2}-1) = 1$.

For both the partially- and fully-trainable settings, we solve $4$-qubit TFI1d model
with external field $g=0.3$ using gradient descent with learning rate
$1\times 10^{-4} / p$, where the numbers of trainable parameters varying from
$30$ to $150$. For each $p$ we repeat the training over $20$ random initializations.
It is observed that for both settings the deviations of $\mlvec{Y}$ in operator
norm ($\opnorm{\mlvec{Y}(t) - \mlvec{Y}(0)}$) saturate after a few
iterations (see Figure~\ref{fig:deviation_conv_Y} in the appendix), and
$\max_{t\geq 0}\opnorm{\mlvec{Y}(t) - \mlvec{Y}(0)}$ displays an $O(1/\sqrt{p})$
dependency on $p$ (Figure~\ref{fig:deviation_Y}). Moreover, note that the same
reference line $50/\sqrt{p}$ (plotted in {\color{tab-green} green}) is used in
both Figure~\ref{fig:deviation_Y}(a) and (b). This indicates that
the maximal deviation of $\mlvec{Y}$ in the two settings not only match in the
dependency on $p$ but also on constants.
Similarly, the $O(1/p)$-dependencies of
$\max_{t\geq 0}\|\mlvec\theta(t) -\mlvec\theta(0)\|_{\infty}$ are demonstrated
in Figure~\ref{fig:deviation_theta}.

\begin{figure}[!htbp]
  \centering
  \subfigure[Partially-trainable HVA]{
    \includegraphics[width=0.33\linewidth]{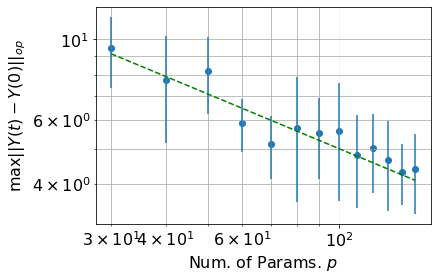}
  }
  \subfigure[Fully-trainable HVA]{
    \includegraphics[width=0.33\linewidth]{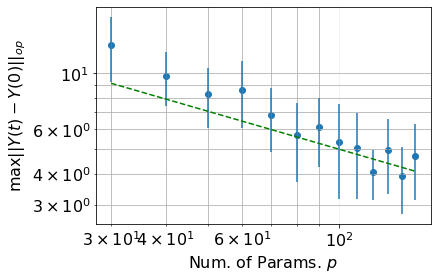}
  }
  \caption{Maximal deviation of $\mlvec{Y}$ from initial value as a function of
    number of trainable parameters during the training of HVA for $4$-qubit TFI1d model with
    transverse field $g=0.3$. The mean values and the standard deviations are
    calculated over $20$ random initializations for each number of trainable parameters
    $p=30, 40, \cdots, 150$. In both figures, the reference lines $50/\sqrt{p}$
    are plotted in {\color{tab-green} green}, showing that our theory correctly
    predicts the $O(1/\sqrt{p})$-dependency of
    $\max_{t\geq 0}\opnorm{\mlvec{Y}(t) - \mlvec{Y}(0)}$ for both settings.
  }
  \label{fig:deviation_Y}
\end{figure}

\begin{figure}[!htbp]
  \centering
  \subfigure[Partially-trainable HVA]{
    \includegraphics[width=0.33\linewidth]{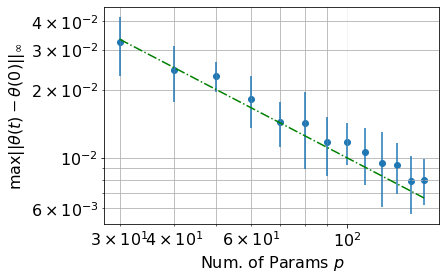}
  }
  \subfigure[Fully-trainable HVA]{
    \includegraphics[width=0.33\linewidth]{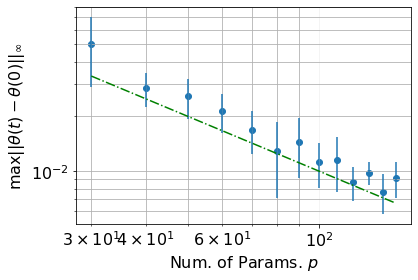}
  }
   \caption{Maximal deviation of $\mlvec{\theta}$ from initial value as a function of
    number of trainable parameters during the training of HVA for $4$-qubit TFI1d model with
    transverse field $g=0.3$. The mean values and the standard deviations are
    calculated over $20$ random initializations for each number of trainable parameters
    $p=30, 40, \cdots, 150$. In both figures, the reference lines $1/p$
    are plotted in {\color{tab-green} green}, showing that our theory correctly
    predicts the $O(1/{p})$-dependency of
    $\max_{t\geq 0}\|\mlvec{\theta}(t) - \mlvec{\theta}(0)\|_{\infty}$ for both settings.
  }
  \label{fig:deviation_theta}
\end{figure}

\paragraph{HEA with CZ entanglement} Similar observations occur in
\heafullname\ (HEA) with layers of single-qubit $X/Y$-rotations and $CZ$
entanglements as illustrated in Figure~\ref{fig:heaconfig}. For an $N$-qubit
instance, let $CZ_{ij}$ denote the CZ gate acting on the $i$-th and $j$-th
qubits, we define the CZ entanglement layer $\mlvec{U}_{\mathsf{CZ}}$ as:
\begin{align}
  \mlvec{U}_{\mathsf{CZ}} = \prod_{\mathsf{even }i\in [N]} CZ_{i,i+1} \prod_{\mathsf{odd }i\in [N]} CZ_{i,i+1}.
\end{align}
Using that fact that $CZ^{2}_{ij}$ is identity for any pair of $(i,j)$, the HEA
can be fit into the ansatz defined in Definition~\ref{def:fully-trainable-ansatz} with $K = 4N$ and
\begin{align}
  \generatorH{4i+1} \propto X_{i}, \quad \generatorH{4i+2} \propto Y_{i}, \quad
  \generatorH{4i+3} \propto \mlvec{U}_{\mathsf{CZ}}X_{i}\mlvec{U}_{\mathsf{CZ}}, \quad \generatorH{4i+4}\propto \mlvec{U}_{\mathsf{CZ}}Y_{i}\mlvec{U}_{\mathsf{CZ}},
  \quad\forall i \in [N]. \label{eq:HEA-CZ}
\end{align}
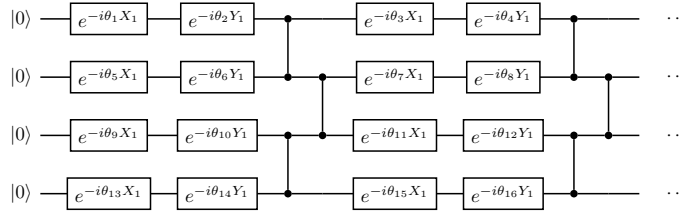
\begin{figure}[!htbp]
  \centering
  \resizebox{0.6\linewidth}{!}{
  \begin{quantikz}
    \lstick{$\ket{0}$} & \gate{e^{-i\theta_{1} {X}_1}} & \gate{e^{-i\theta_{2}
        {Y}_1}} & \ctrl{1} & \qw & \gate{e^{-i\theta_{3} {X}_1}}&
    \gate{e^{-i\theta_{4} {Y}_1}} & \ctrl{1} & \qw & \qw & \cdots\\
    \lstick{$\ket{0}$} & \gate{e^{-i\theta_{5} {X}_1}} & \gate{e^{-i\theta_{6}
        {Y}_1}} & \control{} & \ctrl{1} & \gate{e^{-i\theta_{7} {X}_1}}&
    \gate{e^{-i\theta_{8} {Y}_1}} & \control{} & \ctrl{1} & \qw & \cdots\\
    \lstick{$\ket{0}$} & \gate{e^{-i\theta_{9} {X}_1}} & \gate{e^{-i\theta_{10}
        {Y}_1}} & \ctrl{1} & \control{} & \gate{e^{-i\theta_{11} {X}_1}}&
    \gate{e^{-i\theta_{12} {Y}_1}} &  \ctrl{1} & \control{} &  \qw &\cdots\\
    \lstick{$\ket{0}$} & \gate{e^{-i\theta_{13} {X}_1}} & \gate{e^{-i\theta_{14}
        {Y}_1}} & \control{} & \qw & \gate{e^{-i\theta_{15} {X}_1}}&
    \gate{e^{-i\theta_{16} {Y}_1}} & \control{} & \qw &  \qw &\cdots
  \end{quantikz}
  }
  \caption{
  Configuration of the $4$-qubit HEA with CZ entanglements.
  }
  \label{fig:heaconfig}
\end{figure}
We use the ansatz defined in Equation~(\ref{eq:HEA-CZ}) to solve the problem Hamiltonian
\begin{align}
  \mlvec{M}_{\mathsf{HEA}} = \mathsf{diag}(0, 0.5, 1, \cdots, 1)
\end{align}
with input state
$\ket{\Phi} = \ket{0}^{\otimes N} = (1.0, 0, \cdots, 0)^{\dagger}$ and
learning rate $1\times 10^{-2} / p$. The
empirical results are summarized in Figure~\ref{fig:HEA_deviation_Y} and
\ref{fig:HEA_deviation_theta}.

\begin{figure}[!htbp]
  \centering
  \subfigure[Partially-trainable HEA]{
    \includegraphics[width=0.33\linewidth]{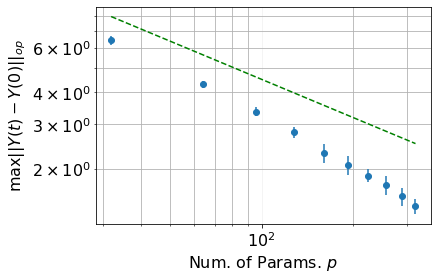}
  }
  \subfigure[Fully-trainable HEA]{
    \includegraphics[width=0.33\linewidth]{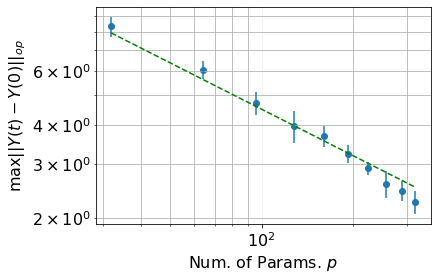}
  }
  \caption{Maximal deviation of $\mlvec{Y}$ in $4$-qubit \heafullname
    (HEA) with CZ entanglement. The mean values and the standard deviations are
    calculated over $10$ random initializations for each number of trainable parameters
    $p=32, 64, \cdots, 320$. In both figures, the reference lines are
    $45/\sqrt{p}$.
  }
  \label{fig:HEA_deviation_Y}
\end{figure}

\begin{figure}[!htbp]
  \centering
  \subfigure[Partially-trainable HEA]{
    \includegraphics[width=0.33\linewidth]{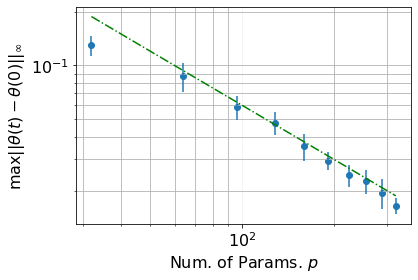}
  }
  \subfigure[Fully-trainable HEA]{
    \includegraphics[width=0.33\linewidth]{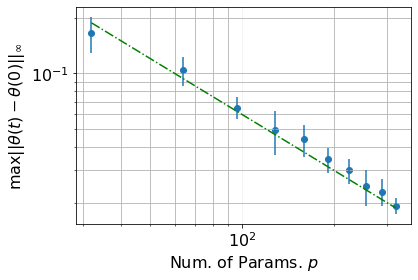}
  }
  \caption{Maximal deviation of $\mlvec{\theta}$ in $4$-qubit HEA with CZ
    entanglement. The mean values and standard deviations are calculated over
    $10$ random initialization for each $p$ varying from $32$ to $320$. The
    references lines in both figures are $6/p$.
  }
  \label{fig:HEA_deviation_theta}
\end{figure}

We also extend our experiments to the setting when the gradient estimation is
noisy. In Figure~\ref{fig:noisy_deviation}, we consider
$\noise{l}(t)$ sampled $i.i.d.$ from $\mathcal{N}(0, 1\times 10^{-5})$ for all $t$ and
$l\in[p]$, and have similar observation on the dependency of the maximal
deviation of $\mlvec{Y}$ and $\mlvec{\theta}$ on $p$.
\begin{figure}[!htbp]
  \centering
  \subfigure[Deviation of $\mlvec{Y}$ with noisy gradient]{
    \includegraphics[width=0.33\linewidth]{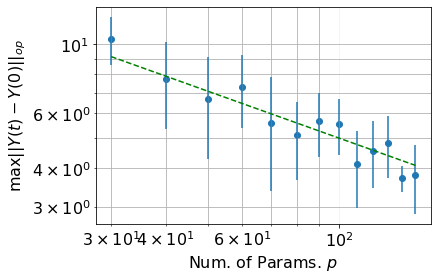}
  }
  \subfigure[Deviation of $\mlvec{\theta}$ with noisy gradient]{
    \includegraphics[width=0.33\linewidth]{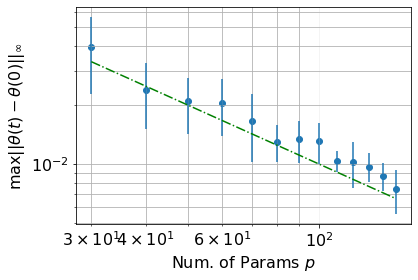}
  }
  \caption{Maximal deviation of $\mlvec{Y}$ and $\mlvec{\theta}$ in HVA for
    $4$-qubit TFI1d model with transverse field $g=0.3$ and random gaussian
    noise with $\sigma=1e-5$.
    The references lines for $\mlvec{Y}$ in both figures are $50/\sqrt{p}$.
    The references lines for $\mlvec\theta$ in both figures are $1/p$.
  }
  \label{fig:noisy_deviation}
\end{figure}

\subsection{Experiment 2: Over-parameterization in synthetic models}
\label{subsec:exp2}
In this section, we simulate gradient descent in synthetic VQE problems with varying
$d$, $\deff$ and $\kappaeff$ using ansatz with different number of parameters.
We show that
the over-parameterization thresholds are positively correlated to the effective dimensions
$\deff$ and spectral ratios $\kappaeff$ as predicted in Corollary~\ref{cor:effective-convergence}.

\paragraph{Estimating over-parameterization threshold} For a concrete criterion
of over-parameterization, we estimate the success rate for the training to
converge to an output state $\ket{\Psi(t)}$ such that the error
$1 - |\braket{\Psi(t)|\Psi^{\star}}|^{2}$ is less than $0.01$, where
$\ket{\Psi^{\star}}$ is the ground state. We define the over-parameterization
threshold as the smallest $p$ such that
$1 - |\braket{\Psi(t)|\Psi^{\star}}|^{2} > 0.01$ with probability $\geq 98\%$ over random initialization.

For physical problems like TFI1d, the system dimension $d$, effective
dimension $d_{\mathsf{eff}}$ and  $\kappa_{\mathsf{eff}}$ are jointly defined by
the number of qubits and the system parameters, be it external fields or the
strengths of coupling. We
decouple these parameters by starting with synthetic problems. For a synthetic
problem with $(d, \deff, \kappaeff)$, we embed a $\deff\times \deff$ Hermitian with eigenvalues
$(0,\frac{1}{\kappaeff}, 1, \cdots, 1)$ into a $d$-dimensional space and consider
ansatz with rotations restricted to the $\deff$-dimensional space (see
Section~\ref{sec:appendix_exp} for the concrete definition for the synthetic problems). For each set of
$(d, \deff, \kappaeff)$ and each number of trainable parameters $p$, the training is
repeated over $100$ random initializations with
learning rate $1\times 10^{-2}/ p$.

In Figure~\ref{fig:toy} we examine how the convergence depends on
the number of parameters $p$ for synthetic instances with varying $(d, \deff, \kappaeff)$:
In Figure~\ref{fig:toy}(a) we change
the system dimension $d$ with $\deff$ and $\kappaeff$ fixed. For all $d=8, 16, 32$, the
over-parameterization threshold is around $8$, showing that the convergence is almost
independent of the system dimension for fixed $\deff$ and $\kappaeff$.
In Figure~\ref{fig:toy}(b), we fix the system dimension $d=16$, $\kappaeff=2.0$ and
vary the effective dimension $\deff$: the over-parameterization threshold increases as
the effective dimension increases.

For a more quantitative evaluation, we define
the over-parameterization threshold as the smallest $p$ to achieve a success rate
of at least $98\%$, and plot the threshold for different $\deff$ in
Figure~\ref{fig:toy-thresh}(a). The dependency of the
over-parameterization thresholds on $\kappaeff$ are displayed in a similar way
in Figure~\ref{fig:toy}(c) and Figure~\ref{fig:toy-thresh}(b). It is clearly
reflected in Figure~\ref{fig:toy-thresh} (a) and (b) that the
over-parameterization threshold is positively correlated to $\deff$ and $\kappaeff$.
\begin{remark}
Readers may notice that the dependency on $\deff$ is almost
linear, seemingly contradicting previous empirical observation in
\cite{kiani2020learning} on using VQA to learn unitaries. There are several factors that may have contributed to
this discrepancy.
(1) The very first reason is that the over-parameterization threshold is
defined differently in \cite{kiani2020learning} as the smallest number of
parameters to achieve a certain shape of training curves.
(2) Another plausible reason is that the ratio $\kappaeff$ is a concise but inexact descriptor of all the eigenvalues of the
problem Hamiltonian. As the effective dimension varies, the eigenvalues also
vary in spite of the controlled $\kappaeff$.
(3) A third factor that might have contributed is the statistical error due to the
finite number of random initializations.
\end{remark}

\begin{figure}[!htbp]
  \centering
  \subfigure[Varying system dimension $d$]{
    \includegraphics[width=0.3\linewidth]{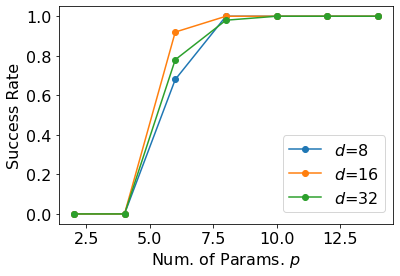}
  }
  \subfigure[Varying effective dimension $\deff$]{
    \includegraphics[width=0.3\linewidth]{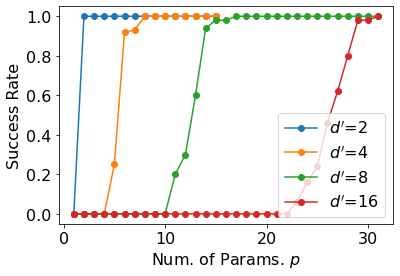}
  }
  \subfigure[Varying effective spectral ratio $\kappaeff$]{
    \includegraphics[width=0.3\linewidth]{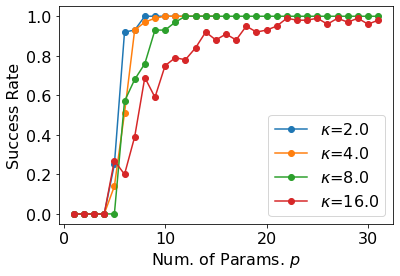}
  }
  \caption{
    Dependency of the over-parameterization threshold on system dimension $d$,
    effective dimension $\deff$ and the spectral ratio $\kappaeff$ in synthetic problems:
    the x-axes are the numbers of trainable parameters $p$, and the y-axes are the
    success rates for finding solutions with error less than $0.01$.
    For each data point, the success rate is estimated over $50$ random initializations.
    (a) Fixing $\deff=4, \kappaeff=2.0$, the over-parameterization threshold does
    not depend on the system dimension for $d=8, 16, 32$.
    (b) Fixing $d=16, \kappaeff=2.0$ for $\deff=2, 4, 6, 8$. The
    threshold increases as the effective dimension increases.
    (c) Fixing $d=16, \deff=4$ for $\kappaeff=2.0, 4.0, 8.0, 16.0$. The threshold
    is positively correlated to the effective ratio of the system.
  }
  \label{fig:toy}
\end{figure}

\begin{figure}[!htbp]
  \centering
  \subfigure[Varying effective dimension $\deff$]{
    \includegraphics[width=0.3\linewidth]{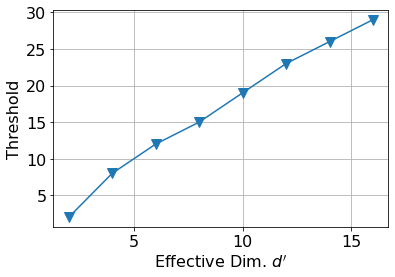}
  }
  \subfigure[Varying effective spectral ratio $\kappaeff$]{
    \includegraphics[width=0.3\linewidth]{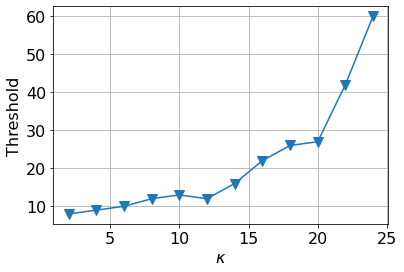}
  }
  \caption{
    Dependency of the over-parameterization threshold on
    effective dimension $\deff$ and effective spectral ratio $\kappaeff$ in synthetic problems:
    the over-parameterization thresholds in the plots are defined as the
    smallest $p$ achieving a success rate $\geq 98\%$ to find a solution with
    error less than $0.01$. Success rates are estimated over $50$ random initializations.
    (a) Fixing $d=16, \kappaeff=2.0$ for $\deff=2, 4, 6, \cdots, 16$.
    (b) Fixing $d=16, \deff=4$ for $\kappaeff=2.0, 4.0, 6.0, \cdots, 26.0$.
  }
  \label{fig:toy-thresh}
\end{figure}


\section{Empirical study II: Ansatz evaluation}
\label{sec:exp_threshold}
In this section, we use Corollary~\ref{cor:effective-convergence} to explain the
performances of different ansatz for Ising models and Heisenberg models.
Specifically, (1) we calculate $\deff$ and $\kappaeff$ using the procedure
described in Subsection~\ref{subsec:estimate-eff} and (2) directly estimate the
over-parameterization thresholds by repetitive training over random
initializations with different number of parameters. The results are summarized as follows:
\begin{itemize}
  \item For transverse field Ising (TFI) model, we compare ansatz $TFI_{\alt{2}}$ and
        $TFI_{\alt{3}}$ (defined below). $TFI_{\alt{2}}$ and $TFI_{\alt{3}}$ have identical $\kappaeff$, but
        $TFI_{\alt{2}}$ has smaller $\deff$. Empirically, we observe $TFI_{\alt{2}}$ reaches
        over-parameterization with fewer number of parameters.
  \item For the Heisenberg XXZ model, we compare ansatz $XXZ_{\alt{4}}$ and $XXZ_{\alt{6}}$
        (defined below). $XXZ_{\alt{4}}$ and $XXZ_{\alt{6}}$ have $\deff$ of same order of
        magnitude, but the $\kappaeff$ of $XXZ_{\alt{6}}$ diverges at the critical
        point while $\kappaeff$ of $XXZ_{\alt{4}}$ remain bounded. Empirically, we
        observe that as the system approaches the level-crossing point, $XXZ_{\alt{6}}$
        requires significantly more number of parameters to obtain a good
        approximation to the ground state.
  \item For both TFI and XXZ models and all HVA considered, $\deff$ is much smaller
        than the system dimension $d$. Also for $TFI_{\alt{2}}$, $TFI_{\alt{3}}$, $XXZ_{\alt{4}}$, the effective ratio
        $\kappaeff$ remain bounded near level-crossings where
        $\kappa=\frac{\lambda_{d}-\lambda_{1}}{\lambda_{2}-\lambda_{1}}$
        approaches infinity. This explains why problem-specifc HVA can be used
        to solve VQEs that can not be efficiently solved by general-purposed ansatz like
        HEA (\cite{Wiersema2020}) (Recall that for typical HEA design, $\deff$
        is the system dimension $d$ and $\kappaeff$ is simply $\kappa$).
\end{itemize}
These observations demonstrate the predicting power of the quantities $\deff$
and $\kappaeff$ and highlight that problem-specific ansatz designs are crucial to the
efficient training of VQE in practice.

\paragraph{Transverse field Ising (TFI) models} For one-dimensional TFI (TFI1d) models, in addition to HVA with
2-alternating Hermitian mentioned in Equation~\ref{eq:tfi-TFI2} in Section~\ref{subsec:exp1}  (which we will now refer to as
$TFI_{\alt{2}}$),  we consider the ansatz design $TFI_{\alt{3}}$ that contains 3 Hermitians
$\ansatzname = \{\generatorH{1}, \generatorH{2}, \generatorH{3}\}$ with
\begin{align}
\generatorH{1} \propto \sum_{\mathsf{even} i}X_{i}X_{i+1},\quad \generatorH{2} \propto \sum_{\mathsf{odd} i}X_{i}X_{i+1}, \quad \generatorH{3} \propto \sum_{i=0}^{N-1}Z_{i}\label{eq:tfi-TFI3}.
\end{align}
Compared with $TFI_{\alt{2}}$, $TFI_{\alt{3}}$ decouples the odd and even coupling in the
$X$ direcion. The effective dimension $\deff$ of
$TFI_{\alt{2}}$ and $TFI_{\alt{3}}$ for $N=4,6,8,10$ are summarized in Table~\ref{tbl:tfi}:
both ansatz designs achieve small effective dimension compared with the system dimension $d$, and the effective dimension $\deff$ for
$TFI_{\alt{2}}$ is consistently smaller than that of $TFI_{\alt{3}}$ for different $N$'s.
\begin{table}[h!]
\centering
\begin{tabular}{|c | c c c c|}
 \hline
  $N$ & 4 & 6 & 8 & 10\\
 \hline
 $d$ & 16 & 64 & 256 & 1024 \\
 $TFI_{\alt{2}}$ & 4 & 8 & 16 & 32 \\
 $TFI_{\alt{3}}$ & 5 & 10 & 25 & 50 \\
 \hline
\end{tabular}
\caption{System dimensions $d$ for $N$-qubit TFI1d models with  $N=4,6,8,10$
  and corresponding effective dimensions $\deff$ for ansatz $TFI_{\alt{2}}$ and $TFI_{\alt{3}}$.}
\label{tbl:tfi}
\end{table}

Despite the difference in $\deff$, $TFI_{\alt{2}}$ and $TFI_{\alt{3}}$ has
similar $\kappaeff$: in Figure~\ref{fig:tfi23-spectrum}, we
visualize the eigenvalues and $\kappaeff$ of $TFI_{\alt{2}}$, $TFI_{\alt{3}}$
and of the original
problem Hamiltonian $\mlvec{M}_{\mathsf{TFI1d}}(g)$ with varying transverse
field $g$ for $6$-qubit TFI1d models.
In Figure~\ref{fig:tfi23-spectrum} (a) and (b), we plot
the 4 smallest eigenvalues of the effective Hamiltonian $\mlvec{M}'$ associated with
$TFI_{\alt{2}}$ and $TFI_{\alt{3}}$: while $TFI_{\alt{2}}$ and $TFI_{\alt{3}}$ have different effective dimensions, they
have similar eigenvalues.

This allows us to demonstrate the dependency of the
threshold on the $\deff$ with controlled $\kappaeff$. In Figure~\ref{fig:tfi-varye} we plot the success rate against the
number of parameters $p$ for both ansatz with number of qubits $N=4,6,8,10$:
it is observed that $TFI_{\alt{2}}$ ($\blacktriangledown$) consistently achieve lower
over-parameterization threshold $p$ than $TFI_{\alt{3}}$ ($\blacksquare$) due to
smaller $\deff$.

Ground states of TFI1d models are degenerated for $|g| \leq 1$ in the
thermodynamic limit $N\rightarrow \infty$. Although there are no degeneracy for
finite $N$, the first excitation energy (i.e. the smallest eigen-gaps) decrease
quickly as $g$ drops below $1.0$. In Figure~\ref{fig:tfi23-spectrum}(c), we
visualize the smallest 4 eigenvalues for $N=6$. The vanishing eigen-gap for
small $g$ leads to drastic increase of $\kappaeff$ as plotted in {\color{tab-blue}
  blue} in Figure~\ref{fig:tfi23-spectrum}. On the contrary, the effective ratio $\kappaeff$ for both
$TFI_{\alt{2}}$ and $TFI_{\alt{3}}$ remain small as $g$ approaches $0$. As a result, the
over-parameterization threshold remains almost the same for $TFI_{\alt{2}}$ as
the transverse field $g$ decreases from $0.5$ to $0.1$ (as shown in
Figure~\ref{fig:tfi-varyg}). This shows that the usage of HVA instead of general
purpose ansatz design allows solving VQE problems efficiently near critical points.

\begin{figure}
  \centering
  \subfigure[$TFI_{\alt{2}}$]{
    \includegraphics[width=0.33\linewidth]{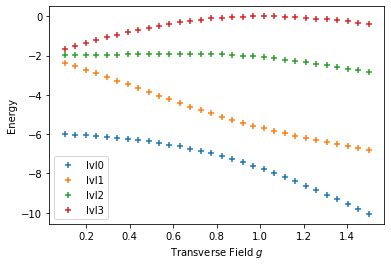}
  }
  \subfigure[$TFI_{\alt{3}}$]{
    \includegraphics[width=0.33\linewidth]{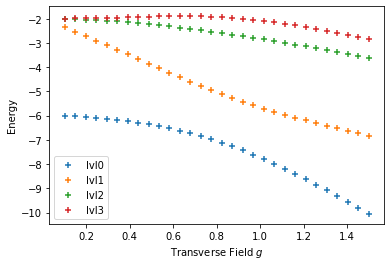}
  }\\
  \subfigure[Original]{
    \includegraphics[width=0.33\linewidth]{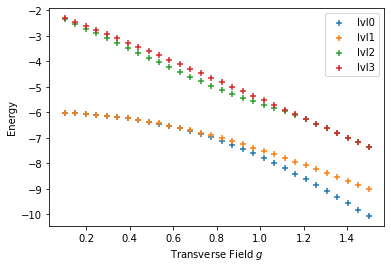}
  }
  \subfigure[Comparison of $\kappaeff$]{
    \includegraphics[width=0.33\linewidth]{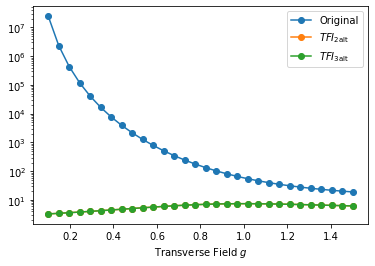}
  }
  \caption{
    Energy of the ground state and the first 3 excitation states.
    The smallest 4 eigenvalues
    for the effective Hamiltonian with $TFI_{\alt{2}}$ (a), $TFI_{\alt{3}}$ (b) and for the original
    Hamiltonian $H_{\mathsf{TFI1d}}(g)$ (c) for $N=6$ with transverse field $g$
    varying from $0.1$ to $1.5$.
    As plotted in (d) $\kappaeff$ for the original Hamiltonian increases quickly
    for $g$ close to $0$ while $\kappaeff$ for both $TFI_{\alt{2}}$ and
    $TFI_{\alt{3}}$ remain small. Note that $\kappaeff$ for $TFI_{\alt{2}}$ and
    $TFI_{\alt{3}}$ are overlapping.
  }
  \label{fig:tfi23-spectrum}
\end{figure}

\begin{figure}
  \centering
  \subfigure[]{
    \includegraphics[width=0.60\linewidth]{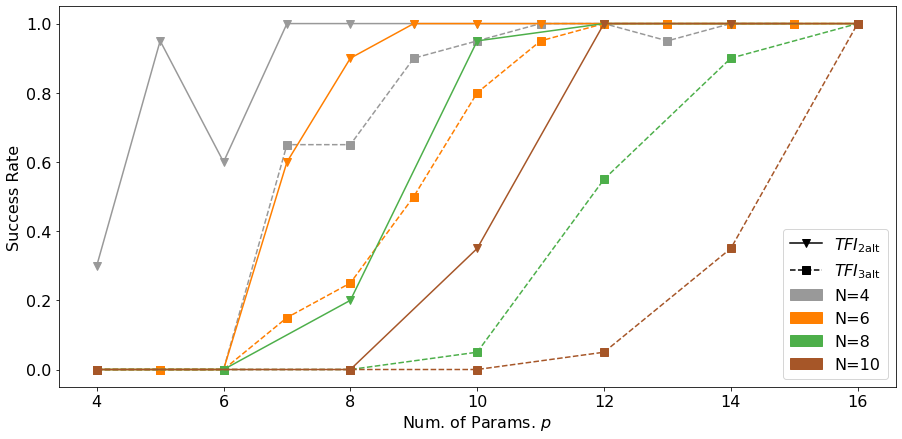}
  }
  \subfigure[]{
    \includegraphics[width=0.22\linewidth]{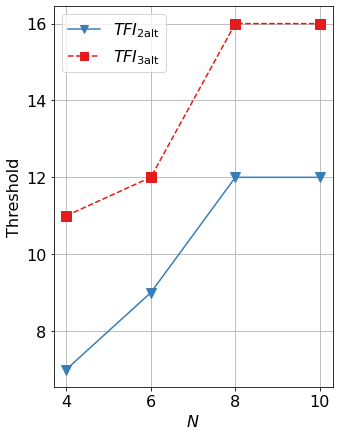}
  }
  \caption{
    Comparison of the over-parameterization threshold for $TFI_{\alt{2}}$ and
    $TFI_{\alt{3}}$ ansatz for $N=4, 6, 8, 10$. (a) The success rates for finding a
    solution with error less than $0.01$ versus the number of parameters for
    instances with different ansatz and different sizes. The number of qubits is encoded by
    different colors and the ansatz design is encoded by $\blacktriangledown$
    for $TFI_{\alt{2}}$ and
    $\blacksquare$ for $TFI_{\alt{3}}$.
    For each data point, the success rate is estimated over 20 random initializations.
    (b) Plot of the over-parameterization
    threshold versus number of qubits for different ansatz. The threshold is
    defined as the smallest number of parameters to achieve success rate
    over $98\%$.
    For each $N$, the threshold for $TFI_{\alt{2}}$ is lower than that of $TFI_{\alt{3}}$.
  }
  \label{fig:tfi-varye}
\end{figure}

\begin{figure}
  \centering
  \subfigure[N=6]{
    \includegraphics[width=0.3\linewidth]{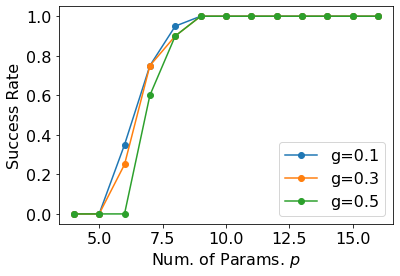}
  }
  \subfigure[N=8]{
    \includegraphics[width=0.3\linewidth]{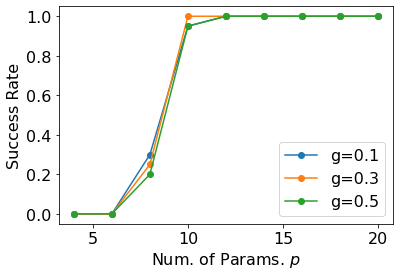}
  }
  \subfigure[N=10]{
    \includegraphics[width=0.3\linewidth]{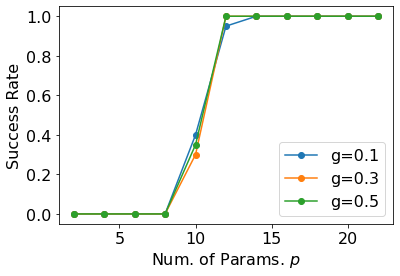}
  }
  \caption{
    Comparison of the over-parameterization threshold for $TFI_{\alt{2}}$ with
    transverse field $g=0.1, 0.3, 0.5$  for (a) $N=6$ (b)$N=8$ (c) $N=10$. The
    x-axis is the number of trainable parameters $p$, and the y-axis is the
    success rate for finding a solution with error less than $0.01$. For
    $N=6,8,10$, despite the vanishing eigen-gap of $H_{TFI1d}(g)$ for small
    $g$, the ground state can be found with reasonable $p$ with ansatz $TFI_{\alt{2}}$.
    For each data point, the success rate is estimated over 20 random initializations.
  }
  \label{fig:tfi-varyg}
\end{figure}

\paragraph{Heisenberg XXZ model} The one-dimensional XXZ (XXZ1d) model is a special case of Heisenberg model with
problem Hamiltonian defined as
\begin{align}
  \mlvec{M}_{\mathsf{XXZ1d}}(J_{zz}) = \sum_{i=0}^{N-1}X_{i}X_{i+1} + Y_{i}Y_{i+1} + J_{zz} \sum_{i=0}^{N-1}Z_{i}Z_{i+1}.
\end{align}
The parameter $J_{zz}$ controls the coupling in the $Z$-direction. XXZ1d model
is essentially different from the TFI1d model in that an actual level-crossing happens for
finite $N$ at $J_{zz}=-1$.

We examine the ansatz design proposed in \cite{Wiersema2020} (denoted as $XXZ_{\alt{4}}$):
\begin{align}
  & \generatorH{1} \propto \sum_{\mathsf{even} i}X_{i}X_{i+1} + \sum_{\mathsf{even} i}Y_{i}Y_{i+1},\\
  & \generatorH{2} \propto \sum_{\mathsf{odd} i}X_{i}X_{i+1} + \sum_{\mathsf{odd} i}Y_{i}Y_{i+1},\\
  & \generatorH{3} \propto \sum_{\mathsf{even} i}Z_{i}Z_{i+1},\quad \generatorH{4} \propto \sum_{\mathsf{odd} i}Z_{i}Z_{i+1}
\end{align}
as well as a similar design (denoted as $XXZ_{\alt{6}}$)
\begin{align}
  & \generatorH{1} \propto \sum_{\mathsf{even} i}X_{i}X_{i+1},\quad \generatorH{2} \propto \sum_{\mathsf{odd} i}X_{i}X_{i+1},\quad \generatorH{3} \propto \sum_{\mathsf{even} i}Y_{i}Y_{i+1},\\
  & \generatorH{4} \propto \sum_{\mathsf{odd} i}Y_{i}Y_{i+1}, \quad \generatorH{5} \propto \sum_{\mathsf{even} i}Z_{i}Z_{i+1},\quad \generatorH{6} \propto \sum_{\mathsf{odd} i}Z_{i}Z_{i+1}.
\end{align}
The effective dimensions for $XXZ_{\alt{4}}$ and $XXZ_{\alt{6}}$ are summarized in
Table~\ref{tbl:xxz}. While both $XXZ_{\alt{4}}$ and $XXZ_{\alt{6}}$ significantly reduce the
effective dimension $\deff$, $XXZ_{\alt{4}}$ further removes the level-crossing: in
Figure~\ref{fig:cond-num-xxz1d}, we see that both $XXZ_{\alt{4}}$ and $XXZ_{\alt{6}}$
reduces the ratio $\kappaeff$ by orders of magnitude, and the ratio $\kappaeff$ for
$XXZ_{\alt{4}}$ (in {\color{tab-orange} orange}) remains small as $J_{zz}\rightarrow -1$
while the ratio for both $XXZ_{\alt{6}}$ (in {\color{tab-green} green}) and the original
Hamiltonian (in {\color{tab-blue} blue}) increases to infinity. In
Figure~\ref{fig:xxz-varyg}, we present side-by-side the success rates of $XXZ_{\alt{4}}$
and $XXZ_{\alt{6}}$ for $N=4$ with $J_{zz} = -0.9, -0.5, -0.3, 0.1$. It is observed
that the over-parameterization threshold $XXZ_{\alt{4}}$ remain similar across
different values of $J_{zz}$ and the over-parameterization thresholds for
$XXZ_{\alt{6}}$ increase significantly as $J_{zz}$ decreases to $-0.9$ due to the
vanishing eigen-gaps.

\begin{table}[h!]
\centering
\begin{tabular}{|c | c c c c|}
 \hline
  $N$ & 4 & 6 & 8 & 10\\
 \hline
 $d$ & 16 & 64 & 256 & 1024 \\
 $XXZ_{\alt{4}}$ & 3 & 4 & 12 & 21 \\
 $XXZ_{\alt{6}}$ & 4 & 5 & 19 & 34 \\
 \hline
\end{tabular}
\caption{System dimensions $d$ and effective dimensions $\deff$ for XXZ1d model with
  $N=4,6,8,10$ for $XXZ_{\alt{4}}$ and $XXZ_{\alt{6}}$.}
\label{tbl:xxz}
\end{table}

\begin{figure}[!htbp]
  \centering
  \subfigure[XXZ1d Model: N=4]{
    \includegraphics[width=0.33\linewidth]{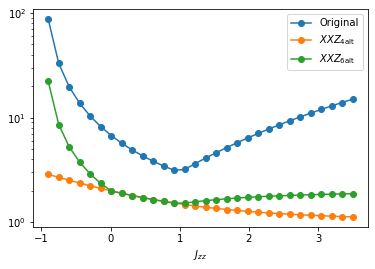}
  }
  \subfigure[XXZ1d Model: N=6]{
    \includegraphics[width=0.33\linewidth]{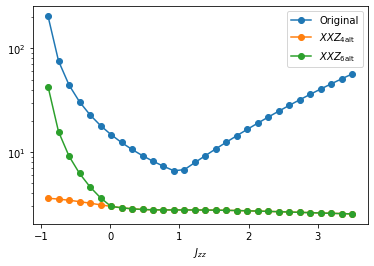}
  }\\
  \subfigure[XXZ1d Model: N=8]{
    \includegraphics[width=0.33\linewidth]{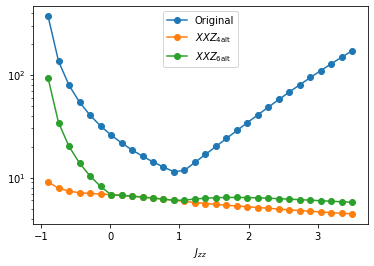}
  }
  \subfigure[XXZ1d Model: N=10]{
    \includegraphics[width=0.33\linewidth]{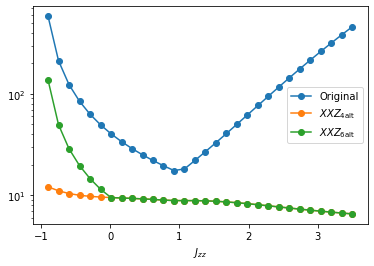}
  }
  \caption{
    Spectral ratios
    $\kappa$ and $\kappaeff$ for $XXZ_{\alt{4}}$ and $XXZ_{\alt{6}}$ for $N=4, 6, 8, 10$.
     We plot $\kappa$ for XXZ1d model and $\kappaeff$ for $XXZ_{\alt{4}}$ and
     $XXZ_{\alt{6}}$ for different values of $J_{zz}$.
     For both $XXZ_{\alt{6}}$ and the original problem Hamiltonian, level crossing
     happens at $J_{zz}=-1$, making it impossible to solve for the ground state
     when $J_{zz}$ is close to $-1$. Note that the level crossing breaks down
     under $XXZ_{\alt{4}}$.
  }
  \label{fig:cond-num-xxz1d}
\end{figure}

\begin{figure}[!htbp]
  \centering
  \subfigure[$XXZ_{\alt{4}}$]{
    \includegraphics[width=0.33\linewidth]{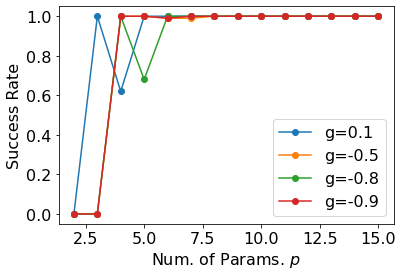}
  }
  \subfigure[$XXZ_{\alt{6}}$]{
    \includegraphics[width=0.33\linewidth]{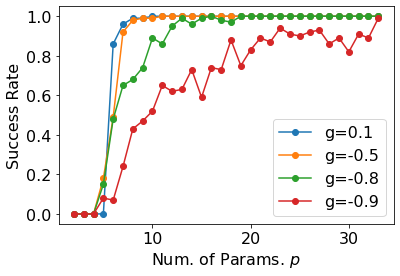}
  }
  \caption{
    Comparison of the over-parameterization threshold for (a) $XXZ_{\alt{4}}$ and (b) $XXZ_{\alt{6}}$ with
    $Z$-coupling  $J_{zz}=0.1, -0.3, -0.5, -0.9$. The x-axis is the number of trainable parameters $p$, and the y-axis is the
    success rate for finding a solution with error less than $0.01$. For
    $XXZ_{\alt{4}}$ the over-parameterization threshold remain similar for various
    $J_{zz}$, while for $XXZ_{\alt{6}}$ the threshold drastically increases as $J_{zz}$
    approaches $-1$ as a result of the level-crossing.
    For each data point, the success rate is estimated over 100 random initializations.
  }
  \label{fig:xxz-varyg}
\end{figure}


\newpage

\bibliographystyle{plain}
\bibliography{references}

\begin{thebibliography}{10}

\bibitem{allenzhu2019convergence}
Zeyuan Allen-Zhu, Yuanzhi Li, and Zhao Song.
\newblock A convergence theory for deep learning via over-parameterization.
\newblock {\em arXiv preprint arXiv:1811.03962}, 2019.

\bibitem{anschuetz2022critical}
Eric~R. Anschuetz.
\newblock Critical points in quantum generative models.
\newblock {\em arXiv preprint arXiv:2109.06957}, 2022.

\bibitem{arora2019exact}
Sanjeev Arora, Simon~S Du, Wei Hu, Zhiyuan Li, Ruslan Salakhutdinov, and
  Ruosong Wang.
\newblock On exact computation with an infinitely wide neural net.
\newblock {\em arXiv preprint arXiv:1904.11955}, 2019.

\bibitem{arora2018stronger}
Sanjeev Arora, Rong Ge, Behnam Neyshabur, and Yi~Zhang.
\newblock Stronger generalization bounds for deep nets via a compression
  approach.
\newblock {\em CoRR}, abs/1802.05296, 2018.

\bibitem{google-supremacy}
Frank Arute, Kunal Arya, Ryan Babbush, Dave Bacon, Joseph~C. Bardin, Rami
  Barends, Rupak Biswas, Sergio Boixo, Fernando G. S.~L. Brandao, David~A.
  Buell, Brian Burkett, Yu~Chen, Zijun Chen, Ben Chiaro, Roberto Collins,
  William Courtney, Andrew Dunsworth, Edward Farhi, Brooks Foxen, Austin
  Fowler, Craig Gidney, Marissa Giustina, Rob Graff, Keith Guerin, Steve
  Habegger, Matthew~P. Harrigan, Michael~J. Hartmann, Alan Ho, Markus Hoffmann,
  Trent Huang, Travis~S. Humble, Sergei~V. Isakov, Evan Jeffrey, Zhang Jiang,
  Dvir Kafri, Kostyantyn Kechedzhi, Julian Kelly, Paul~V. Klimov, Sergey Knysh,
  Alexander Korotkov, Fedor Kostritsa, David Landhuis, Mike Lindmark, Erik
  Lucero, Dmitry Lyakh, Salvatore Mandr{\`a}, Jarrod~R. McClean, Matthew
  McEwen, Anthony Megrant, Xiao Mi, Kristel Michielsen, Masoud Mohseni, Josh
  Mutus, Ofer Naaman, Matthew Neeley, Charles Neill, Murphy~Yuezhen Niu, Eric
  Ostby, Andre Petukhov, John~C. Platt, Chris Quintana, Eleanor~G. Rieffel,
  Pedram Roushan, Nicholas~C. Rubin, Daniel Sank, Kevin~J. Satzinger, Vadim
  Smelyanskiy, Kevin~J. Sung, Matthew~D. Trevithick, Amit Vainsencher, Benjamin
  Villalonga, Theodore White, Z.~Jamie Yao, Ping Yeh, Adam Zalcman, Hartmut
  Neven, and John~M. Martinis.
\newblock Quantum supremacy using a programmable superconducting processor.
\newblock {\em Nature}, 574(7779):505--510, 2019.

\bibitem{google-VQE}
Frank Arute, Kunal Arya, Ryan Babbush, Dave Bacon, Joseph~C. Bardin, Rami
  Barends, Sergio Boixo, Michael Broughton, Bob~B. Buckley, David~A. Buell,
  Brian Burkett, Nicholas Bushnell, Yu~Chen, Zijun Chen, Benjamin Chiaro,
  Roberto Collins, William Courtney, Sean Demura, Andrew Dunsworth, Edward
  Farhi, Austin Fowler, Brooks Foxen, Craig Gidney, Marissa Giustina, Rob
  Graff, Steve Habegger, Matthew~P. Harrigan, Alan Ho, Sabrina Hong, Trent
  Huang, William~J. Huggins, Lev Ioffe, Sergei~V. Isakov, Evan Jeffrey, Zhang
  Jiang, Cody Jones, Dvir Kafri, Kostyantyn Kechedzhi, Julian Kelly, Seon Kim,
  Paul~V. Klimov, Alexander Korotkov, Fedor Kostritsa, David Landhuis, Pavel
  Laptev, Mike Lindmark, Erik Lucero, Orion Martin, John~M. Martinis, Jarrod~R.
  McClean, Matt McEwen, Anthony Megrant, Xiao Mi, Masoud Mohseni, Wojciech
  Mruczkiewicz, Josh Mutus, Ofer Naaman, Matthew Neeley, Charles Neill, Hartmut
  Neven, Murphy~Yuezhen Niu, Thomas~E. O{\textquoteright}Brien, Eric Ostby,
  Andre Petukhov, Harald Putterman, Chris Quintana, Pedram Roushan, Nicholas~C.
  Rubin, Daniel Sank, Kevin~J. Satzinger, Vadim Smelyanskiy, Doug Strain,
  Kevin~J. Sung, Marco Szalay, Tyler~Y. Takeshita, Amit Vainsencher, Theodore
  White, Nathan Wiebe, Z.~Jamie Yao, Ping Yeh, and Adam Zalcman.
\newblock Hartree-fock on a superconducting qubit quantum computer.
\newblock {\em Science}, 369(6507):1084--1089, 2020.

\bibitem{brandao2016local}
Fernando~GSL Brandao, Aram~W Harrow, and Micha{\l} Horodecki.
\newblock Local random quantum circuits are approximate polynomial-designs.
\newblock {\em Communications in Mathematical Physics}, 346(2):397--434, 2016.

\bibitem{Cerezo2020}
M~Cerezo, Andrew Arrasmith, Ryan Babbush, Simon~C Benjamin, Suguru Endo,
  Keisuke Fujii, Jarrod~R. McClean, Kosuke Mitarai, Xiao Yuan, Lukasz Cincio,
  and Patrick~J Coles.
\newblock {Variational Quantum Algorithms}.
\newblock {\em arXiv preprint arXiv:2012.09265}, 2020.

\bibitem{chizat2018lazy}
Lenaic Chizat, Edouard Oyallon, and Francis Bach.
\newblock On lazy training in differentiable programming.
\newblock In {\em Advances in Neural Information Processing Systems},
  volume~32. Curran Associates, Inc., 2019.

\bibitem{collins2006integration}
Beno{\^\i}t Collins and Piotr {\'S}niady.
\newblock Integration with respect to the haar measure on unitary, orthogonal
  and symplectic group.
\newblock {\em Communications in Mathematical Physics}, 264(3):773--795, 2006.

\bibitem{crooks2018performance}
Gavin~E. Crooks.
\newblock Performance of the quantum approximate optimization algorithm on the
  maximum cut problem.
\newblock {\em arXiv preprint arXiv:1811.08419}, 2018.

\bibitem{farhi2014quantum}
Edward Farhi, Jeffrey Goldstone, and Sam Gutmann.
\newblock A quantum approximate optimization algorithm.
\newblock {\em arXiv preprint arXiv:1411.4028}, 2014.

\bibitem{farhi2018classification}
Edward Farhi, Hartmut Neven, et~al.
\newblock Classification with quantum neural networks on near term processors.
\newblock {\em Quantum Review Letters}, 1(2 (2020)):10--37686, 2020.

\bibitem{hall2003lie}
B.~Hall and B.C. Hall.
\newblock {\em Lie Groups, Lie Algebras, and Representations: An Elementary
  Introduction}.
\newblock Graduate Texts in Mathematics. Springer, 2003.

\bibitem{jacot2018neural}
Arthur Jacot, Franck Gabriel, and Cl{\'e}ment Hongler.
\newblock Neural tangent kernel: Convergence and generalization in neural
  networks.
\newblock {\em arXiv preprint arXiv:1806.07572}, 2018.

\bibitem{kandala2017hardware}
Abhinav Kandala, Antonio Mezzacapo, Kristan Temme, Maika Takita, Markus Brink,
  Jerry~M Chow, and Jay~M Gambetta.
\newblock Hardware-efficient variational quantum eigensolver for small
  molecules and quantum magnets.
\newblock {\em Nature}, 549(7671):242--246, 2017.

\bibitem{kiani2020learning}
Bobak~Toussi Kiani, Seth Lloyd, and Reevu Maity.
\newblock Learning unitaries by gradient descent.
\newblock {\em arXiv preprint arXiv:2001.11897}, 2020.

\bibitem{larocca2021theory}
Martin Larocca, Nathan Ju, Diego García-Martín, Patrick~J. Coles, and
  M.~Cerezo.
\newblock Theory of overparametrization in quantum neural networks.
\newblock {\em arXiv preprint arXiv:2109.11676}, 2021.

\bibitem{mixedkitaev2021}
Andy~CY Li, M~Sohaib Alam, Thomas Iadecola, Ammar Jahin, Doga~Murat Kurkcuoglu,
  Richard Li, Peter~P Orth, A~Bar{\i}{\c{s}} {\"O}zg{\"u}ler, Gabriel~N Perdue,
  and Norm~M Tubman.
\newblock Benchmarking variational quantum eigensolvers for the
  square-octagon-lattice kitaev model.
\newblock {\em arXiv preprint arXiv:2108.13375}, 2021.

\bibitem{liu2022analytic}
Junyu Liu, Khadijeh Najafi, Kunal Sharma, Francesco Tacchino, Liang Jiang, and
  Antonio Mezzacapo.
\newblock An analytic theory for the dynamics of wide quantum neural networks.
\newblock {\em arXiv preprint arXiv:2203.16711}, 2022.

\bibitem{liu2021representation}
Junyu Liu, Francesco Tacchino, Jennifer~R. Glick, Liang Jiang, and Antonio
  Mezzacapo.
\newblock Representation learning via quantum neural tangent kernels.
\newblock {\em arXiv preprint arXiv:2111.04225}, 2021.

\bibitem{liu2021variational}
Xiaoyuan Liu, Anthony Angone, Ruslan Shaydulin, Ilya Safro, Yuri Alexeev, and
  Lukasz Cincio.
\newblock {Layer VQE: A Variational Approach for Combinatorial Optimization on
  Noisy Quantum Computers}.
\newblock {\em arXiv preprint arXiv:2102.05566}, 2021.

\bibitem{livni2014on}
Roi Livni, Shai Shalev-Shwartz, and Ohad Shamir.
\newblock On the computational efficiency of training neural networks.
\newblock In Z.~Ghahramani, M.~Welling, C.~Cortes, N.~Lawrence, and K.Q.
  Weinberger, editors, {\em Advances in Neural Information Processing Systems},
  volume~27. Curran Associates, Inc., 2014.

\bibitem{mbeng2019quantum}
Glen~Bigan Mbeng, Rosario Fazio, and Giuseppe Santoro.
\newblock Quantum annealing: a journey through digitalization, control, and
  hybrid quantum variational schemes.
\newblock {\em arXiv preprint arXiv:1906.08948}, 2019.

\bibitem{mcclean2018barren}
Jarrod~R McClean, Sergio Boixo, Vadim~N Smelyanskiy, Ryan Babbush, and Hartmut
  Neven.
\newblock Barren plateaus in quantum neural network training landscapes.
\newblock {\em Nature communications}, 9(1):1--6, 2018.

\bibitem{NC-VQE}
Alberto Peruzzo, Jarrod McClean, Peter Shadbolt, Man-Hong Yung, Xiao-Qi Zhou,
  Peter~J. Love, Al{\'a}n Aspuru-Guzik, and Jeremy~L. O'brien.
\newblock A variational eigenvalue solver on a photonic quantum processor.
\newblock {\em Nature Communications}, 5:4213, 2014.

\bibitem{Preskill2018NISQ}
John Preskill.
\newblock Quantum computing in the {NISQ} era and beyond.
\newblock {\em {Quantum}}, 2:79, 2018.

\bibitem{russell2016quantum}
Benjamin Russell, Herschel Rabitz, and Rebing Wu.
\newblock Quantum control landscapes are almost always trap free.
\newblock {\em arXiv preprint arXiv:1608.06198}, 2016.

\bibitem{shirai2021quantum}
Norihito Shirai, Kenji Kubo, Kosuke Mitarai, and Keisuke Fujii.
\newblock Quantum tangent kernel.
\newblock {\em arXiv preprint arXiv:2111.02951}, 2021.

\bibitem{tilly2021variational}
Jules Tilly, Hongxiang Chen, Shuxiang Cao, Dario Picozzi, Kanav Setia, Ying Li,
  Edward Grant, Leonard Wossnig, Ivan Rungger, George~H. Booth, and Jonathan
  Tennyson.
\newblock The variational quantum eigensolver: a review of methods and best
  practices.
\newblock {\em arXiv preprint arXiv:2111.05176}, 2021.

\bibitem{tropp2012user}
Joel~A Tropp.
\newblock User-friendly tail bounds for sums of random matrices.
\newblock {\em Foundations of computational mathematics}, 12(4):389--434, 2012.

\bibitem{varju2012random}
P{\'e}ter~P{\'a}l Varj{\'u}.
\newblock Random walks in compact groups.
\newblock {\em arXiv preprint arXiv:1209.1745}, 2012.

\bibitem{vershynin2018high}
Roman Vershynin.
\newblock {\em High-dimensional probability: An introduction with applications
  in data science}, volume~47.
\newblock Cambridge university press, 2018.

\bibitem{watrous2018theory}
John Watrous.
\newblock {\em The theory of quantum information}.
\newblock Cambridge University Press, 2018.

\bibitem{wiersema2022optimizing}
Roeland Wiersema and Nathan Killoran.
\newblock Optimizing quantum circuits with riemannian gradient-flow.
\newblock {\em arXiv preprint arXiv:2202.06976}, 2022.

\bibitem{Wiersema2020}
Roeland Wiersema, Cunlu Zhou, Yvette de~Sereville, Juan~Felipe Carrasquilla,
  Yong~Baek Kim, and Henry Yuen.
\newblock {Exploring entanglement and optimization within the Hamiltonian
  variational Ansatz}.
\newblock {\em arXiv preprint arXiv:2008.02941}, 2020.

\bibitem{wu2011role}
Re-Bing Wu, Michael~A Hsieh, and Herschel Rabitz.
\newblock Role of controllability in optimizing quantum dynamics.
\newblock {\em Physical Review A}, 83(6):062306, 2011.

\bibitem{xu2018convergence}
Zhiqiang Xu, Xin Cao, and Xin Gao.
\newblock Convergence analysis of gradient descent for eigenvector computation.
\newblock In {\em Proceedings of the Twenty-Seventh International Joint
  Conference on Artificial Intelligence, {IJCAI-18}}, pages 2933--2939.
  International Joint Conferences on Artificial Intelligence Organization, 7
  2018.

\bibitem{you2021exponentially}
Xuchen You and Xiaodi Wu.
\newblock Exponentially many local minima in quantum neural networks.
\newblock In {\em International Conference on Machine Learning}, pages
  12144--12155. PMLR, 2021.

\bibitem{Zhong1460}
Han-Sen Zhong, Hui Wang, Yu-Hao Deng, Ming-Cheng Chen, Li-Chao Peng, Yi-Han
  Luo, Jian Qin, Dian Wu, Xing Ding, Yi~Hu, Peng Hu, Xiao-Yan Yang, Wei-Jun
  Zhang, Hao Li, Yuxuan Li, Xiao Jiang, Lin Gan, Guangwen Yang, Lixing You,
  Zhen Wang, Li~Li, Nai-Le Liu, Chao-Yang Lu, and Jian-Wei Pan.
\newblock Quantum computational advantage using photons.
\newblock {\em Science}, 370(6523):1460--1463, 2020.

\end{thebibliography}

\appendix

\section{Proof of lemmas for VQE}
\label{sec:vqe-proof}
\subsection{Proof for Lemma~\ref{lm:vqe-concentration-init}}
\vqeconcentrationinit*
\begin{proof}
Define
\begin{align}
\mlvec{X}_{l}:=\frac{1}{Z(\mlvec{H},d)}\big(\mlvec{U}_{l:p}(\mlvec\theta(0))\mlvec{H}\mlvec{U}^{\dagger}_{l:p}(\mlvec\theta(0))\big)^{\otimes 2} - \mlvec{Y}^{\star}.
\end{align}
By straight-forward calculation, we know that $X_{l}$ is centered (i.e
$\EXP[X_{l}] = 0$). The set $\{\mlvec{X}_{l}\}$ can be viewed as independent random matrices as the Haar
random unitary removes all the correlation. The matrix on the left-hand side can
therefore be expressed as the arithmetic average of $p$ independent random matrices.
The square of $\mlvec{X}_{l}$ is bounded in operator norm:
\begin{align}
  \opnorm{\mlvec{X}_{l}^{2}} = \opnorm{\mlvec{X}_{l}}^{2}\leq (\frac{\opnorm{\mlvec{H}}^{2}}{Z} + \frac{d+1}{d})^{2} \leq (\frac{2\opnorm{\mlvec{H}}^{2}}{Z(\mlvec{H},d)})^{2}
 \end{align}
where the second inequality follows from the fact that the ratio $g_{1} = \opnorm{\HH}^{2} / \tr(\HH^{2})$ satisfies that $1 \geq g_{1} \geq 1/d$.
By Hoeffding's inequality(\cite{tropp2012user}, Thm 1.3), with probability
$\geq 1-\delta$,
\begin{align}
  \opnorm{\YY(\mlvec\theta(0)) - \YY^{\star}} \leq \frac{1}{\sqrt{p}}\cdot \frac{2\opnorm{\mlvec{H}}^{2}}{Z(\mlvec{H},d)}\sqrt{\log\frac{d^{2}}{\delta}}.
\end{align}
\end{proof}

\subsection{Proof for Lemma~\ref{lm:vqe-perturbation}}
\label{subsec:app_proof_vqe_perturbation}
\vqeperturbation*
\begin{proof}[Proof for Lemma~\ref{lm:vqe-perturbation}]
  Let $\E(t)$ denote the deviation of $\YY(t)$ from its expected value $\mlvec{Y}^{\star}$:
  \begin{align}
    \E(t) := \YY(t) - \mlvec{Y}^{\star} = \YY(t) - (\mlvec{W} - \frac{1}{d}\mlvec{I}_{d^{2}\times d^{2}}).
  \end{align}
  The matrix that governs the dynamics can be expressed as
  \begin{align}
    \tr_{1}(\mlvec{Y}(t) ([\mlvec{M}, \newketbra{\Psi(t)}{\Psi(t)}] \otimes \mlvec{I}_{d\times d}))
    = [\mlvec{M}, \newketbra{\Psi(t)}{\Psi(t)}] + E(t)
  \end{align}
  where
  \begin{align}
    E(t) :=  \tr_{1}\big(\E(t)([\mlvec{M}, \newketbra{\Psi{(t)}}{\Psi{(t)}}] \otimes \mlvec{I}_{d\times d})\big).
  \end{align}

  Define $h$ as $|\braket{\Psi^{\star}|\Psi(t)}|^{2}$, and the time derivate of $h$ is
  \begin{align}
    \frac{d}{dt}h &= (\frac{d}{dt}\ket{\Psi(t)})^{\dagger}\newketbra{\Psi^{\star}}{\Psi^{\star}}\Psi(t)\rangle +
                    \bra{\Psi(t)}\newketbra{\Psi^{\star}}{\Psi^{\star}}\frac{d}{dt}\ket{\Psi(t)} \\
    &= 2(\bra{\Psi(t)} \mlvec{M} \ket{\Psi(t)} - \lambda_{1})|\braket{\Psi^{\star}|\Psi(t)}|^{2}  + \tr(E(t) [\newketbra{\Psi^{\star}}{\Psi^{\star}}, \newketbra{\Psi(t)}{\Psi(t)}]).\label{ln:perturbed_GF}
  \end{align}
  The first term in Line~(\ref{ln:perturbed_GF}) corresponds to the actual
  Riemannian gradient flow on the sphere:
  \begin{align}
    2(\bra{\Psi(t)} \mlvec{M} \ket{\Psi(t)} - \lambda_{1})|\braket{\Psi^{\star}|\Psi(t)}|^{2}
    &=2(\bra{\Psi(t)} \mlvec{M} \ket{\Psi(t)} - \lambda_{1}) h\\
    &\geq 2 ((1-h)\lambda_{2} + h\lambda_{1} - \lambda_{1}) h\\
    &= 2 (\lambda_{2} - \lambda_{1}) (1-h)h\label{ln:perturbed_GF-term1}
  \end{align}
  The second term in Line~(\ref{ln:perturbed_GF}) stems from the deviation of
  $\mlvec{Y}$ from its expectation:
  \begin{align}
    &\tr(E(t) [\newketbra{\Psi^{\star}}{\Psi^{\star}}, \newketbra{\Psi(t)}{\Psi(t)}])\\
    =&\tr \big(\tr_{1}\big(\E(t)([\mlvec{M}, \newketbra{\Psi{(t)}}{\Psi{(t)}}] \otimes \mlvec{I}_{d\times d})\big) [\newketbra{\Psi^{\star}}{\Psi^{\star}}, \newketbra{\Psi(t)}{\Psi(t)}]\big)\\
    =&\tr\big(
       \E(t)([\mlvec{M}, \newketbra{\Psi{(t)}}{\Psi{(t)}}] \otimes [\newketbra{\Psi^{\star}}{\Psi^{\star}}, \newketbra{\Psi(t)}{\Psi(t)}])
       \big)\\
    \geq& - \opnorm{\E(t)} \|[\mlvec{M}, \newketbra{\Psi{(t)}}{\Psi{(t)}}] \otimes [\newketbra{\Psi^{\star}}{\Psi^{\star}}, \newketbra{\Psi(t)}{\Psi(t)}]\|_{\mathsf{tr}}\\
    =& - 2 \sqrt{h(1-h)} \opnorm{\E(t)} \|[\mlvec{M}, \newketbra{\Psi{(t)}}{\Psi{(t)}}] \|_{\mathsf{tr}} \label{ln:perturbed_GF_term2_eq1}\\
    \geq& - 2 \sqrt{d} \sqrt{h(1-h)} \opnorm{\E(t)} \|[\mlvec{M}, \newketbra{\Psi{(t)}}{\Psi{(t)}}] \|_{F} \\
    \geq& - 2\sqrt{2} \sqrt{d} \sqrt{h}(1-h) (\lambda_{d} - \lambda_{1}) \opnorm{\E(t)}. \label{ln:perturbed_GF_term2_eq2}
  \end{align}
  The equality on Line~(\ref{ln:perturbed_GF_term2_eq1}) follows from
  Lemma~\ref{lm:technical_commutation1}, and the inequality on
  Line~(\ref{ln:perturbed_GF_term2_eq2}) follows from Lemma~\ref{lm:technical_commutation2}.

  Combining the two terms, we can lower bound the time derivative of $h$ as
  \begin{align}
    \frac{d}{dt}h \geq 2(\lambda_{2}-\lambda_{1})(1-h)h(1 - \sqrt{2d} \frac{\lambda_{d} - \lambda_{1}}{\lambda_{2}-\lambda_{1}}\opnorm{\E(t)}\frac{1}{\sqrt{h}})),
  \end{align}
  or by dividing both sides by a negative number $-h$:
  \begin{align}
    \frac{d}{dt}(-\ln{h}) \leq -2(\lambda_{2}-\lambda_{1})(1-h)(1 - \sqrt{2d} \frac{\lambda_{d} - \lambda_{1}}{\lambda_{2}-\lambda_{1}}\opnorm{\E(t)}\frac{1}{\sqrt{h}})).
  \end{align}
  Dividing both sides by a positive number $-\ln{h}$:
  \begin{align}
    \frac{d}{dt}\ln{(-\ln{h})}
    &\leq -2(\lambda_{2}-\lambda_{1})\frac{1-h}{-\ln{h}}\cdot(1 - \sqrt{2d} \frac{\lambda_{d} - \lambda_{1}}{\lambda_{2}-\lambda_{1}}\opnorm{\E(t)}\frac{1}{\sqrt{h}}))\\
    &\leq -2(\lambda_{2}-\lambda_{1})\frac{1}{1-\ln{h}}\cdot(1 - \sqrt{2d} \frac{\lambda_{d} - \lambda_{1}}{\lambda_{2}-\lambda_{1}}\opnorm{\E(t)}\frac{1}{\sqrt{h}}))
  \end{align}
  where the second inequality follows from the fact that
  $\frac{1-h}{-\ln h} \geq \frac{1}{1-\ln{h}}$ for $h\in(0,1)$ (adaptive from
  the technical Lemma 4 in \cite{xu2018convergence}). For $\E(t)$ such that
  $1 - \sqrt{2d}\frac{\lambda_{d}-\lambda_{1}}{\lambda_{2}-\lambda_{1}}\opnorm{\E(t)} \frac{1}{\sqrt{h}}$
  is positive, $h$ is non-decreasing, meaning that $h(t) \geq h(0)$ for all $t$.
  Conditioned on $h(0)\geq \Omega(\frac{1}{d}$) at initialization, there exists
  a pair of constants $C_{0}, c$ such that if
  $\opnorm{\E(t)} \leq \frac{C_{0}}{d}\frac{\lambda_{2}-\lambda_{1}}{\lambda_{d}-\lambda_{1}}$
  for all $t$,
  $1 - h(t) \leq - \ln h(t) \leq \exp(-c\frac{\lambda_{2}-\lambda_{1}}{\log d} t)$.
 \end{proof}


\section{Technical lemmas}
\label{sec:app_technical}
\subsection{Technical lemma for VQE convergence}
\begin{lemma}\label{lm:technical_commutation1}
  Let $\mlvec{x}, \mlvec{v}$ be two vectors in $\complex^d$, the commutator
  $i[\mlvec{x}\mlvec{x}^\dagger, \mlvec{v}\mlvec{v}^\dagger]$ has a pair of non-zero eigenvalues
  $\pm |\<\mlvec{x},\mlvec{v}\>|
  \sqrt{1 - |\<\mlvec{x},\mlvec{v}\>|^2}$.
\end{lemma}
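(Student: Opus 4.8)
The plan is to reduce the problem to a short trace computation by exploiting the structure of the commutator $\mlvec{C} := [\mlvec{x}\mlvec{x}^\dagger, \mlvec{v}\mlvec{v}^\dagger]$, where $\mlvec{x},\mlvec{v}$ are unit vectors (as they represent normalized states in the intended application). First I would set $c := \mlvec{x}^\dagger\mlvec{v}$ and expand $\mlvec{C} = c\,\mlvec{x}\mlvec{v}^\dagger - \bar c\,\mlvec{v}\mlvec{x}^\dagger$, using $\mlvec{x}^\dagger\mlvec{v} = c$ and $\mlvec{v}^\dagger\mlvec{x} = \bar c$. This expression makes two facts immediate: (i) $\mlvec{C}^\dagger = -\mlvec{C}$, so $i\mlvec{C}$ is Hermitian with real eigenvalues; and (ii) the range of $\mlvec{C}$ lies in $\mathrm{span}\{\mlvec{x},\mlvec{v}\}$, so $\mlvec{C}$ has rank at most $2$ and hence at most two nonzero eigenvalues.

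Next I would pin down those eigenvalues using only traces. Since the trace of any commutator vanishes, $\tr(i\mlvec{C}) = 0$, the two possibly-nonzero eigenvalues must be negatives of one another, say $\pm\lambda$; this is where rank $\le 2$ together with tracelessness forces the $\pm\lambda$ structure rather than a single nonzero eigenvalue. Their common magnitude then follows from $\tr\bigl((i\mlvec{C})^2\bigr) = 2\lambda^2$, i.e.\ $\lambda^2 = -\tfrac12\tr(\mlvec{C}^2)$. Writing $\mlvec{A} = \mlvec{x}\mlvec{x}^\dagger$ and $\mlvec{B} = \mlvec{v}\mlvec{v}^\dagger$, and using the idempotency $\mlvec{A}^2 = \mlvec{A}$, $\mlvec{B}^2 = \mlvec{B}$ with cyclicity of the trace, I get $\tr(\mlvec{C}^2) = 2\tr(\mlvec{A}\mlvec{B}\mlvec{A}\mlvec{B}) - 2\tr(\mlvec{A}\mlvec{B})$. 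Both remaining traces collapse to powers of $|c|$ via the rank-one identity $\tr(\mlvec{x}\mlvec{x}^\dagger\mlvec{v}\mlvec{v}^\dagger) = (\mlvec{x}^\dagger\mlvec{v})(\mlvec{v}^\dagger\mlvec{x}) = |c|^2$, giving $\tr(\mlvec{A}\mlvec{B}) = |c|^2$ and $\tr(\mlvec{A}\mlvec{B}\mlvec{A}\mlvec{B}) = |c|^4$. Hence $\tr(\mlvec{C}^2) = -2|c|^2(1 - |c|^2)$, so $\lambda = \pm|c|\sqrt{1 - |c|^2} = \pm|\ip{\mlvec{x}}{\mlvec{v}}|\sqrt{1 - |\ip{\mlvec{x}}{\mlvec{v}}|^2}$, as claimed.

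As a concrete cross-check (which could replace the trace computation if a more explicit argument is preferred), one can restrict to the two-dimensional subspace $\mathrm{span}\{\mlvec{x},\mlvec{v}\}$, choose an orthonormal basis in which $\mlvec{x} = (1,0)^{\t}$ and $\mlvec{v} = (c,\sqrt{1-|c|^2})^{\t}$, and diagonalize the resulting $2\times 2$ anti-Hermitian block of $\mlvec{C}$ directly; its characteristic polynomial is $\mu^2 + |c|^2(1-|c|^2) = 0$, reproducing the same eigenvalues for $i\mlvec{C}$. I would also note that the degenerate boundary cases $|c| \in \{0,1\}$ (orthogonal or parallel vectors) make $\mlvec{C} = 0$, at which point the stated eigenvalue expression vanishes too, so no separate argument is needed.

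The main obstacle here is bookkeeping rather than anything conceptual: the only care required is keeping the conjugations straight so that $\mlvec{C}$ comes out anti-Hermitian (hence $\tr(\mlvec{C}^2) \le 0$ and $\lambda$ real), and confirming the two trace identities that drive the magnitude computation. Both are routine once the rank-one and idempotency relations are invoked.
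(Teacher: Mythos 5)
Your proposal is correct. The paper's own proof takes the explicit route you relegate to a cross-check: it expands $\mlvec{x} = \alpha\mlvec{v} + \beta\mlvec{w}$ with $\mlvec{w}\perp\mlvec{v}$, writes the commutator as $i\alpha^{*}\beta\,\mlvec{w}\mlvec{v}^{\dagger} - i\alpha\beta^{*}\mlvec{v}\mlvec{w}^{\dagger}$, and reads off the two nonzero eigenvalues from the trace ($\lambda_{+}+\lambda_{-}=0$) and the determinant of the $2\times2$ block ($\lambda_{+}\lambda_{-} = -|\alpha|^{2}|\beta|^{2}$). Your primary argument replaces the choice of basis with coordinate-free invariants: rank $\le 2$ plus tracelessness forces the $\pm\lambda$ structure, and $\lambda^{2} = -\tfrac12\tr(\mlvec{C}^{2})$ is evaluated purely from idempotency of the projectors and cyclicity, via $\tr(\mlvec{C}^2) = 2\tr(\mlvec{A}\mlvec{B}\mlvec{A}\mlvec{B}) - 2\tr(\mlvec{A}\mlvec{B}) = -2|c|^{2}(1-|c|^{2})$; I verified these trace identities and they are right. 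The two routes are logically parallel (both pin down a rank-$2$ traceless Hermitian by two scalar invariants), but yours avoids any basis bookkeeping and generalizes more readily to situations where an explicit orthogonal decomposition is awkward, while the paper's is shorter and makes the eigenvectors visible. One point where you are actually more careful than the paper: the stated eigenvalue formula requires $\mlvec{x},\mlvec{v}$ to be unit vectors (otherwise it fails under scaling), which you flag explicitly and the paper leaves implicit, and you also dispose of the degenerate cases $|c|\in\{0,1\}$ where the ``pair of non-zero eigenvalues'' collapses to zero.
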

\begin{proof}
  Express $\mlvec x$ as $\alpha \mlvec{v} + \beta \mlvec{w}$ with $\mlvec{w}$
  orthogonal to $\mlvec{v}$.
  \begin{align}
    i[\mlvec{xx}^\dagger, \mlvec{vv}^\dagger] &= i\alpha^*\beta \mlvec{wv}^\dagger - i \alpha\beta^*\mlvec{vw}^\dagger.
  \end{align}
  This rank-2 Hermitian has two real eigenvalues $\lambda_+$ and $\lambda_-$ such
  that $\lambda_+ + \lambda_- = 0$ and $\lambda_+ \lambda_- = - |\alpha|^2|\beta|^2$.
\end{proof}
\begin{lemma}[Bounding commutator norms]\label{lm:technical_commutation2}
  Let $\mlvec{M}:=\sum_{j=1}^d\lambda_j\mlvec{v}_j\mlvec{v}_j^\dagger$ be a
  $d\times d$-Hermitian matrix with eigenvalues $\lambda_1\leq \cdots \leq \lambda_d$. The frobenius norm of the commutator $[\mlvec{M},
  \mlvec{xx}^\dagger]$ can be bounded in terms of $|\<\mlvec{x},\mlvec{v}\>|$ as:
  \begin{align}
    \|[\mlvec{M}, \mlvec{xx}^\dagger]\|_F \leq& \sqrt{2}(\lambda_d - \lambda_1) \sqrt{1-|\<\mlvec{x}, \mlvec{v}\>|^2}.
  \end{align}
\end{lemma}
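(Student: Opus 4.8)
The plan is to reduce the Frobenius norm of the commutator to the variance of $\mlvec{M}$ in the state $\mlvec{x}$, and then bound that variance through the overlap with the ground eigenvector $\mlvec{v} = \mlvec{v}_1$. Throughout I treat $\mlvec{x}$ as a unit vector (as in every application, where $\mlvec{x} = \ket{\Psi(t)}$), so that $P := \mlvec{x}\mlvec{x}^\dagger$ is a rank-one projector satisfying $P^2 = P$.

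First I would establish the identity $\|[\mlvec{M}, P]\|_F^2 = 2\bigl(\langle \mlvec{x}, \mlvec{M}^2 \mlvec{x}\rangle - \langle \mlvec{x}, \mlvec{M}\mlvec{x}\rangle^2\bigr)$. Since $\mlvec{M}$ and $P$ are Hermitian, the commutator $[\mlvec{M}, P]$ is anti-Hermitian, hence $\|[\mlvec{M},P]\|_F^2 = -\tr([\mlvec{M},P]^2)$. Expanding $[\mlvec{M},P]^2 = \mlvec{M}P\mlvec{M}P - \mlvec{M}P^2\mlvec{M} - P\mlvec{M}^2 P + P\mlvec{M}P\mlvec{M}$ and using $P^2 = P$, cyclicity of the trace, together with $\tr(\mlvec{M}P\mlvec{M}P) = (\mlvec{x}^\dagger \mlvec{M}\mlvec{x})^2$ and $\tr(P\mlvec{M}^2) = \mlvec{x}^\dagger \mlvec{M}^2\mlvec{x}$, the two middle terms each contribute $-\tr(P\mlvec{M}^2)$ while the two outer terms combine to $2(\mlvec{x}^\dagger \mlvec{M}\mlvec{x})^2$, yielding the claimed identity. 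Its right-hand side is exactly twice the variance $\mathrm{Var}_{\mlvec{x}}(\mlvec{M})$ of the spectral distribution $p_j = |\langle \mlvec{x}, \mlvec{v}_j\rangle|^2$.

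Next I would bound this variance. Writing $\mlvec{x} = \sum_j c_j \mlvec{v}_j$ with $\sum_j |c_j|^2 = 1$ and $p_j = |c_j|^2$, I note that variance is invariant under the shift $\mlvec{M} \mapsto \mlvec{M} - \lambda_1 \mlvec{I}$, whose eigenvalues $\mu_j = \lambda_j - \lambda_1$ lie in $[0, \lambda_d - \lambda_1]$ with $\mu_1 = 0$. Dropping the nonnegative squared-mean term, $\mathrm{Var}_{\mlvec{x}}(\mlvec{M}) \le \sum_j p_j \mu_j^2 = \sum_{j \ge 2} p_j \mu_j^2 \le (\lambda_d - \lambda_1)^2 \sum_{j\ge 2} p_j = (\lambda_d - \lambda_1)^2 \bigl(1 - |\langle \mlvec{x}, \mlvec{v}\rangle|^2\bigr)$, where $|\langle \mlvec{x}, \mlvec{v}\rangle|^2 = p_1$. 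Combining with the identity above gives $\|[\mlvec{M}, \mlvec{xx}^\dagger]\|_F^2 \le 2(\lambda_d - \lambda_1)^2 \bigl(1 - |\langle \mlvec{x}, \mlvec{v}\rangle|^2\bigr)$, and taking square roots yields the stated bound.

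There is no genuine obstacle here: the only step demanding care is the trace bookkeeping in the commutator-to-variance identity, where one must track the four terms of $[\mlvec{M},P]^2$ and apply cyclicity correctly; everything afterward is an elementary variance estimate. The one assumption to flag is that $\mlvec{v}$ denotes the lowest eigenvector $\mlvec{v}_1$ (the ground state, matching the usage in Lemma~\ref{lm:vqe-perturbation}); it is precisely the vanishing $\mu_1 = 0$ after the shift that produces the factor $\bigl(1 - |\langle \mlvec{x}, \mlvec{v}\rangle|^2\bigr)$ rather than the weaker Popoviciu-type bound $\tfrac14(\lambda_d - \lambda_1)^2$. The symmetric argument shifting by $\lambda_d$ handles the top eigenvector, should that ever be needed.
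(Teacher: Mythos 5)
Your proof is correct and follows essentially the same route as the paper's: both reduce $\|[\mlvec{M},\mlvec{x}\mlvec{x}^\dagger]\|_F^2$ to twice the variance of $\mlvec{M}$ in the state $\mlvec{x}$, exploit shift invariance to make $\lambda_1$ the zero eigenvalue, drop the squared-mean term, and bound the remaining second moment by $(\lambda_d-\lambda_1)^2\left(1-|\langle\mlvec{x},\mlvec{v}_1\rangle|^2\right)$. The only differences are cosmetic: you prove the commutator-to-variance identity explicitly (the paper asserts it without proof) and expand $\mlvec{x}$ in the full eigenbasis rather than the paper's two-component split $\mlvec{x}=\alpha\mlvec{v}_1+\beta\mlvec{w}$.
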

\begin{proof}
  We first notice that, for any real value $\lambda$,
  $[\mlvec{M} - \lambda \mlvec{I}, \mlvec{x}\mlvec{x}^{\dagger}] = [\mlvec{M}, \mlvec{x}{x}^{\dagger}] - \lambda [\mlvec{I}, \mlvec{x}\mlvec{x}^{\dagger}] = [\mlvec{M}, \mlvec{x}\mlvec{x}^{\dagger}]$.
  Therefore to bound $\|[\mlvec{M}, \mlvec{x}\mlvec{x}^{\dagger}]\|_{F}$, it
  suffices to bound $\|[\tilde{\mlvec{M}}, \mlvec{x}\mlvec{x}^{\dagger}]\|_{F}$
  for $\tilde{\mlvec{M}} = \mlvec{M} - \lambda_{1}\mlvec{I}$, with $\lambda_{1}$
  being the smallest eigenvalue of $\mlvec{M}$.

  Expand $\mlvec{x}$ as $\alpha\mlvec{v_1} + \beta\mlvec{w}$, where
  $\mlvec{v}_{1}$ is the ground state of $\mlvec{M}$ and unit vector $\mlvec{w}$ is
  orthogonal to $\mlvec{v}_{1}$:
  \begin{align}
    \|[\mlvec{M}, \mlvec{xx}^\dagger]\|^2_F &= \|[\tilde{\mlvec{M}}, \mlvec{xx}^\dagger]\|^2_F = 2 \big(\mlvec{x}^\dagger\tilde{\mlvec{M}}^2\mlvec{x} - (\mlvec{x}^\dagger\tilde{\mlvec{M}}\mlvec{x})^2\big) \\
    & = 2 \big(|\beta|^2\mlvec{w}^\dagger\tilde{\mlvec{M}}^2\mlvec{w} - (|\beta|^{2}\mlvec{x}^\dagger\tilde{\mlvec{M}}\mlvec{x})^2\big) \leq 2 |\beta|^2\mlvec{w}^\dagger\tilde{\mlvec{M}}^2\mlvec{w}\\
    &\leq 2 |\beta|^2(\lambda_d - \lambda_1)^2.
  \end{align}
\end{proof}

\subsection{Techinical lemma for concentration of VQE dynamics}
In this subsection, we state and prove an estimation lemma used for proving the
concentration properties of $\mlvec{Y}(\mlvec{\theta})$.
\begin{restatable}[Estimation with Taylor expansion]{lemma}{helpertaylor}
  \label{lm:helper-taylor}
  Let $\mlvec{V}$ be a unitary matrix generated by Hermitian $\mlvec{H}$ as
  $\mlvec{V}=\exp(-i\theta\mlvec{H})$, we have that for any Hermitian
  $\mlvec{K}$
  \begin{align}
    \opnorm{(\mlvec{VKV}^{\dagger})^{\otimes 2} - \mlvec{K}^{\otimes 2}} &\leq 4 |\theta|\opnorm{\mlvec{H}}\opnorm{\mlvec{K}}^{2}.
  \end{align}
\end{restatable}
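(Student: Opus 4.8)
The plan is to reduce the second-tensor-power difference to a difference of conjugated operators, and then bound that difference directly. Write $\mlvec{K}' := \mlvec{V}\mlvec{K}\mlvec{V}^{\dagger}$; since $\mlvec{V}$ is unitary we have $\opnorm{\mlvec{K}'} = \opnorm{\mlvec{K}}$, a fact I will use repeatedly. First I would factor the tensor power using the telescoping identity $(\mlvec{K}')^{\otimes 2} - \mlvec{K}^{\otimes 2} = (\mlvec{K}'-\mlvec{K})\otimes \mlvec{K}' + \mlvec{K}\otimes(\mlvec{K}'-\mlvec{K})$, apply the triangle inequality, and use the multiplicativity of the operator norm over tensor products, $\opnorm{\mlvec{A}\otimes\mlvec{B}} = \opnorm{\mlvec{A}}\,\opnorm{\mlvec{B}}$. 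Together with $\opnorm{\mlvec{K}'}=\opnorm{\mlvec{K}}$ this already yields $\opnorm{(\mlvec{K}')^{\otimes 2} - \mlvec{K}^{\otimes 2}} \le 2\,\opnorm{\mlvec{K}}\,\opnorm{\mlvec{K}'-\mlvec{K}}$.

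It then remains to show $\opnorm{\mlvec{V}\mlvec{K}\mlvec{V}^{\dagger} - \mlvec{K}} \le 2|\theta|\,\opnorm{\mlvec{H}}\,\opnorm{\mlvec{K}}$. I would establish this by writing $\mlvec{V}\mlvec{K}\mlvec{V}^{\dagger} - \mlvec{K} = (\mlvec{V}-\mlvec{I})\mlvec{K}\mlvec{V}^{\dagger} + \mlvec{K}(\mlvec{V}^{\dagger} - \mlvec{I})$ and bounding each summand by $\opnorm{\mlvec{V}-\mlvec{I}}\,\opnorm{\mlvec{K}}$, using $\opnorm{\mlvec{V}^{\dagger}}=1$ and $\opnorm{\mlvec{V}^{\dagger} - \mlvec{I}} = \opnorm{\mlvec{V}-\mlvec{I}}$ (the latter because $\mlvec{V}^{\dagger}-\mlvec{I}=(\mlvec{V}-\mlvec{I})^{\dagger}$ and the operator norm is adjoint-invariant). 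Finally, since $\mlvec{V} = \exp(-i\theta\mlvec{H})$ is diagonal in the eigenbasis of $\mlvec{H}$, the eigenvalues of $\mlvec{V}-\mlvec{I}$ are $e^{-i\theta\mu_j}-1$ for the real eigenvalues $\mu_j$ of $\mlvec{H}$, and the elementary estimate $|e^{-i\theta\mu_j}-1| = 2|\sin(\theta\mu_j/2)| \le |\theta||\mu_j| \le |\theta|\,\opnorm{\mlvec{H}}$ gives $\opnorm{\mlvec{V}-\mlvec{I}} \le |\theta|\,\opnorm{\mlvec{H}}$. Chaining the three bounds yields $\opnorm{(\mlvec{K}')^{\otimes 2}-\mlvec{K}^{\otimes 2}} \le 2\,\opnorm{\mlvec{K}}\cdot 2|\theta|\,\opnorm{\mlvec{H}}\,\opnorm{\mlvec{K}} = 4|\theta|\,\opnorm{\mlvec{H}}\,\opnorm{\mlvec{K}}^{2}$, as claimed.

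As an alternative route for the conjugation estimate I would note that one can set $f(s) = e^{-is\theta\mlvec{H}}\mlvec{K}\,e^{is\theta\mlvec{H}}$ for $s\in[0,1]$, compute $f'(s) = -i\theta[\mlvec{H}, f(s)]$, and integrate to get $\mlvec{V}\mlvec{K}\mlvec{V}^{\dagger}-\mlvec{K} = -i\theta\int_0^1 [\mlvec{H}, f(s)]\,ds$; bounding the integrand by $\opnorm{[\mlvec{H}, f(s)]}\le 2\opnorm{\mlvec{H}}\opnorm{f(s)} = 2\opnorm{\mlvec{H}}\opnorm{\mlvec{K}}$ (again using unitary invariance for $\opnorm{f(s)}=\opnorm{\mlvec{K}}$) reproduces the same factor of $2$.

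There is no serious obstacle here; the statement is a routine perturbation estimate. The only points requiring genuine care are the bookkeeping of the two factors of $2$ — one from splitting the telescoped tensor power into two terms, and one from splitting the conjugation difference — and the consistent application of unitary invariance so that every occurrence of $\mlvec{K}'$ or $\mlvec{V}^{\dagger}$ contributes $\opnorm{\mlvec{K}}$ or $1$ respectively rather than an extra factor. The constant $4$ is then exactly what the stated bound requires, so no sharper control of $\opnorm{\mlvec{V}-\mlvec{I}}$ is needed.
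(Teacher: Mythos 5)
Your proof is correct, and it takes a genuinely different route from the paper's. The paper proves the lemma the way its name suggests: it differentiates the full tensor-squared quantity, computing $\frac{d}{d\theta}(\mlvec{V}\mlvec{K}\mlvec{V}^{\dagger})^{\otimes 2} = \mlvec{V}^{\otimes 2}\left([-i\mlvec{H},\mlvec{K}]\otimes\mlvec{K} + \mlvec{K}\otimes[-i\mlvec{H},\mlvec{K}]\right)(\mlvec{V}^{\dagger})^{\otimes 2}$, writes the difference as the integral of this derivative from $0$ to $\theta$, and bounds the integrand by $4\opnorm{\mlvec{H}}\opnorm{\mlvec{K}}^{2}$ (a factor $2$ from the two Leibniz terms, a factor $2$ from each commutator norm $\opnorm{[\mlvec{H},\mlvec{K}]}\le 2\opnorm{\mlvec{H}}\opnorm{\mlvec{K}}$). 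Your primary argument avoids calculus entirely: the telescoping identity $(\mlvec{K}')^{\otimes 2}-\mlvec{K}^{\otimes 2} = (\mlvec{K}'-\mlvec{K})\otimes\mlvec{K}' + \mlvec{K}\otimes(\mlvec{K}'-\mlvec{K})$ plays the role of the Leibniz rule, and the spectral estimate $\opnorm{e^{-i\theta\mlvec{H}}-\mlvec{I}} \le |\theta|\opnorm{\mlvec{H}}$ (via $2\left\lvert\sin(\theta\mu_j/2)\right\rvert \le |\theta\mu_j|$, valid since $\mlvec{V}-\mlvec{I}$ is normal) plays the role of the commutator bound. The two factors of $2$ land in the same places, so you recover the same constant $4$. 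Your ``alternative route'' for the conjugation estimate — the Duhamel integral $\mlvec{V}\mlvec{K}\mlvec{V}^{\dagger}-\mlvec{K} = -i\theta\int_0^1[\mlvec{H},f(s)]\,ds$ — is in fact the paper's technique, just applied one tensor level down. What each approach buys: the paper's single integral formula extends mechanically to higher tensor powers $(\mlvec{V}\mlvec{K}\mlvec{V}^{\dagger})^{\otimes R}$ and matches the Duhamel-style manipulations used elsewhere in its concentration arguments, while yours is more elementary, needs only the triangle inequality, multiplicativity of $\opnorm{\cdot}$ over tensor products, and unitary invariance, and isolates the sharp bound on $\opnorm{\mlvec{V}-\mlvec{I}}$ as the only place where the structure of $\mlvec{V}$ enters.
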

\begin{proof}
  The first- and second-order derivatives of $(\mlvec{V}\mlvec{K}\mlvec{V}^{\dagger})^{\otimes 2}$ are:
  \begin{align}
    \frac{d}{d\theta} (\mlvec{V}\mlvec{K}\mlvec{V}^{\dagger})^{\otimes 2} &= \mlvec{V}^{\otimes 2} ([-i\mlvec{H}, \mlvec{K}]\otimes \mlvec{K} + \mlvec{K} \otimes [-i\mlvec{H}, \mlvec{K}])(\mlvec{V}^{\dagger})^{\otimes 2},\\
    \frac{d^{2}}{d\theta^{2}} (\mlvec{V}\mlvec{K}\mlvec{V}^{\dagger})^{\otimes 2} &= - \mlvec{V}^{\otimes 2}
                                                            (2 [\mlvec{H}, \mlvec{K}]\otimes [\mlvec{H},\mlvec{K}] + [\mlvec{H}, [\mlvec{H}, \mlvec{K}]]\otimes \mlvec{K} + \mlvec{K} \otimes [\mlvec{H}, [\mlvec{H},\mlvec{K}]])
                                                            (\mlvec{V}^{\dagger})^{\otimes 2}.
  \end{align}
  Hence
  \begin{align}
    &\opnorm{(\mlvec{V}\mlvec{K}\mlvec{V}^{\dagger})^{\otimes 2} - \mlvec{K}^{\otimes 2}}\\
    =& \opnorm{\int_{0}^{\theta}d\theta^{\prime}(e^{-i(\theta-\theta^{\prime})\mlvec{H}})^{\otimes 2} ([-i\mlvec{H}, \mlvec{K}]\otimes \mlvec{K} + \mlvec{K} \otimes [-i\mlvec{H}, \mlvec{K}])(e^{i(\theta-\theta^{\prime})\mlvec{H}})^{\otimes 2}} \\
    \leq & 4|\theta|\opnorm{\mlvec{H}}\opnorm{\mlvec{K}}^{2}.
  \end{align}
\end{proof}

\section{Proof of Corollary~\ref{cor:noisy-convergence}}
\label{sec:app_noise}
\noisyvqedynamics*
\begin{proof}
  We start by calculating the gradient of $\mlvec{U}_{r:p}(\mlvec\theta)$ with
  respect to $\theta_{l}$.
  For $r > l$, $\mlvec{U}_{r:p}(\mlvec\theta)$ is independent of $\theta_{l}$; for $r \leq l$,
  \begin{align}
    \frac{\partial \mlvec{U}_{r:p}}{\partial\theta_{l}}
    = \mlvec{U}_{l:p}(\mlvec{\theta}) (-i\mlvec{H})  \mlvec{U}_{r:l-1}(\mlvec\theta)
    = -i\mlvec{U}_{l:p}\mlvec{H}\mlvec{U}_{l:p}^{\dagger}\mlvec{U}_{r:p}.
  \end{align}
Therefore
  \begin{align}
    \frac{\partial L(\mlvec{\theta})}{\partial\theta_l}
    &=\bra{\Phi}\mlvec{U}^{\dagger}_{0}\frac{\partial}{\partial\theta_{l}}\mlvec{U}^{\dagger}_{1:p}\mlvec{M}\mlvec{U}_{1:p}\mlvec{U}_{0}\ket{\Phi}
      + \bra{\Phi}\mlvec{U}^{\dagger}_{0}\mlvec{U}^{\dagger}_{1:p}\mlvec{M}\frac{\partial}{\partial\theta_{l}}\mlvec{U}_{1:p}\mlvec{U}_{0}\ket{\Phi}\\
    &= \bra{\Phi}\mlvec{U}^{\dagger}_{0}\mlvec{U}^{\dagger}_{1:p}i[\mlvec{U}_{l:p}\mlvec{H}\mlvec{U}^{\dagger}_{l:p},\mlvec{M}]\mlvec{U}_{1:p}\mlvec{U}_{0}\ket{\Phi}\\
    &= i\tr([\mlvec{M}, \newketbra{\Psi}{\Psi}] \mlvec{U}_{l:p}\mlvec{H}\mlvec{U}^{\dagger}_{l:p}).
  \end{align}
  Following gradient flow with learning rate $\eta$:
\begin{align}
  \frac{d\theta_l}{dt} = -\eta\big(\frac{\partial}{\partial\theta_{l}}L(\mlvec\theta) + \noise{l}\big)
    = -i\eta\tr([\mlvec{M}, \newketbra{\Psi(t)}{\Psi(t)}] \mlvec{U}_{l:p}\mlvec{H}\mlvec{U}^{\dagger}_{l:p}) - \eta\noise{l}.
\end{align}

The dynamics for $\mlvec{U}_{l:p}(\mlvec\theta(t))$ and $\ket{\Psi(t)}$ are therefore:
\begin{align}
  &\frac{d}{dt}\mlvec{U}_{l:p}(t)\\
  =& \sum_{r=l}^p\frac{d\theta_r}{dt}\frac{\partial}{\partial\theta_{r}}\mlvec{U}_{l:p}\\
  =& -\eta\sum_{r=l}^p\tr([\mlvec{M},\newketbra{\Psi(t)}{\Psi(t)}]\mlvec{U}_{r:p}\mlvec{H}\mlvec{U}^{\dagger}_{r:p})\mlvec{U}_{r:p}\mlvec{H}\mlvec{U}^{\dagger}_{r:p}\mlvec{U}_{l:p} +i \eta \sum_{r=1}^{p}\noise{r} \mlvec{U}_{r:p}\mlvec{H}\mlvec{U}^{\dagger}_{r:p}\mlvec{U}_{l:p},
\end{align}
and
\begin{align}
  \frac{d}{dt}\ket{\Psi(t)}
  &= \frac{d}{dt}\mlvec{U}_{1:p}\mlvec{U}_0\ket{\Phi}\\
  &= -(\eta\cdot p Z)\frac{1}{p Z}
    \big(\sum_{l=1}^p\tr([\mlvec{M},\newketbra{\Psi(t)}{\Psi(t)}]\mlvec{U}_{l:p}\mlvec{H}\mlvec{U}^{\dagger}_{l:p})\mlvec{U}_{l:p}\mlvec{H}\mlvec{U}^{\dagger}_{l:p}\big)
    \mlvec{U}_{1:p}\mlvec{U}_0\ket{\Phi}\\
  &+i\eta \sum_{l=1}^{p}\noise{l} \mlvec{U}_{l:p}\mlvec{H}\mlvec{U}^{\dagger}_{l:p}\mlvec{U}_{1:p}\mlvec{U}_{0}\ket{\Phi}\\
  &= - (\eta\cdot pZ)\tr_1(\mlvec{Y}  [\mlvec{M},\newketbra{\Psi(t)}{\Psi(t)}]\otimes \mlvec{I})\ket{\Psi(t)}+ \eta \sum_{l=1}^{p}i\noise{l}\mlvec{H}_{l}\ket{\Psi(t)}.
\end{align}
\end{proof}

\noisyvqeperturbation*
\begin{proof}[Proof for Lemma~\ref{lm:noisy-vqe-perturbation}]
  Let $\E(t):=\YY(t) - (\mlvec{W} - \frac{1}{d}\mlvec{I}_{d^{2}\times d^{2}})$ denote the deviation of $\YY(t)$ from its expected value.
  The matrix that governs the dynamics can be expressed as
  \begin{align}
    \tr_{1}(\mlvec{Y}(t) ([\mlvec{M}, \newketbra{\Psi{(t)}}{\Psi{(t)}}] \otimes \mlvec{I}_{d\times d}))
    = [\mlvec{M}, \newketbra{\Psi{(t)}}{\Psi{(t)}}] + E(t)
  \end{align}
  where
  \begin{align}
    E(t) :=  \tr_{1}\big(\E(t)([\mlvec{M}, \newketbra{\Psi{(t)}}{\Psi{(t)}}] \otimes \mlvec{I}_{d\times d})\big).
  \end{align}

  Define $h$ as $|\braket{\Psi^{\star}|\Psi(t)}|^{2}$, the time derivative of $h$
  \begin{align}
    \frac{d}{dt}h &= (\frac{d}{dt}\ket{\Psi(t)})^{\dagger}\newketbra{\Psi^{\star}}{\Psi^{\star}}\Psi(t)\rangle +
                    \braket{\Psi(t)|\Psi^{\star}}\cdot\bra{\Psi^{\star}}\frac{d}{dt}\ket{\Psi(t)} \\
                  &= 2(\bra{\Psi(t)} \mlvec{M} \ket{\Psi(t)} - \lambda_{1})|\braket{\Psi^{\star}|\Psi(t)}|^{2}\\
                  &+ \tr(E(t) [\newketbra{\Psi^{\star}}{\Psi^{\star}}, \newketbra{\Psi(t)}{\Psi(t)}])\\
                  &+ \tr(N(t) [\newketbra{\Psi^{\star}}{\Psi^{\star}}, \newketbra{\Psi(t)}{\Psi(t)}])
  \end{align}
  with $N(t)$ defined as $-\frac{1}{pZ} \sum i\noise{l}_{t}\mlvec{H}_{l}$.
  The first term corresponds to the actual
  Riemannian gradient flow on the sphere:
  \begin{align}
    2(\bra{\Psi(t)} \mlvec{M} \ket{\Psi(t)} - \lambda_{1})|\braket{\Psi^{\star}|\Psi(t)}|^{2}
    &=2(\bra{\Psi(t)} \mlvec{M} \ket{\Psi(t)} - \lambda_{1}) h\\
    &\geq 2 ((1-h)\lambda_{2} + h\lambda_{1} - \lambda_{1}) h\\
    &= 2 (\lambda_{2} - \lambda_{1}) (1-h)h.
  \end{align}
  The second term stems from the deviation of
  $\mlvec{Y}$ from its expectation:
  \begin{align}
    &\tr(E(t) [\newketbra{\Psi^{\star}}{\Psi^{\star}}, \newketbra{\Psi(t)}{\Psi(t)}])\\
    =&\tr \big(\tr_{1}\big(\E(t)([\mlvec{M}, \newketbra{\Psi{(t)}}{\Psi{(t)}}] \otimes \mlvec{I}_{d\times d})\big) [\newketbra{\Psi^{\star}}{\Psi^{\star}}, \newketbra{\Psi(t)}{\Psi(t)}]\big)\\
    =&\tr\big(
       \E(t)([\mlvec{M}, \newketbra{\Psi{(t)}}{\Psi{(t)}}] \otimes [\newketbra{\Psi^{\star}}{\Psi^{\star}}, \newketbra{\Psi(t)}{\Psi(t)}])
       \big)\\
    \geq& - \opnorm{\E(t)} \|[\mlvec{M}, \newketbra{\Psi{(t)}}{\Psi{(t)}}] \otimes [\newketbra{\Psi^{\star}}{\Psi^{\star}}, \newketbra{\Psi(t)}{\Psi(t)}]\|_{\mathsf{tr}}\\
    =& - 2 \sqrt{h(1-h)} \opnorm{\E(t)} \|[\mlvec{M}, \newketbra{\Psi{(t)}}{\Psi{(t)}}] \|_{\mathsf{tr}} \\
    \geq& - 2 \sqrt{d} \sqrt{h(1-h)} \opnorm{\E(t)} \|[\mlvec{M}, \newketbra{\Psi{(t)}}{\Psi{(t)}}] \|_{F} \\
    \geq& - 2\sqrt{2} \sqrt{d} \sqrt{h}(1-h) (\lambda_{d} - \lambda_{1}) \opnorm{\E(t)}\\
    \geq& - C_{4}h(1-h) (\lambda_{2} - \lambda_{1}) \frac{1}{\sqrt{hd}}.
  \end{align}
  Here we use technical Lemma~\ref{lm:technical_commutation1} and
  \ref{lm:technical_commutation2} and the fact that $\opnorm{\E(t)}$ is $O(\frac{\lambda_{2}-\lambda_{1}}{\lambda_{d}-\lambda_{1}}\cdot\frac{1}{d})$.

  The third term is a result of inaccurate estimation of gradients:
  \begin{align}
    &\tr(N(t) [\newketbra{\Psi^{\star}}{\Psi^{\star}}, \newketbra{\Psi(t)}{\Psi(t)}]) \geq -2 \opnorm{N(t)} \sqrt{h(1-h)} \geq - C_{5}(\lambda_{2}-\lambda_{1}) h(1-h)
  \end{align}
  Here we use the fact that
  $\opnorm{N(t)} \leq O\big((\lambda_{2} - \lambda_{1})\sqrt{h(1-h)}\big)$ if
  $\|\noisevec\|_{\infty}$ is $O\big(\frac{Z}{\opnorm{\mlvec{H}}}(\lambda_{2}-\lambda_{1})\sqrt{h(1-h)}\big)$.

  Combining all three terms, we have
  \begin{align}
    \frac{d}{dt}h \geq C_{6}(\lambda_{2}-\lambda_{1})(1-h)h(1-\frac{C_{7}}{\sqrt{hd}}).
  \end{align}
  Following the same calculation in
  Section~\ref{subsec:app_proof_vqe_perturbation}, we have
  $1 - h(t) \leq - \ln h(t) \leq \exp(-c\frac{\lambda_{2}-\lambda_{1}}{\log d} t)$
  for some constant $c$ if $h(0)$ is $\Omega(1/d)$.
\end{proof}


\section{Proof of Corollary~\ref{cor:effective-convergence}}
\label{sec:app_subgroup}
The proof of Corollary~\ref{cor:effective-convergence} involves replacing the
integration formula in the proof to the main theorem with integration over
subgroups. We start by presenting a basic fact about block-diagonal matrices (Lemma~\ref{lm:subgroup_fact}) and
the integration formula for subgroups of $SU(d)$
(Lemma~\ref{lm:subgroup_integration}).

\begin{lemma}[Basic fact]
  \label{lm:subgroup_fact}
  Let $G$ be a matrix subgroup of $SU(d)$ inducing a decomposition of invariant
  subspace $V=\oplus_{j=1}^{m}{V}_{j}$ with projections
  $\{\mlvec{\Pi}_{j}\}_{j=1}^{m}$. Without loss of generality, assume $V_{1}$ to be
  the subspace of interest. Then for any Hermitian $\mlvec{A}$ and unitary matrix $\mlvec{U}$ in
  group $G$:
  \begin{align}
    \mlvec{\Pi}_{1}\mlvec{U}\mlvec{A}\mlvec{U}^{\dagger}\mlvec{\Pi}_{1} = \mlvec{\Pi}_{1}\mlvec{U}\mlvec{\Pi}_{1} \  \mlvec{\Pi}_{1}\mlvec{A}\mlvec{\Pi}_{1} \  \mlvec{\Pi}_{1}\mlvec{U}^{\dagger}\mlvec{\Pi}_{1}
  \end{align}
\end{lemma}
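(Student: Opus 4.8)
The plan is to reduce the claimed operator identity to a single structural fact: that the projection $\mlvec{\Pi}_{1}$ commutes with every group element, i.e. $\mlvec{\Pi}_{1}\mlvec{U} = \mlvec{U}\mlvec{\Pi}_{1}$, and hence also $\mlvec{\Pi}_{1}\mlvec{U}^{\dagger} = \mlvec{U}^{\dagger}\mlvec{\Pi}_{1}$, for every $\mlvec{U} \in G$. Granting this, the identity follows by pure algebra. On the left-hand side I would move $\mlvec{\Pi}_{1}$ inward through $\mlvec{U}$ and $\mlvec{U}^{\dagger}$ to write $\mlvec{\Pi}_{1}\mlvec{U}\mlvec{A}\mlvec{U}^{\dagger}\mlvec{\Pi}_{1} = \mlvec{U}\mlvec{\Pi}_{1}\mlvec{A}\mlvec{\Pi}_{1}\mlvec{U}^{\dagger}$. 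On the right-hand side the commutation relation collapses each sandwiched factor via $\mlvec{\Pi}_{1}\mlvec{U}\mlvec{\Pi}_{1} = \mlvec{U}\mlvec{\Pi}_{1}$ and $\mlvec{\Pi}_{1}\mlvec{U}^{\dagger}\mlvec{\Pi}_{1} = \mlvec{U}^{\dagger}\mlvec{\Pi}_{1}$, which (using $\mlvec{\Pi}_{1}^{2} = \mlvec{\Pi}_{1}$) also yields $\mlvec{U}\mlvec{\Pi}_{1}\mlvec{A}\mlvec{\Pi}_{1}\mlvec{U}^{\dagger}$. The two sides agree.

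The core of the argument is therefore establishing the commutation $\mlvec{\Pi}_{1}\mlvec{U} = \mlvec{U}\mlvec{\Pi}_{1}$. First I would use that $V_{1}$ is invariant under $G$, so $\mlvec{U}V_{1} \subseteq V_{1}$ for every $\mlvec{U} \in G$. Next, because $G$ is a \emph{group} of unitaries it is closed under taking adjoints ($\mlvec{U}^{\dagger} = \mlvec{U}^{-1} \in G$), and I would exploit this to show that the orthogonal complement $V_{1}^{\perp}$ (in the ambient space $\mathcal{H} = \complex^{d}$) is also invariant: for $v \in V_{1}$ and $w \in V_{1}^{\perp}$ one has $\ip{v}{\mlvec{U}w} = \ip{\mlvec{U}^{\dagger}v}{w} = 0$, since $\mlvec{U}^{\dagger}v \in V_{1}$ and $w \perp V_{1}$. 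Consequently every $\mlvec{U} \in G$ is block-diagonal with respect to the orthogonal splitting $\mathcal{H} = V_{1} \oplus V_{1}^{\perp}$ (this is also consistent with the full orthogonal decomposition $V = \oplus_{j} V_{j}$ furnished by Proposition~\ref{prop:unitary-reducibility}, though the direct two-block argument already suffices). Since $\mlvec{\Pi}_{1}$ acts as $\mlvec{I}$ on $V_{1}$ and as $\mlvec{0}$ on $V_{1}^{\perp}$, it shares this block structure and hence commutes with $\mlvec{U}$, as desired.

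I expect no genuine obstacle here, only a point of care: the summands must be taken \emph{orthogonal}, which is exactly what complete reducibility of a unitary representation guarantees, and which is what makes $\mlvec{\Pi}_{1}$ a Hermitian idempotent and validates the block-diagonal picture. I also note that the Hermiticity of $\mlvec{A}$ plays no role in the identity itself; it is stated only because $\mlvec{A}$ will later be instantiated as a generating Hamiltonian, and this lemma is the tool that pushes the subgroup-restricted dynamics into the invariant subspace spanned by $\subspacecolumn$. Everything downstream is routine matrix manipulation, so once the commutation relation is in hand the proof is immediate.
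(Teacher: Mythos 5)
Your proof is correct, and it differs from the paper's in a small but genuine way. The paper's proof never isolates the commutation relation $\mlvec{\Pi}_{1}\mlvec{U} = \mlvec{U}\mlvec{\Pi}_{1}$; instead it asserts block-diagonality of every $\mlvec{U}\in G$ with respect to the \emph{full} decomposition, i.e. $\mlvec{\Pi}_{j'}\mlvec{U}\mlvec{\Pi}_{j} = 0$ for all $j\neq j'$ (which follows immediately from the hypothesis that \emph{every} $V_{j}$ is invariant and the summands are mutually orthogonal), then inserts the resolution of identity $\sum_{j}\mlvec{\Pi}_{j}$ on both sides of $\mlvec{A}$, collapses the double sum to the $j=j'=1$ term, and finishes with $\mlvec{\Pi}_{1}^{2}=\mlvec{\Pi}_{1}$. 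You instead work with the two-block splitting $\mathcal{H}=V_{1}\oplus V_{1}^{\perp}$ and derive invariance of $V_{1}^{\perp}$ from closure of $G$ under adjoints ($\mlvec{U}^{\dagger}=\mlvec{U}^{-1}\in G$), which the paper's assertion leaves implicit. What your route buys is self-containedness and weaker hypotheses: you only use invariance of $V_{1}$ plus the group structure, not invariance of the other summands, and you make explicit exactly where unitarity and group closure enter. What the paper's route buys is brevity: with all $V_{j}$ invariant by hypothesis, block-diagonality is a one-line observation and the rest is a mechanical expansion. Your remark that Hermiticity of $\mlvec{A}$ is never used is also accurate for both proofs.
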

\begin{proof}
  The decomposition of invariant subspaces dictates that any
  $\mlvec{U}\in G$ is block-diagonal under $\{\mlvec{\Pi}_{j}\}_{j=1}^{m}$,
  namely $\forall \mlvec{U}\in G, \forall j\neq j'$,
  $\mlvec{\Pi}_{j'}\mlvec{U}\mlvec{\Pi}_{j} = 0$.
   \begin{align}
    &\mlvec{\Pi}_{1}\mlvec{U}\mlvec{A}\mlvec{U}^{\dagger}\mlvec{\Pi}_{1}\\
    =& \mlvec{\Pi}_{1}\mlvec{U}\sum_{j=1}^{m}\mlvec{\Pi}_{j}\mlvec{A}\sum_{j'=1}^{m}\mlvec{\Pi}_{j'}\mlvec{U}^{\dagger}\mlvec{\Pi}_{1}\\
    =& \sum_{j,j'\in[m]}(\mlvec{\Pi}_{1}\mlvec{U}\mlvec{\Pi}_{j})\mlvec{A}(\mlvec{\Pi}_{j'}\mlvec{U}^{\dagger}\mlvec{\Pi}_{1})\\
     =& \mlvec{\Pi}_{1}\mlvec{U}\mlvec{\Pi}_{1}\mlvec{A}\mlvec{\Pi}_{1}\mlvec{U}^{\dagger}\mlvec{\Pi}_{1}\\
     =& \mlvec{\Pi}_{1}\mlvec{U}\mlvec{\Pi}_{1} \mlvec{\Pi}_{1} \mlvec{A} \mlvec{\Pi}_{1} \mlvec{\Pi}_{1}\mlvec{U}^{\dagger}\mlvec{\Pi}_{1}.
  \end{align}
  The last equation uses the property of projections $\mlvec{\Pi}_{j}^{2} = \mlvec{\Pi}_{j}$.
\end{proof}
As a direct result, we have the following generic integral formula for $\mlvec{U}$
sampled from any $\mathcal{D}$ supported on the subgroup $G$:
\begin{lemma}[Integration formula on subgroup restricted to an invariant subspace]
  \label{lm:subgroup_integration}
  Let $G$ be a matrix subgroup of $SU(d)$ inducing a decomposition of invariant
  subspace $V=\oplus_{j=1}^{m}{V}_{j}$ with projections
  $\{\mlvec{\Pi}_{j}\}_{j=1}^{m}$. Without loss of generality, assume $V_{1}$ to be
  the subspace of interest and let $\subspacecolumn \in \complex^{d\times \deff}$ be an arbitrary orthonormal
  basis for $V_{1}$. For any Hermitians $\{\mlvec{A}_{r}\}_{r=1}^{R}$ and
  measure $\mathcal{D}$ over $G$:
  \begin{align}
    (\subspacecolumn^{\dagger})^{\otimes R} \EXP_{\mlvec{U}\sim\mathcal{D}}[\otimes_{r=1}^{R} \mlvec{U}\mlvec{A}_{r}\mlvec{U}^{\dagger}]  \subspacecolumn^{\otimes R}
    = \EXP_{\mlvec{U}^{(1)}\sim\mathcal{D}^{(1)}}[\otimes_{r=1}^{R} \mlvec{U}^{(1)}\mlvec{A}_{r}^{(1)}(\mlvec{U}^{(1)})^{\dagger}]
  \end{align}
  where $\mathcal{D}^{(1)}$ is the distribution of
  $\subspacecolumn^{\dagger}\mlvec{U}\subspacecolumn$ for $\mlvec{U}$ sampled with respect
  to $\mathcal{D}$, and
  $\mlvec{A}_{r}^{(1)}:=\subspacecolumn^{\dagger}\mlvec{A}_{r}\subspacecolumn$ is the
  Hermitian $\mlvec{A}_{r}$ restricted to the subspace $V_{1}$.
\end{lemma}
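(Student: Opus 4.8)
The plan is to recognize this as an almost immediate consequence of Lemma~\ref{lm:subgroup_fact}, by factorizing the tensor-product conjugation into a product of single-copy terms and then collapsing each term using the block-diagonal structure. First I would record the two defining properties of an orthonormal basis for the invariant subspace: orthonormality gives $\subspacecolumn^{\dagger}\subspacecolumn = \mlvec{I}_{\deff}$, while spanning $V_{1}$ gives $\subspacecolumn\subspacecolumn^{\dagger} = \mlvec{\Pi}_{1}$. The second identity yields $\subspacecolumn^{\dagger}\mlvec{\Pi}_{1} = \subspacecolumn^{\dagger}$ and $\mlvec{\Pi}_{1}\subspacecolumn = \subspacecolumn$, which is precisely what allows me to insert the projection $\mlvec{\Pi}_{1}$ next to the basis matrices so that Lemma~\ref{lm:subgroup_fact} can be invoked.

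Since the tensor factors act on independent copies of the Hilbert space, the conjugation distributes across the tensor product:
\begin{align}
(\subspacecolumn^{\dagger})^{\otimes R} \Big( \bigotimes_{r=1}^{R} \mlvec{U}\mlvec{A}_{r}\mlvec{U}^{\dagger} \Big) \subspacecolumn^{\otimes R} = \bigotimes_{r=1}^{R} \big( \subspacecolumn^{\dagger}\mlvec{U}\mlvec{A}_{r}\mlvec{U}^{\dagger}\subspacecolumn \big).
\end{align}
It therefore suffices to evaluate a single factor. Inserting projections, $\subspacecolumn^{\dagger}\mlvec{U}\mlvec{A}_{r}\mlvec{U}^{\dagger}\subspacecolumn = \subspacecolumn^{\dagger}\mlvec{\Pi}_{1}\mlvec{U}\mlvec{A}_{r}\mlvec{U}^{\dagger}\mlvec{\Pi}_{1}\subspacecolumn$, and Lemma~\ref{lm:subgroup_fact} rewrites the middle as $(\mlvec{\Pi}_{1}\mlvec{U}\mlvec{\Pi}_{1})(\mlvec{\Pi}_{1}\mlvec{A}_{r}\mlvec{\Pi}_{1})(\mlvec{\Pi}_{1}\mlvec{U}^{\dagger}\mlvec{\Pi}_{1})$. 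Writing each projected block as $\mlvec{\Pi}_{1}\mlvec{B}\mlvec{\Pi}_{1} = \subspacecolumn(\subspacecolumn^{\dagger}\mlvec{B}\subspacecolumn)\subspacecolumn^{\dagger}$ and cancelling the interior Gram factors via $\subspacecolumn^{\dagger}\subspacecolumn = \mlvec{I}_{\deff}$, each factor reduces cleanly to $\mlvec{U}^{(1)}\mlvec{A}_{r}^{(1)}(\mlvec{U}^{(1)})^{\dagger}$, with $\mlvec{U}^{(1)} = \subspacecolumn^{\dagger}\mlvec{U}\subspacecolumn$ and $\mlvec{A}_{r}^{(1)} = \subspacecolumn^{\dagger}\mlvec{A}_{r}\subspacecolumn$.

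Finally I would take the expectation over $\mlvec{U}\sim\mathcal{D}$. Because $\subspacecolumn$ is a fixed deterministic matrix, the expectation commutes with the outer conjugation, so the left-hand side equals $\EXP_{\mlvec{U}\sim\mathcal{D}}\big[ \bigotimes_{r=1}^{R} \mlvec{U}^{(1)}\mlvec{A}_{r}^{(1)}(\mlvec{U}^{(1)})^{\dagger} \big]$. The integrand depends on $\mlvec{U}$ only through $\mlvec{U}^{(1)} = \subspacecolumn^{\dagger}\mlvec{U}\subspacecolumn$, so by the pushforward change of variables defining $\mathcal{D}^{(1)}$ as the law of $\subspacecolumn^{\dagger}\mlvec{U}\subspacecolumn$, this is exactly $\EXP_{\mlvec{U}^{(1)}\sim\mathcal{D}^{(1)}}\big[ \bigotimes_{r=1}^{R} \mlvec{U}^{(1)}\mlvec{A}_{r}^{(1)}(\mlvec{U}^{(1)})^{\dagger} \big]$, which is the claim. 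The only genuine content is the factor-by-factor collapse supplied by Lemma~\ref{lm:subgroup_fact}; the main thing to get right is the bookkeeping of the projection insertions and verifying that $\subspacecolumn^{\dagger}\subspacecolumn = \mlvec{I}_{\deff}$ removes the interior Gram factors so that no spurious $\mlvec{\Pi}_{1}$-dependence survives. The tensor factorization and the measure pushforward are then routine.
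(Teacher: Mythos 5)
Your proof is correct and follows exactly the route the paper intends: the paper states this lemma as a ``direct result'' of Lemma~\ref{lm:subgroup_fact} without writing out the details, and your argument supplies precisely those details --- the factor-by-factor tensor decomposition, the insertion of $\mlvec{\Pi}_{1}$ via $\mlvec{\Pi}_{1}\subspacecolumn = \subspacecolumn$, the collapse $\mlvec{\Pi}_{1}\mlvec{B}\mlvec{\Pi}_{1} = \subspacecolumn(\subspacecolumn^{\dagger}\mlvec{B}\subspacecolumn)\subspacecolumn^{\dagger}$ with Gram cancellation, and the pushforward to $\mathcal{D}^{(1)}$. Note also that Lemma~\ref{lm:subgroup_fact} only uses block-diagonality of $\mlvec{U}$, so its application to arbitrary Hermitians $\mlvec{A}_{r}$, as you do, is legitimate.
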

Lemma~\ref{lm:subgroup_integration} allows using the integration formula in
\cite{collins2006integration} when $\mathcal{D}^{(1)}$ is the Haar measure over a
special unitary, special orthogonal or symplectic group. We are now ready to
present the proof of Corollary~\ref{cor:effective-convergence}.

\paragraph{Proof of Corollary~\ref{cor:effective-convergence}}
  Without loss of generality, we assume $V_{1}$ to be the relevant subspace with
  projection $\mlvec{\Pi}_{1} = \subspacecolumn\subspacecolumn^{\dagger}$. For
  concise notations, define
  $\mlvec{U}^{(1)} = \subspacecolumn^{\dagger}\mlvec{U}\subspacecolumn$,
  $\mlvec{A}^{(1)} = \subspacecolumn^{\dagger}\mlvec{A}\subspacecolumn$ and
  $\ket{\Psi^{(1)}} = \subspacecolumn^{\dagger}\ket{\Psi}$ for any unitary
  $\mlvec{U}$, Hermitian $\mlvec{A}$ and vector $\ket{\Psi}$.

  Note that the potential function we track in the proof of
  Theorem~\ref{thm:vqe-convergence-full}
  $|\<\Psi^{\star}|\Psi(t)\>|^{2}$ is equal to $|\<\Psi^{(1), \star}|\Psi^{(1)}(t)\>|^{2}$
  if both $\ket{\Psi}^{\star}$ and $\ket{\Psi(t)} \in V_{1}$. Therefore for the
  purpose of the proof it suffices to track the dynamics of
  $\ket{\Psi^{(1)}(t)}$. Below we (1) first establish that $\ket{\Psi(t)} \in V_{1}$
  through out the training and (2) then show that the dynamics of
  $\ket{\Psi^{(1)}(t)}$ takes the same form as stated in
  Lemma~\ref{lm:vqe-dynamics} by replacing $\mlvec{M}$ and $\mlvec{H}$ with
  $\mlvec{M}^{(1)} = \subspacecolumn^{\dagger}\mlvec{M}\subspacecolumn$ and
  $\mlvec{H}^{(1)}=\subspacecolumn^{\dagger}\mlvec{H}\subspacecolumn$.

  By Lemma~\ref{lm:vqe-dynamics}, the dynamics of $\ket{\Psi}$ takes the form
  \begin{align}
    \frac{d}{dt}\ket{\Psi} \propto -\frac{1}{p}\sum_{l=1}^{p}\tr([\mlvec{M}, \newketbra{\Psi(t)}{\Psi(t)}]\mlvec{U}_{l:p}\mlvec{H}\mlvec{U}_{l:p}^{\dagger})\mlvec{U}_{l:p}\mlvec{H}\mlvec{U}_{l:p}^{\dagger}\ket{\Psi(t)}.\label{eq:subgroup_dynamics}
  \end{align}

  We first show that $\ket{\Psi(t)}$ remains in $V_{1}$ for all $t$ (i.e. $\ket{\Psi(t)} = \mlvec{\Pi}_{1}\ket{\Psi(t)}$).
  It suffices to show the time derivate $\frac{d\ket{\Psi}}{dt}$ stays in $V_{1}$
  for $\ket{\Psi} \in V_{1}$ by noticing that for all $l\in[p]$,
   \begin{align}
    & \mlvec{\Pi}_{1}\mlvec{U}_{l:p}\mlvec{H}\mlvec{U}_{l:p}^{\dagger}\ket{\Psi(t)}\\
   =& \mlvec{U}_{l:p}\mlvec{\Pi}_{1}\mlvec{H}\mlvec{U}_{l:p}^{\dagger}\ket{\Psi(t)}\\
   =& \mlvec{U}_{l:p}\mlvec{H} \mlvec{\Pi}_{1}\mlvec{U}_{l:p}^{\dagger}\ket{\Psi(t)}\\
   =& \mlvec{U}_{l:p}\mlvec{H} \mlvec{U}_{l:p}^{\dagger} \mlvec{\Pi}_{1}\ket{\Psi(t)}\\
   =& \mlvec{U}_{l:p}\mlvec{H} \mlvec{U}_{l:p}^{\dagger} \ket{\Psi(t)}.
  \end{align}
 The first and the third equality is because
 $\mlvec{U}_{l:p}\in G_{\ansatzname}$  for all $l\in[p]$ and therefore block-diagoanl under $\{\mlvec{\Pi}_{j}\}_{j=1}^{m}$;
 The second equality is because and $\mlvec{H}$ is block-diagoanl under
 $\{\mlvec{\Pi}_{j}\}_{j=1}^{m}$;
 The last equality follows from $\ket{\Psi} \in V_{j}$.

 We now calculate the dynamics of $\ket{\Psi^{(1)}(t)}$. For the trace operation in each term,
 \begin{align}
   & \tr([\mlvec{M}, \newketbra{\Psi(t)}{\Psi(t)}] \mlvec{U}_{l:p}\mlvec{H}\mlvec{U}_{l:p}^{\dagger})\\
   =& \tr([\newketbra{\Psi(t)}{\Psi(t)}, \mlvec{U}_{l:p}\mlvec{H}\mlvec{U}_{l:p}^{\dagger}] \mlvec{M})\\
   =& \tr([\mlvec{\Pi}_{1}\newketbra{\Psi(t)}{\Psi(t)}\mlvec{\Pi}_{1}, \mlvec{U}_{l:p}\mlvec{H}\mlvec{U}_{l:p}^{\dagger}] \mlvec{M})\\
   =& \tr([\mlvec{\Pi}_{1}\newketbra{\Psi(t)}{\Psi(t)}\mlvec{\Pi}_{1}, \mlvec{\Pi}_{1}\mlvec{U}_{l:p}\mlvec{\Pi}_{1}\mlvec{H}\mlvec{\Pi}_{1}\mlvec{U}_{l:p}^{\dagger}\mlvec{\Pi}_{1}] \mlvec{M})\\
   =& \tr([\mlvec{\Pi}_{1}\newketbra{\Psi(t)}{\Psi(t)}\mlvec{\Pi}_{1}, \mlvec{\Pi}_{1}\mlvec{U}_{l:p}\mlvec{\Pi}_{1}\mlvec{H}\mlvec{\Pi}_{1}\mlvec{U}_{l:p}^{\dagger}\mlvec{\Pi}_{1}] \mlvec{\Pi}_{1} \mlvec{M}\mlvec{\Pi}_{1})\\
   =& \tr(
      [
      \subspacecolumn^{\dagger}\newketbra{\Psi(t)}{\Psi(t)}\subspacecolumn,
      \subspacecolumn^{\dagger}\mlvec{M}\subspacecolumn
      ]
      \subspacecolumn^{\dagger}\mlvec{U}_{l:p}\subspacecolumn
      \subspacecolumn^{\dagger}\mlvec{H}\subspacecolumn
      \subspacecolumn^{\dagger}\mlvec{U}_{l:p}^{\dagger}\subspacecolumn
      ).
 \end{align}
 The first, fourth and the fifth equation follow from basic properties of
 trace operators; the second equality uses the fact that $\ket{\Psi(t)}$ stays
 in $V_{j}$; the third equality uses the fact that $\mlvec{U}_{l:p}$ and
 $\mlvec{H}$ are block-diagonal.
 Therefore we can rewrite Equation~(\ref{eq:subgroup_dynamics}) as
 \begin{align}
   \frac{d}{dt}\ket{\Psi^{(1)}(t)}
   \propto
   -\frac{1}{p}\sum_{l=1}^{p}
   \tr(
   [\mlvec{M}^{(1)}, \newketbra{\Psi^{(1)}(t)}{\Psi^{(1)}(t)}]
   \mlvec{U}^{(1)}_{l:p}\mlvec{H}^{(1)}(\mlvec{U}_{l:p}^{(1)})^{\dagger}
   )
   \mlvec{U}^{(1)}_{l:p} \mlvec{H}^{(1)} (\mlvec{U}_{l:p}^{(1)})^{\dagger}\ket{\Psi^{(1)}(t)}.
 \end{align}
 The dynamics of $\ket{\Psi^{(1)}(t)}$ depends on
 $\subspacecolumn^{\dagger}\mlvec{M}\subspacecolumn$,
 $\subspacecolumn^{\dagger}\mlvec{H}\subspacecolumn$ and
 $\subspacecolumn^{\dagger}\mlvec{U}\subspacecolumn$.  Corollary~\ref{cor:effective-convergence} follows
 trivially by using the integration formula specified in Lemma~\ref{lm:subgroup_integration}.


\section{More on the empirical studies}
\label{sec:appendix_exp}
\paragraph{Implementation of partially-trainable ansatz} We implement the
partially-trainable ansatz (Definition~\ref{def:partially-trainable-ansatz}) by
 approximating the Haar measure over $G_{\ansatzname}$ by calculating
\begin{align}
  \mlvec{U}(\mlvec\phi) = \prod_{l'=1}^{L_\mathsf{sample}}\prod_{k=1}^{K}\exp(-i\phi_{l',k} \mlvec{H}_{k})
\end{align}
for $L_{\mathsf{sample}} = 20$ and randomly initialized
$\{\phi_{l',k}\}_{k\in[K], l'\in[L_\mathsf{sample}]}$.

\paragraph{Deviation of $\mlvec{Y}$ and $\mlvec\theta$ as functions of time
  $t$} In Figure~\ref{fig:deviation_conv_Y} and Figure~\ref{fig:deviation_conv_theta}, we plot the deviation of $\mlvec{Y}$
and $\mlvec\theta$ as functions of time steps $t$ for both the partially- and
fully-trainable settings. The mean values are plotted in solid lines and the
shaded areas represent the standard deviation over random initializations.
The maximum time steps is set to be $10,000$.
As observed in Figure~\ref{fig:deviation_conv_Y} and
\ref{fig:deviation_conv_theta}, the deviation of$\mlvec{Y}$ and
$\mlvec\theta$ saturates quickly after a few time steps.

\begin{figure}[!htbp]
  \centering
  \subfigure[Partially-trainable HVA]{
    \includegraphics[width=0.43\linewidth]{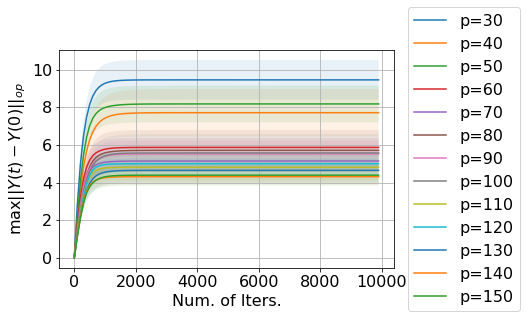}
  }
  \subfigure[Fully-trainable HVA]{
    \includegraphics[width=0.43\linewidth]{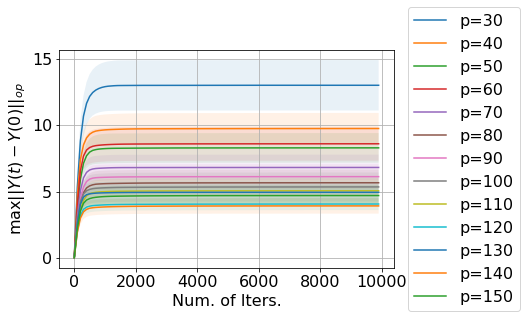}
  }\\
  \subfigure[Partially-trainable HEA]{
    \includegraphics[width=0.43\linewidth]{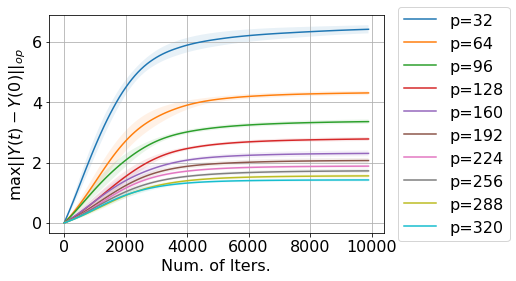}
  }
  \subfigure[Fully-trainable HEA]{
    \includegraphics[width=0.43\linewidth]{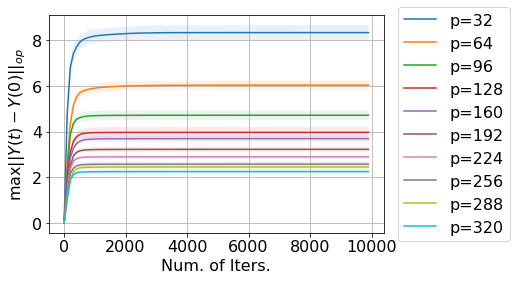}
  }
  \caption{Deviation of $\mlvec{Y}$ during training for HVA and HEA}
  \label{fig:deviation_conv_Y}
\end{figure}

\begin{figure}[!htbp]
  \centering
  \subfigure[Partially-trainable HVA]{
    \includegraphics[width=0.43\linewidth]{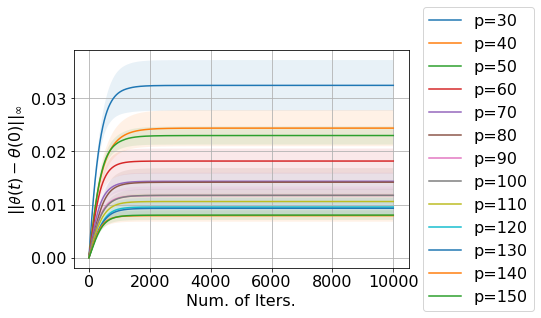}
  }
  \subfigure[Fully-trainable HVA]{
    \includegraphics[width=0.43\linewidth]{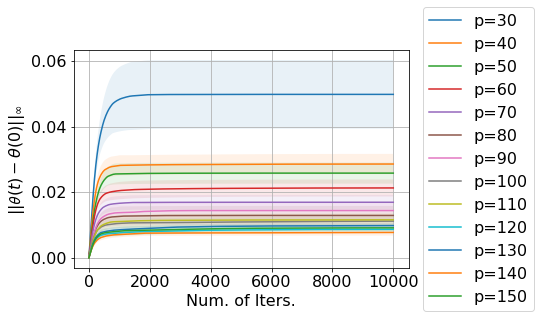}
  }\\
  \subfigure[Partially-trainable HEA]{
    \includegraphics[width=0.43\linewidth]{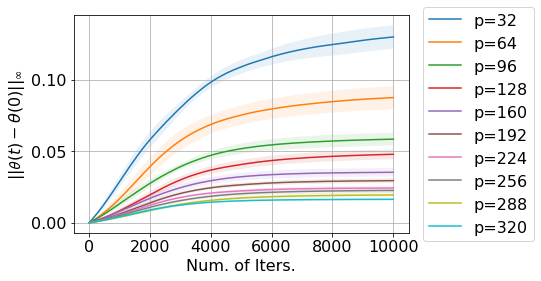}
  }
  \subfigure[Fully-trainable HEA]{
    \includegraphics[width=0.43\linewidth]{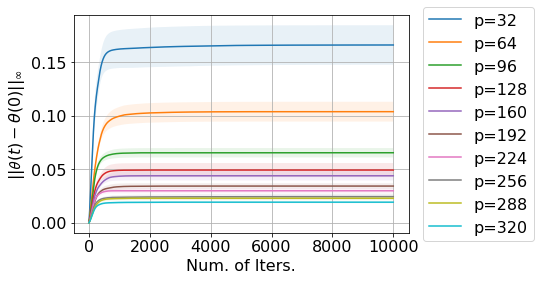}
  }
  \caption{Deviation of $\mlvec{\theta}$ during training for HVA and HEA}
  \label{fig:deviation_conv_theta}
\end{figure}

\paragraph{Definition of the synthetic problems} For the synthetic problem with system dimension $d$, effective
dimension $\deff$ and the effective spectral ratio
$\kappaeff$, we embed a $\deff\times \deff$ problem Hamiltonian $\mlvec{M}^{(1)}=\subspacecolumn^{\dagger}\mlvec{M}\subspacecolumn$ with
eigenvalues $(0, \frac{1}{\kappaeff}, 1, \cdots, 1)$, generators
$\mlvec{H}^{(1)} = \subspacecolumn^{\dagger}\mlvec{H}\subspacecolumn$ and
unitaries $\{\mlvec{U}_{l}^{(1)}=\subspacecolumn^{\dagger}\mlvec{U}_{l}\subspacecolumn \}_{l=1}^{p}$ into a $d$-dimensional space using arbitrary
$d\times d$ unitary
$\mlvec{U}_{\mathsf{embed}} = \begin{bmatrix}\subspacecolumn & \subspacecolumn^{\perp}\end{bmatrix}$
with $\subspacecolumn^{\perp}$ being arbitrary complementary columns of $\subspacecolumn$:
\begin{align}
\mlvec{M} &= \mlvec{U}_{\mathsf{embed}} \begin{bmatrix}\mlvec{M}^{(1)} & 0 \\ 0 & \mlvec{I}_{d-\deff\times d-\deff}\end{bmatrix} \mlvec{U}_{\mathsf{embed}}^{\dagger}\\
\mlvec{H} &= \mlvec{U}_{\mathsf{embed}} \begin{bmatrix}\mlvec{H}^{(1)} & 0 \\ 0 & 0 \end{bmatrix} \mlvec{U}_{\mathsf{embed}}^{\dagger}\\
\mlvec{U}_{l} &= \mlvec{U}_{\mathsf{embed}} \begin{bmatrix}\mlvec{U}^{(1)}_{l} & 0 \\ 0 & \mlvec{I}_{d-\deff\times d-\deff}\end{bmatrix} \mlvec{U}_{\mathsf{embed}}^{\dagger},\quad\forall l \in [p]
\end{align}
And the ansatz takes the form
\begin{align}
  \mlvec{U}(\mlvec\theta) = \big(\prod_{l=1}^{p}\mlvec{U}_{l}\exp(-i\theta_{l}\mlvec{H}) \big) \mlvec{U}_{0}
\end{align}
where $\deff\times \deff$ unitaries $\{\mlvec{U}^{(1)}_{l}\}$ are sampled i.i.d from the
Haar measure over $SU(\deff)$.
In Figure~\ref{fig:toy-thresh-full}, we plot the success rate versus the number
of parameters for various $\deff$ and $\kappaeff$ that are used to generate Figure~\ref{fig:toy-thresh}.
\begin{figure}
  \centering
  \subfigure[Varying effective dimension $\deff$]{
    \includegraphics[width=0.8\linewidth]{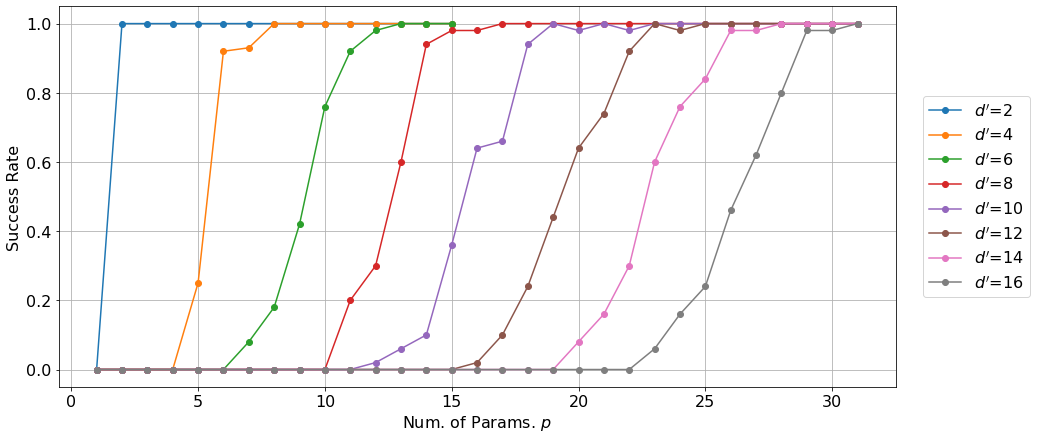}
  }\\
  \subfigure[Varying effective ratio $\kappaeff$]{
    \includegraphics[width=0.8\linewidth]{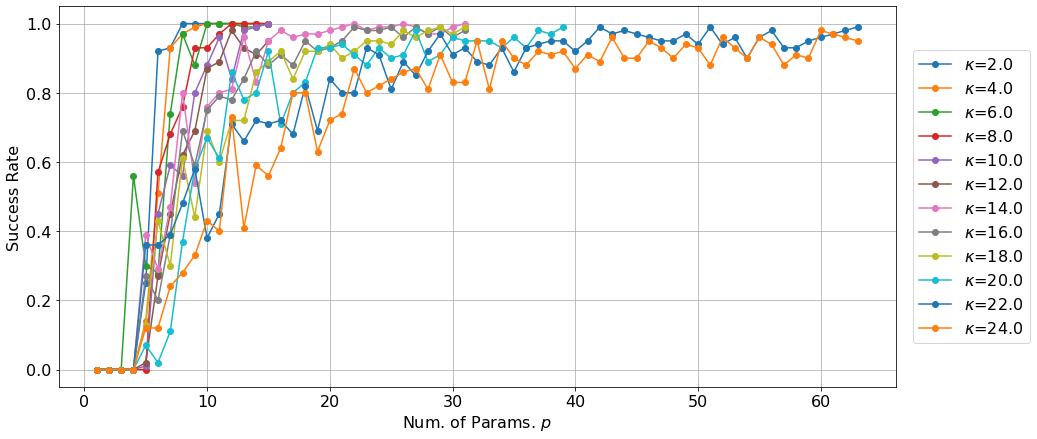}
  }
  \caption{
    The success rate for achieving a $0.01$-approximation for the ground state
    as a function of number of parameters. Each curve corresponds to a synthetic
    instance with dimension $16$ and with varying $(\deff, \kappaeff)$.
    Success rates are estimated over 100 random initializations.
    Top: Fixing $d=16, \kappaeff=4.0$ for $\deff=2, 4, 6, \cdots, 16$. The threshold increases as the system dimension increases.
    Bottom: Fixing $d=16, \deff=4$ for $\kappaeff=2.0, 4.0, 6.0, \cdots, 24.0$.
    The threshold
    is positively correlated to the spectral ratio of the system.
  }
  \label{fig:toy-thresh-full}
\end{figure}

\paragraph{Estimating the invariant subspace for TFI and XXZ models} Similar to
the Kitaev model in Section~\ref{subsec:estimate-eff}, we numerically confirm
that the TFI and XXZ models involved are all compatible. The convergences of the
empirical estimatino of projection $\hat{\mlvec{\Pi}}$ are summarized in
Figure~\ref{fig:app_invar_sub_tfi2}, Figure~\ref{fig:app_invar_sub_tfi3},
Figure~\ref{fig:app_invar_sub_xxz4} and Figure~\ref{fig:app_invar_sub_xxz6}. For
each of the plots, the x-axes corresponds to the indexes of the eigenvalues
sorted in the ascending orders. The value of $R$ in
Equation~\ref{eq:estimate_pi} ranges from $0$ to $100$ and is color-coded,
increasing from blue to red.

\begin{figure}[!htbp]
  \centering
  \subfigure[N=4]{
    \includegraphics[width=0.4\linewidth]{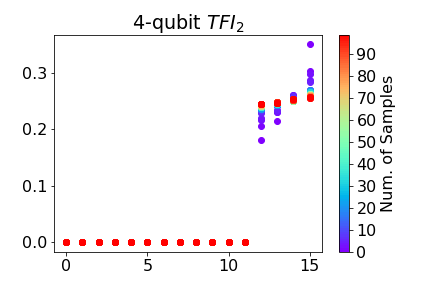}
  }
  \subfigure[N=6]{
    \includegraphics[width=0.4\linewidth]{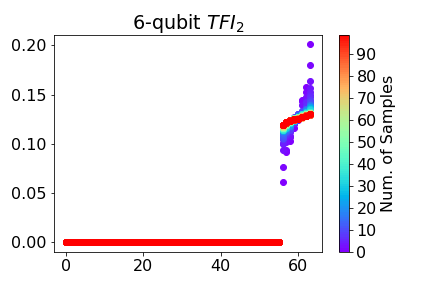}
  }\\
  \subfigure[N=8]{
    \includegraphics[width=0.4\linewidth]{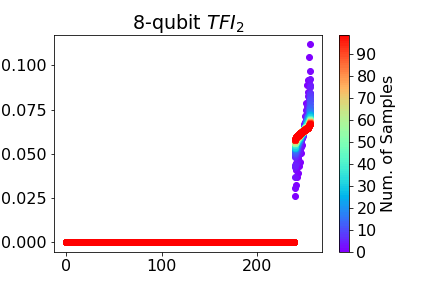}
  }
  \subfigure[N=10]{
    \includegraphics[width=0.4\linewidth]{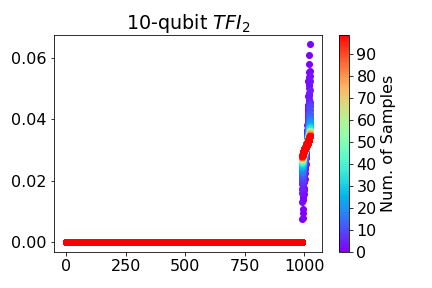}
  }
  \caption{Specturm of $\hat{\Pi}$ for $TFI_{\alt{2}}$ model with $4, 6, 8, 10$ qubits}
  \label{fig:app_invar_sub_tfi2}
\end{figure}

\begin{figure}[!htbp]
  \centering
  \subfigure[N=4]{
    \includegraphics[width=0.4\linewidth]{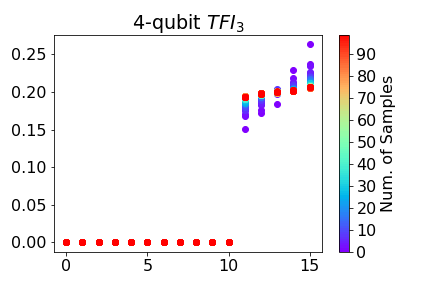}
  }
  \subfigure[N=6]{
    \includegraphics[width=0.4\linewidth]{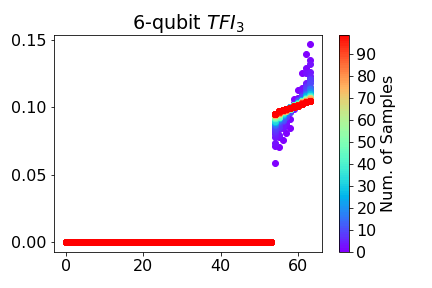}
  }\\
  \subfigure[N=8]{
    \includegraphics[width=0.4\linewidth]{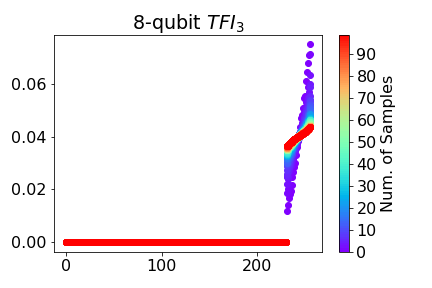}
  }
  \subfigure[N=10]{
    \includegraphics[width=0.4\linewidth]{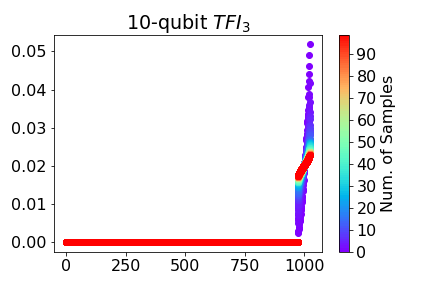}
  }
  \caption{Specturm of $\hat{\Pi}$ for $TFI_{\alt{3}}$ model with $4, 6, 8, 10$ qubits}
  \label{fig:app_invar_sub_tfi3}
\end{figure}

\begin{figure}[!htbp]
  \centering
  \subfigure[N=4]{
    \includegraphics[width=0.4\linewidth]{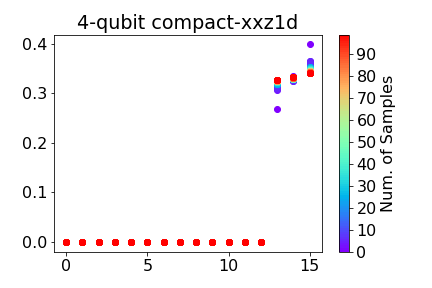}
  }
  \subfigure[N=6]{
    \includegraphics[width=0.4\linewidth]{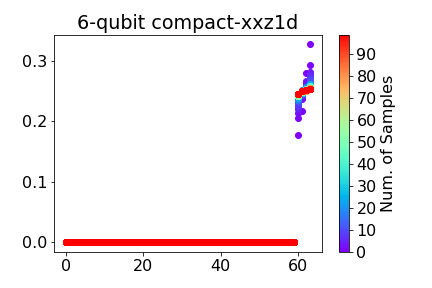}
  }\\
  \subfigure[N=8]{
    \includegraphics[width=0.4\linewidth]{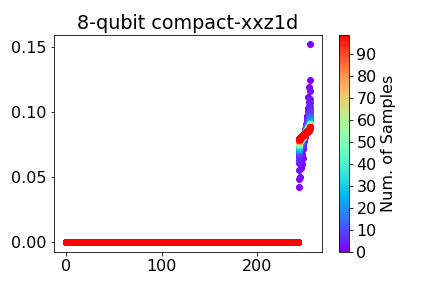}
  }
  \subfigure[N=10]{
    \includegraphics[width=0.4\linewidth]{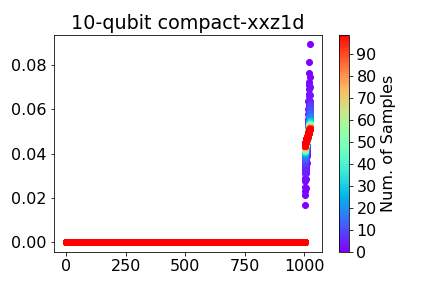}
  }
  \caption{Specturm of $\hat{\Pi}$ for $XXZ_{\alt{4}}$ model with $4, 6, 8, 10$ qubits}
  \label{fig:app_invar_sub_xxz4}
\end{figure}

\begin{figure}[!htbp]
  \centering
  \subfigure[N=4]{
    \includegraphics[width=0.4\linewidth]{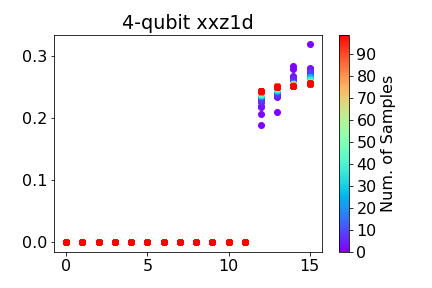}
  }
  \subfigure[N=6]{
    \includegraphics[width=0.4\linewidth]{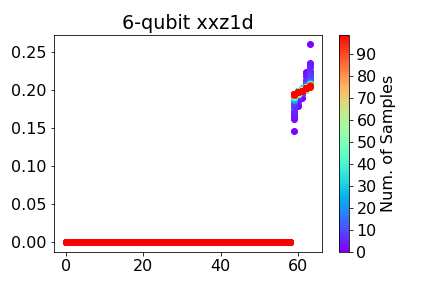}
  }\\
  \subfigure[N=8]{
    \includegraphics[width=0.4\linewidth]{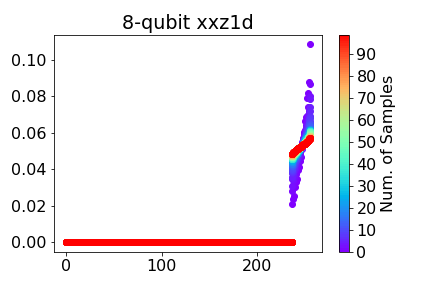}
  }
  \subfigure[N=10]{
    \includegraphics[width=0.4\linewidth]{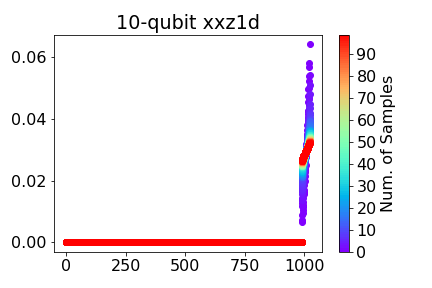}
  }
  \caption{Specturm of $\hat{\Pi}$ for $XXZ_{\alt{6}}$ model with $4, 6, 8, 10$ qubits}
  \label{fig:app_invar_sub_xxz6}
\end{figure}


\end{document}